\documentclass[paper=a4, fontsize=11pt]{scrartcl}
\usepackage[T1]{fontenc}
\usepackage{fourier}
\usepackage[english]{babel}															
\usepackage{amssymb}
\usepackage[protrusion=true,expansion=true]{microtype}	
\usepackage{amsmath,amsfonts,amsthm} 
\usepackage[pdftex]{graphicx}	
\usepackage{url}
\usepackage[title,titletoc,page]{appendix} 
\usepackage{listings} 
\usepackage{color}
\usepackage{comment}
\usepackage{caption}
\usepackage{subcaption}
\usepackage[ruled,vlined]{algorithm2e}
\usepackage{sectsty}
\allsectionsfont{\centering \normalfont\scshape}

\usepackage{fancyhdr}
\pagestyle{fancyplain}
\fancyhead{}											
\fancyfoot[L]{}											
\fancyfoot[C]{}											
\fancyfoot[R]{\thepage}									
\setlength{\headheight}{13.6pt}

\numberwithin{equation}{section}		
\numberwithin{figure}{section}			
\numberwithin{table}{section}				

\newtheorem{defi}{Definition}[section]
\newtheorem{lem}{Lemma}[section]
\newtheorem{thm}{Theorem}[section]

\newtheorem{remark}{Remark}[section]

\newtheorem{prop}{Proposition}[section]


\newcommand{\babs}[1]{\left|{#1}\right|}
\newcommand{\bnorm}[1]{\left|\left|{#1}\right|\right|}
\newcommand{\btwonorm}[1]{\left|\left|{#1}\right|\right|_2}

\newcommand{\vect}[1]{\boldsymbol{\mathbf{#1}}}

\newcommand{\R}{\mathbb{R}}
\newcommand{\N}{\mathbb{N}}

\title{A mathematical theory of resolution limits for super-resolution of positive sources
	\thanks{\footnotesize This work was supported in part by the Swiss National Science Foundation grant number
		200021--200307.}}
\author{Ping Liu\thanks{\footnotesize Department of Mathematics, ETH Z\"urich, R\"amistrasse 101, CH-8092 Z\"urich, Switzerland (ping.liu@sam.math.ethz.ch, yanchen.he@sam.math.ethz.ch, habib.ammari@math.ethz.ch).} \and Yanchen He\footnotemark[2] \and Habib Ammari\footnotemark[2]}

\begin{document}
\maketitle

\begin{abstract}
A priori information on the positivity of source intensities is ubiquitous in imaging fields, and is also important for a multitude of super-resolution and deconvolution algorithms.  But the fundamental resolution limit of positive sources is still unknown, and research in this field is very limited indeed. In this work, we analyze the super-resolving capacity for number and location recoveries in the super-resolution of positive sources and aim to answer the resolution limit problem in a rigorous manner. Specifically, we introduce the computational resolution limit for respectively the number detection and location recovery in the one-dimensional super-resolution problem and quantitatively characterize their dependency on the cutoff frequency, signal-to-noise ratio, and the sparsity of the sources. As a direct consequence, we show that targeting at the sparest positive solution in the super-resolution already provides the optimal resolution order. These results are generalized to multi-dimensional spaces. Our estimates indicate that there exist phase transitions in the corresponding reconstructions, which are confirmed by numerical experiments. On the other hand, despite the fact that positivity plays important roles in improving the resolution of certain super-resolution algorithms, our theory has made several different but significant discoveries: i) The a priori information of positivity cannot further improve the order of the resolution limit; ii) The positivity of the source sometimes deteriorates the resolution limit instead of enhancing it. In particular, under certain the signal-to-noise ratio, two point sources with different phases actually have better resolution limit than those with the same one. 
\end{abstract}

\vspace{0.5cm}
\noindent{\textbf{Mathematics Subject Classification:} 65R32, 42A10, 15A09, 94A08, 94A12} 
		
\vspace{0.2cm}
\noindent{\textbf{Keywords:} resolution limit, super-resolution, positive sources, line spectral estimation, Vandermonde matrix, phase transition} 
\vspace{0.5cm}	

\section{Introduction}
In recent years, the development of super-resolution optical microscopy led to a revolutionary improvement of resolution through the use of different technical approaches. This impressive success has generated significant interest in studying the super-resolution algorithms and the fundamental super-resolving capability. In this paper, we aim to study the super-resolving capacity of the number and locations recovery in the super-resolution of positive sources. To be more specific, we consider the following mathematical model. Let $\mu=\sum_{j=1}^{n}a_{j}\delta_{y_j}, a_j>0$ be a positive discrete measure, where $y_j \in \mathbb R,j=1,\cdots,n$, represent the location of the point sources and $a_j>0, j=1,\cdots,n,$ their amplitudes. Noting that $y_j$'s are the supports of the Dirac masses in $\mu$.  In this paper we will use support recovery instead of location reconstruction. We denote by
\[
m_{\min}=\min_{j=1,\cdots,n}a_j,
\quad 	d_{\min}=\min_{p\neq j}|y_p-y_j|.
\] 
The measurement is the noisy Fourier data of $\mu$ in a bounded interval, that is, 
\begin{equation}\label{equ:modelsetting1}
\mathbf Y(\omega) = \mathcal F [\mu] (\omega) + \mathbf W(\omega)= \sum_{j=1}^{n}a_j e^{i y_j \omega} + \mathbf W(\omega),\quad  \omega \in [-\Omega, \Omega],
\end{equation}
with $\mathbf W(\omega)$ being the noise and $\Omega$ the cutoff frequency of the imaging system. We assume that \[
|\mathbf W(\omega)|< \sigma,\quad  \omega \in [-\Omega, \Omega],
\]
with $\sigma$ being the noise level. The above measurement model is chosen for convenience. All the results in this paper also hold for the case when taking measurement at a sufficient number of evenly-spaced points as what was considered in \cite{liu2021theorylse}. Thus our results can be applied to practical situations and real-world problems.  

The super-resolution problem we are interested in is to recover the positive discrete measure $\mu$ from the above noisy measurement $\mathbf Y$.  We note that the super-resolution problem is closely related to the line spectral estimation problem \cite{liu2021theorylse} which is at the core of diverse fields such as wireless communications and array processing. It should also be pointed out that the measurement discussed in equation (\ref{equ:modelsetting1}) pertains to imaging the convolution of positive point sources with a band-limited or general point spread function $f$. Therefore, with slight modifications, our findings can also be employed to analyze the stability of deconvolving positive sources, which is a crucial problem in various fields.

\subsection{Literature review}
\textbf{Fundamental limits.} In \cite{shahram2004imaging, shahram2005resolvability, helstrom1964detection}, the authors analyzed the resolution limit in the detection of two closely-spaced point sources based on the statistical inference theory, but their theory has not been generalized to the case when there are more than two sources in the signal. The mathematical theory for analyzing the fundamental limit in super-resolving multiple point sources was pioneered by Donoho \cite{donoho1992superresolution}  in 1992. In that work, he considered a grid setting where a discrete measure is supported on a lattice with spacing $\Delta$ and regularized by a so-called  "Rayleigh index". The problem is to reconstruct the amplitudes of the grid points from their noisy Fourier data in $[-\Omega, \Omega]$ with $\Omega$ being the band limit. His main contribution is estimating the corresponding minimax error in the recovery, which emphasizes the importance of the sparsity of sources for the super-resolution. It was improved in recent years for the case when only $n$ point sources are presented. In \cite{demanet2015recoverability}, the authors considered resolving $n$-sparse point sources supported on a grid and showed that the minimax error of amplitude recovery in the presence of noise with magnitude $\sigma$ scales exactly as $SRF^{2n-1}\sigma$, where $SRF:=\frac{1}{\Delta \Omega}$ is the super-resolution factor. The case of multi-clustered point sources was considered in \cite{li2021stable, batenkov2020conditioning} and similar minimax error estimates were derived. Moreover, in \cite{akinshin2015accuracy, batenkov2019super} the authors considered the minimax error for recovering the amplitudes and locations of off-the-grid point sources. They showed that  for $\sigma \lessapprox (SRF)^{-2p+1}$, where $p$ is the number of point sources in a cluster, the minimax error for the amplitude and the location recoveries scale respectively as $(SRF)^{2p-1}\sigma$, $(SRF)^{2p-2} {\sigma}/{\Omega}$, while for the single non-clustered source away from other sources, the corresponding minimax error for the amplitude and the location recoveries scale respectively as $\sigma$ and ${\sigma}/{\Omega}$. We also refer the readers to \cite{moitra2015super, chen2020algorithmic} for understanding the resolution limit from the perspective of sample complexity and to  \cite{tang2015resolution, da2020stable} for the resolving limit of some algorithms. 

On the other hand, in order to characterize the exact resolution in resolving multiple point sources like the classical Rayleigh limit, in the earlier works \cite{liu2021mathematicaloned, liu2021mathematicalhighd, liu2021theorylse, liu2022nearly, liu2021mathematical} we defined the concept of "computational resolution limit" as the minimum required distance between point sources so that their number and locations can be stably resolved under certain noise level. By developing a non-linear approximation theory in a so-called Vandermonde space, we derived sharp bounds for computational resolution limits in one- and multi-dimensional super-resolution problems. In particular, we showed that the computational resolution limits for number and location recoveries should be respectively $\frac{C_{\mathrm{num}}(n,k)}{\Omega}\left(\frac{\sigma}{m_{\min}}\right)^{\frac{1}{2n-2}}$ and  $\frac{C_{\mathrm{supp}}(n,k)}{\Omega}\left(\frac{\sigma}{m_{\min}}\right)^{\frac{1}{2n-1}}$, where $C_{\mathrm{num}}(n,k), C_{\mathrm{supp}}(n,k)$ are constants depending only on source number $n$ and space dimensionality $k$. In this paper, we will generalize these results to the super-resolution problem of positive sources. 

\medskip
\noindent\textbf{Reconstruction algorithms.}
Due to the importance of super-resolution in applications, a number of sophisticated super-resolution algorithms have been developed over the years. Among those algorithms, a class of algorithms called subspace methods have exhibited favourable performance and have been used frequently in engineering applications. Specific examples include MUltiple SIgnal Classification (MUSIC) \cite{schmidt1986multiple}, Estimation of Signal Parameters via Rotational Invariance Technique (ESPRIT) \cite{roy1989esprit}, and Matrix Pencil Method \cite{hua1990matrix}. Note that these algorithms date back to the work of Prony \cite{Prony-1795}. Despite the appealing performance of the subspace methods in practical applications, their stability properties are not yet well-understood. The asymptotic results on the stability of MUSIC algorithm in the presence of Gaussian noise were derived just slightly after its emerging \cite{stoica1989music, clergeot1989performance, stoica1991statistical}. But only until recent years, some steps towards understanding the stability of MUSIC, ESPRIT and Matrix-Pencil Method in the non-asymptotic regime were taken in \cite{liao2016music}, \cite{li2020super} and \cite{moitra2015super}, respectively. Nevertheless, the error tolerance derived in these papers are not as strong as our estimates here for the resolution limit in location reconstruction. On the other hand, it was shown numerically in \cite{liao2016music, batenkov2019super, li2020super} that these subspace methods actually achieve the optimal resolution order. Thus the theoretical demonstrations for the performance limits of subspace methods in the non-asymptotic regime are still important open problems.

In recent years, inspired by the idea of sparse modeling and compressed sensing, many sparsity promoting algorithms have been proposed for the super-resolution problem. For example, in the groundbreaking work of Cand\`es and Fernandez-Granda \cite{candes2014towards}, it was proved that off-the-grid sources can be exactly recovered from their low-frequency measurements by a TV minimization under a minimum separation condition. It invokes active researches in the off-the-grid algorithms, among them we would like to mention the BLASSO \cite{azais2015spike, duval2015exact, poon2019} and the atomic norm minimization method \cite{tang2013compressed,tang2014near}. Both methods were proved to be able to stably recover the source under a minimum separation condition or a non-degeneracy condition. 


\medskip
\noindent\textbf{Super-resolution of positive sources.}
To the best of our knowledge, the theoretical possibility for the super-resolution of positive sources was first considered in \cite{donoho1992maximum}. Specifically, the authors defined
$$
\omega(\sigma ; \mathbf{x})=\sup \left\{\left\|\mathbf{x}^{\prime}-\mathbf{x}\right\|_1:\left\|K \mathbf{x}^{\prime}-K \mathbf{x}\right\|_2 \leqslant \sigma \text { and } \mathbf{x}^{\prime} \geqslant 0\right\}, 
$$
where $\vect x, \vect x^{\prime}$ are vectors of length $M$ (sources are on a grid) and $K$ consists of the first $m$ rows of the $M\times M$ discrete Fourier transform, and said that $\mathbf{x}$ admits super-resolution if
$$
\omega(\sigma ; \mathbf{x}) \rightarrow 0 \quad \text { as } \sigma \rightarrow 0.
$$
They demonstrated for $\vect x\geq 0$ that: 
(a) If $\mathbf{x}$ has $\frac{1}{2}(m-1)$ or fewer non-zero elements then $\vect x$ admits super-resolution; (b) If $\frac{1}{2}(m+1)$ divides $M$, there exists $\mathbf{x}$ with $\frac{1}{2}(m+1)$ non-zero elements yet does not admit super-resolution; (c) If $\mathbf{x}$ has more than $m$ non-zero elements then $\vect x$ does not admit super-resolution. Their definition and results focused on the possibility of overcoming Rayleigh limit in the presence of sufficient small noise and hence demonstrated the possibility of super-resolution. See also \cite{fuchs2005sparsity} for a shorter exposition of the same idea. 

In recent years, some researchers analyzed the stability of specific super-resolution algorithms in a non-asymptotic regime \cite{morgenshtern2016super, morgenshtern2020super, denoyelle2017support}. To be more specific, it was shown in \cite{morgenshtern2016super} that a simple convex optimization program can already superresolve the positive sources (on a grid) to nearly optimal. The authors of 
\cite{morgenshtern2016super} demonstrated that under certain conditions the deviation between the algorithm's output $\hat{\mathbf{x}}$ and the ground truth $\mathbf{x}$ obeys the following relation 
\[
\|\hat{\mathbf{x}}-\mathbf{x}\|_1  \approx C \cdot \mathrm{SRF}^{2 r} \cdot\sigma,
\]
where $r$ is the Rayleigh index. The theory was later generalized to the off-the-grid setting in \cite{morgenshtern2020super} where the authors analyzed the stability of the reconstruction of high frequency information. In a different line of research, the authors studied in \cite{denoyelle2017support} the amplitude and support recoveries of positive discrete measures for a so-called BLASSO convex program. They demonstrated that when $\sigma/ \lambda, \sigma/ d_{\min}^{2n-1}$ and $\lambda / d_{\min}^{2n-1}$ are sufficiently small (with $\lambda$ being the regularization parameter, $\sigma$ the noise level, $n$ the source number, $d_{\min}$ the minimum separation distance between two sources), there exists a unique solution to the BLASSO program consisting of exactly $n$ point sources. The amplitudes and locations of the solution both converge toward those of the ground truth when the noise and the regularization parameter decay to zero faster than $d_{\min}^{2n-1}$. Note that this result is consistent with our estimates in the current paper, showing that the BLASSO achieves the optimal resolution order, which is quite impressive. We also refer the readers to many other algorithms \cite{bendory2017robust, eftekhari2021sparse, kurmanbek2022multivariate, garcia2021approximate, da2020compressed} leveraging the a priori knowledge of positivity and especially the well-known maximum entropy method \cite{gull1978image, donoho1992maximum}.

\subsection{Main contribution}
The main contribution of this paper is quantitative characterizations of the resolution limits to number detection and location recovery in the super-resolution of positive sources. Accurate detection of the source number (model order) is important in the super-resolution problem and many parametric estimation methods require the model order as a priori information. But there are few theoretical results which address the issue when the number of underlying sources is greater than two. In \cite{liu2021mathematicaloned, liu2021theorylse}, the first results for capacity of the number detection in the super-resolution of general sources are derived. Here we generalize the estimates to the case of positive sources, which is also the first result for understanding the capacity of super-resolving positive sources. Specifically, we introduce the computational resolution limit $\mathcal{D}_{num}^+$ for the detection of $n$ point sources (see Definition \ref{defi:computresolutionlimit}), and derive the following sharp bounds:  
\begin{equation}\label{equ:numbcrlbounds}
\frac{2e^{-1}}{\Omega}\Big(\frac{\sigma}{m_{\min}}\Big)^{\frac{1}{2n-2}}<\mathcal{D}_{num}^+\leq  \frac{2e\pi}{\Omega }\Big(\frac{\sigma}{m_{\min}}\Big)^{\frac{1}{2n-2}},
\end{equation}
where $\frac{\sigma}{m_{\min}}$ is viewed as the inverse of the signal-to-noise ration (SNR). It follows that exact detection of the source number is possible when the minimum separation distance of point sources $d_{\min}$ is greater than $\frac{2e\pi}{\Omega }\Big(\frac{\sigma}{m_{\min}}\Big)^{\frac{1}{2n-2}}$,  and impossible  without additional a priori information when $d_{\min}$ is less than $\frac{2e^{-1}}{\Omega}\Big(\frac{\sigma}{m_{\min}}\Big)^{\frac{1}{2n-2}}$ in the worst-case scenario.  

Following the same line of argument for the number detection problem, we also consider the location recovery in the super-resolution of positive sources. We introduce the computational resolution limit $\mathcal D_{supp}^{+}$ for the support recovery (see Definition \ref{defi:computresolutionlimit2}) and derive the following bounds:
\begin{equation}\label{equ:suppcrlbounds}
\frac{2e^{-1}}{\Omega} \Big(\frac{\sigma}{m_{\min}}\Big)^{\frac{1}{2n-1}}<\mathcal{D}_{supp}^{+}\leq \frac{2.36e\pi }{\Omega }\Big(\frac{\sigma}{m_{\min}}\Big)^{\frac{1}{2n-1}}.
\end{equation}
As a consequence, the resolution limit $\mathcal D_{supp}^+$ is of the order $O(\frac{1}{\Omega}\Big(\frac{\sigma}{m_{\min}}\Big)^{\frac{1}{2n-1}})$. It follows that stable recovery (in certain sense) of the source locations is possible when the minimum separation distance of point sources $d_{\min}$ is greater than $\frac{2.36e\pi}{\Omega }\Big(\frac{\sigma}{m_{\min}}\Big)^{\frac{1}{2n-1}}$,  and impossible without additional a priori information when $d_{\min}$ is less than $\frac{2e^{-1}}{\Omega}\Big(\frac{\sigma}{m_{\min}}\Big)^{\frac{1}{2n-1}}$ in the worst-case scenario. To further emphasize that the separation distance $O\left(\frac{1}{\Omega}\Big(\frac{\sigma}{m_{\min}}\Big)^{\frac{1}{2n-1}}\right)$ is necessary for a stable location reconstruction, we construct an example showing that if the sources are separated below the $\frac{c}{\Omega}\Big(\frac{\sigma}{m_{\min}}\Big)^{\frac{1}{2n-1}}$ for certain constant $c$, the recovered locations can be very unstable. 

As a direct consequence of our estimates, we analyze the stability for a sparsity-promoting algorithm ($l_0$ minimization) in super-resolving positive sources and show that it already achieves the optimal order of the resolution. These estimates for the resolution limits are also generalized to multi-dimensional spaces.   

The quantitative characterizations of the resolution limits $\mathcal{D}_{num}^+$ and $\mathcal{D}_{supp}^+$ imply phase transition phenomena in the corresponding reconstructions, which are confirmed here by numerical experiments.

In addition, our results reveal that a priori knowledge of the positivity of the source does not improve the resolution limit order compared to resolving complex sources \cite{liu2021theorylse,liu2022mathematical}. The question of whether positivity can indeed enhance the resolution limit arises. The answer is no. Specifically, we demonstrate that under certain noise level, the computational resolution limit for distinguishing two sources (number detection) with a phase difference $\theta$ (in amplitude) is 
	\begin{equation*}
	\mathcal D_{k, num} =  \frac{4\arcsin\left(\left(\frac{\sigma}{m_{\min}}\right)^{\frac{1}{2}}\right)-\theta}{\Omega}.
\end{equation*}
This shows that the positivity of two sources actually deteriorates the resolution limit rather than enhancing it. As another discovery, we see that achieving super-resolution in distinguishing images generated from one or two sources is quite possible, especially when the source amplitudes differ in phases. 

On the other hand, our techniques offer a way to analyze the capability of resolving positive sources, which could inspire future work.

\subsection{Organization of the paper}
The paper is organized in the following way. Section 2 presents the estimates for the resolution limit in the one-dimensional super-resolution of positive sources. Section 3 extends the estimates to multi-dimensional spaces.  In Section 4, we discuss if the positivity can indeed enhance the resolution limit.  In Sections 5 and 6, we verify the phase transition in respectively the number detection and location recovery problems. The purpose of Section 7 is to make a few final remarks. Section 8 and Section 9 prove respectively the results in Section 2 and Section 4. In Appendix \ref{appendixA}, we prove several auxiliary lemmas and useful inequalities.

\section{Resolution limits for super-resolution in one-dimensional space}\label{section:onedresolutionlimit}
We present in this section our main results on the resolution limit for the super-resolution of one-dimensional positive sources. All the results shall be proved in Section \ref{section:proofofmainresult}.
We consider the case when the point sources are tightly spaced and form a cluster. To be more specific, we define the interval 
$$
I(n,\Omega):=\Big(-\frac{(n-1)\pi}{2\Omega}, \quad  \frac{(n-1)\pi}{2\Omega}\Big),
$$ 
which is of length of several Rayleigh limits and assume that $y_j\in I(n,\Omega), 1\leq j\leq n$. The reconstruction process is usually targeting at some specific solutions in a so-called admissible set, which comprises of discrete measures whose Fourier data are sufficiently close to $\vect Y$. In our problem, we introduce the following concept of positive $\sigma$-admissible discrete measures. We denote in this section $\bnorm{f}_{\infty} = \max_{\omega\in[-\Omega, \Omega]}|f(\omega)|$. 

\begin{defi}{\label{determinecriterion1}}
	Given measurement $\mathbf Y$, we say that $\hat \mu=\sum_{j=1}^{k} \hat a_j \delta_{\hat y_j}, \hat a_j>0$ is a positive $\sigma$-admissible discrete measure of $\ \mathbf Y$ if
	\[
	\bnorm{\mathcal F[\hat \mu]-\mathbf Y}_{\infty}< \sigma.
	\]
\end{defi}

The set of positive $\sigma$-admissible measures of $\mathbf Y$ characterizes all possible solutions to our super-resolution problem with the given measurement $\mathbf Y$. Following similar definitions in \cite{liu2021theorylse, liu2021mathematicaloned, liu2021mathematicalhighd, liu2022mathematical}, we define the following computational resolution limit for the number detection in the super-resolution of positive sources. The reason for the definition is the fact that detecting the correct source number in $\mu$ is impossible without additional a prior information when there exists one positive $\sigma$-admissible measure with less than $n$ supports.  

\begin{defi}\label{defi:computresolutionlimit}
The computational resolution limit to the number detection problem in the super-resolution of one-dimensional positive source is defined as the smallest nonnegative number $\mathcal D_{num}^{+}$ such that for all positive $n$-sparse measure $\sum_{j=1}^{n}a_{j}\delta_{y_j}, a_j>0, y_j \in I(n, \Omega)$ and the associated measurement $\vect Y$ in (\ref{equ:modelsetting1}), if 
	\[
	\min_{p\neq j} |y_j-y_p| \geq \mathcal D_{num}^{+},
	\]
then there does not exist any positive $\sigma$-admissible measure of $\ \mathbf Y$ with less than $n$ supports.
\end{defi}

The notion of ``computational resolution limit'' emphasizes the essential impossibility of correct number detection for very close source by any means. Also, this notion depends crucially on the signal-to-noise ratio and the sparsity of the source, which is different from all classical resolution limits \cite{abbe1873beitrage, volkmann1966ernst, rayleigh1879xxxi, schuster1904introduction, sparrow1916spectroscopic} that depend only on the cutoff frequency. We now present sharp bounds for this computational resolution limit $\mathcal D_{num}^{+}$. The following upper bound for it is a direct consequence of  \cite[Theorem 3.1]{liu2022mathematical}. 

\begin{thm}\label{thm:upperboundnumberlimithm0}
	Let $\mathbf Y$ be a measurement generated by a positive measure $\mu =\sum_{j=1}^{n}a_j\delta_{y_j}$, which is supported on $I(n,\Omega)$. Let $n\geq 2$ and assume that  the following separation condition is satisfied 
	\begin{equation}\label{upperboundnumberlimithm0equ0}
	\min_{p\neq j}\Big|y_p-y_j\Big|\geq \frac{2 e \pi}{\Omega }\Big(\frac{\sigma}{m_{\min}}\Big)^{\frac{1}{2n-2}}.
	\end{equation}
	Then there do not exist any positive $\sigma$-admissible measures of \,$\mathbf Y$ with less than $n$ supports.
\end{thm}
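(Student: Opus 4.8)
The plan is to deduce this upper bound directly from the corresponding result for \emph{general} (complex-amplitude) sources established in \cite{liu2021theorylse}, exploiting the elementary observation that every positive $\sigma$-admissible measure is in particular a $\sigma$-admissible measure in the sense of the general theory. Since imposing positivity of the amplitudes only shrinks the admissible set, any non-existence statement proved for the larger class of general $\sigma$-admissible measures transfers verbatim to the positive ones; this is exactly why the theorem is a ``direct consequence'' and requires no new approximation-theoretic machinery.

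First I would recall the statement of \cite[Theorem 1]{liu2021theorylse}. That work considers the same clustered configuration on $I(n,\Omega)$ but places no sign constraint on the amplitudes, and it asserts that under the separation condition $\min_{p\neq j}|y_p-y_j|\geq \frac{4.4\pi e}{\Omega}\big(\frac{\sigma}{m_{\min}}\big)^{\frac{1}{2n-2}}$ no $\sigma$-admissible measure supported on fewer than $n$ points can exist. The point requiring care is that the two measurement models agree: \cite{liu2021theorylse} samples the Fourier transform at finitely many evenly-spaced frequencies, whereas \eqref{equ:modelsetting1} uses the full continuous band $[-\Omega,\Omega]$. As remarked after \eqref{equ:modelsetting1}, the formulations are equivalent once sufficiently many samples are taken, and the continuous constraint $\bnorm{\mathcal F[\hat\mu]-\mathbf Y}_\infty<\sigma$ of Definition \ref{determinecriterion1} is at least as strong as the discrete $\ell^\infty$ constraint (the continuous supremum dominates the maximum over any finite sample set), so it implies the discrete admissibility with the same $\sigma$.

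Next I would carry out the reduction by contradiction. Suppose there were a positive $\sigma$-admissible measure $\hat\mu=\sum_{j=1}^{k}\hat a_j\delta_{\hat y_j}$ of $\mathbf Y$ with $k<n$ and $\hat a_j>0$. By Definition \ref{determinecriterion1} it satisfies $\bnorm{\mathcal F[\hat\mu]-\mathbf Y}_\infty<\sigma$, which is precisely the general admissibility condition with the sign constraint dropped. Hence $\hat\mu$ would be a general $\sigma$-admissible measure with fewer than $n$ supports, contradicting \cite[Theorem 1]{liu2021theorylse} under the assumed separation \eqref{upperboundnumberlimithm0equ0}. This contradiction establishes the theorem.

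The argument is short precisely because the hard analytic work — the nonlinear approximation theory in Vandermonde space producing the constant $4.4\pi e$ and the exponent $\frac{1}{2n-2}$ — was already completed for the general case. The only genuine obstacle, and the step I would treat most carefully, is confirming the compatibility of the continuous-frequency model used here with the discretely-sampled model of \cite{liu2021theorylse}, together with checking that the normalizations of the noise level $\sigma$ and of $m_{\min}$ coincide. Once these are matched the bound is immediate, and the positivity hypothesis plays no active role beyond placing $\hat\mu$ inside the general admissible set.
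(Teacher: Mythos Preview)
Your proposal is correct and matches the paper's approach exactly: the paper states that Theorem~\ref{thm:upperboundnumberlimithm0} is ``a direct consequence of \cite[Theorem 1]{liu2021theorylse}'' and gives no further argument, relying precisely on the inclusion of positive $\sigma$-admissible measures in the general admissible set. Your attention to the continuous-versus-discrete measurement compatibility is the one point the paper handles only by the remark following \eqref{equ:modelsetting1}, so your treatment is if anything slightly more careful than the paper's.
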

Theorem \ref{thm:upperboundnumberlimithm0} gives an upper bound for the computational resolution limit $\mathcal D_{num}^{+}$. This upper bound is shown to be tight for the super-resolution of general discrete source (not positive) by a lower bound derived in \cite{liu2021theorylse}, but the result is unknown for the case of resolving positive sources. We next present a lower bound of $\mathcal D_{num}^{+}$ which is the main result of this paper. 
\begin{thm}\label{thm:numberlowerboundthm0}
	For given $0<\sigma\leq m_{\min}$ and integer $n\geq 2$, there exist positive measures $\mu=\sum_{j=1}^{n}a_j\delta_{y_j}$ with $n$ supports and $\hat \mu=\sum_{j=1}^{n-1}\hat a_j \delta_{\hat y_j}$ with $(n-1)$ supports such that 
	$\bnorm{\mathcal F[\hat \mu]-\mathcal F [\mu]}_{\infty}< \sigma$. Moreover,
	\[
	\min_{1\leq j\leq n}\babs{a_j}= m_{\min}, \quad \min_{p\neq j}\babs{y_p-y_j}= \frac{2e^{-1}}{\Omega}\Big(\frac{\sigma}{m_{\min}}\Big)^{\frac{1}{2n-2}}.
	\]
\end{thm}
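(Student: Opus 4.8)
The plan is to construct the two measures explicitly so that their Fourier transforms are close on $[-\Omega,\Omega]$ while one has $n$ supports and the other $n-1$. The natural strategy is to reduce the problem to the behavior of a trigonometric polynomial: the difference $\mathcal F[\mu]-\mathcal F[\hat\mu]$ is the Fourier transform of the signed measure $\nu := \mu - \hat\mu = \sum_{j=1}^{n} a_j \delta_{y_j} - \sum_{j=1}^{n-1}\hat a_j\delta_{\hat y_j}$, which is a measure with $2n-1$ atoms. I would therefore look for a single signed measure $\nu$ with exactly $2n-1$ atoms, whose positive part has $n$ atoms (all with amplitude $\geq m_{\min}$) and whose negative part has $n-1$ atoms, such that $\|\mathcal F[\nu]\|_{\infty}<\sigma$ on $[-\Omega,\Omega]$ and whose supports realize the prescribed minimum gap $d := \frac{2e^{-1}}{\Omega}\left(\frac{\sigma}{m_{\min}}\right)^{\frac{1}{2n-2}}$.

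The key device is to place all $2n-1$ points on a uniform grid of spacing $d$ and exploit the fact that a signed combination of equally-spaced diracs with alternating-type coefficients produces a Fourier transform that vanishes to high order at $\omega=0$. Concretely, I would take the nodes $x_\ell = \ell\, d$ and choose the coefficients $c_\ell$ to be (up to scaling) the finite-difference coefficients $\binom{2n-2}{\ell}(-1)^\ell$, so that $\mathcal F[\nu](\omega)=\sum_\ell c_\ell e^{i x_\ell \omega}$ factors as $(\text{const})\cdot(e^{i d\omega}-1)^{2n-2}$ times a single exponential. The magnitude of this quantity on $[-\Omega,\Omega]$ is then controlled by $|e^{id\omega}-1|^{2n-2}\leq (d\Omega)^{2n-2}$, and plugging in the chosen $d$ gives a bound proportional to $(2e^{-1})^{2n-2}(\sigma/m_{\min})$, which one arranges to be $<\sigma$ after normalizing amplitudes so that the smallest positive weight equals $m_{\min}$. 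The alternating binomial pattern automatically splits the nodes into the even-indexed (one sign) and odd-indexed (opposite sign) groups, giving the required $n$ versus $n-1$ split of positive and negative atoms.

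First I would fix the node set and the binomial coefficients, verify the clean factorization of $\mathcal F[\nu]$, and record the uniform bound $|\mathcal F[\nu](\omega)|\le 2^{2n-2}|\sin(d\omega/2)|^{2n-2}\cdot(\text{amplitude factor})$. Next I would normalize: the binomial coefficients $\binom{2n-2}{\ell}$ range from $1$ up to the central coefficient, so I would scale $\nu$ so that the minimal weight among the positive atoms equals exactly $m_{\min}$, track how this scaling multiplies the Fourier bound, and check that the resulting constant $(2e^{-1})^{2n-2}$ is what makes the inequality strict. Finally I would confirm the support constraint: all consecutive nodes are exactly $d$ apart, so $\min_{p\neq j}|y_p-y_j|=d$, and verify the cluster containment $y_j\in I(n,\Omega)$ by centering the grid (the $2n-1$ nodes span a length $(2n-2)d$, which must fit inside $I(n,\Omega)$ for the admissible range of $\sigma$).

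The main obstacle I expect is the constant-tracking in the normalization step: getting the sharp value $2e^{-1}$ rather than a cruder constant requires a careful lower bound on the central binomial coefficient (via a Stirling-type estimate giving the factor $e^{-1}$) balanced against the upper bound $|\sin|^{2n-2}\le(d\Omega/2)^{2n-2}$, and ensuring that after dividing by $m_{\min}$ the strict inequality $\|\mathcal F[\hat\mu]-\mathcal F[\mu]\|_\infty<\sigma$ holds for \emph{every} admissible $\sigma\le m_{\min}$ and every $n\ge 2$. A secondary subtlety is verifying that the positive/negative split yields genuinely \emph{positive} amplitudes $\hat a_j>0$ for $\hat\mu$ after moving the negative atoms of $\nu$ to the other side, which is immediate from the sign pattern of the binomial coefficients but must be stated.
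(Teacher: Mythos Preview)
Your approach is correct and in fact more direct than the paper's, though the underlying construction is identical. The paper places the $2n-1$ nodes at $t_j=(j-n)\tau$ with $\tau=\frac{e^{-1}}{\Omega}(\sigma/m_{\min})^{1/(2n-2)}$, then argues abstractly: it finds a nontrivial kernel vector of the $(2n-2)\times(2n-1)$ Vandermonde system, proves via a Lagrange-interpolation lemma that the signs alternate, bounds the amplitude ratio $|a_{j_{2n-1}}|/|a_{j_1}|$ by the same lemma to get $\sum_j|a_j|\le\frac{(2n-1)!}{((n-1)!)^2}m_{\min}$, and finally controls $\mathcal F[\gamma]$ by Taylor expansion plus Stirling. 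Your explicit choice $c_\ell=(-1)^\ell\binom{2n-2}{\ell}$ is (up to scaling) the \emph{same} kernel vector, since the kernel is one-dimensional; but the closed-form factorization $\sum_\ell c_\ell e^{i\ell\tau\omega}=(1-e^{i\tau\omega})^{2n-2}$ bypasses all four of those steps. The paper's machinery pays off later (Proposition~\ref{thm:supportlowerboundthm1} uses non-equispaced nodes where no such factorization exists), but for this theorem your route is cleaner.

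Two small corrections. First, watch the factor of two: if the $2n-1$ nodes are on a grid of spacing $\tau$, the positive atoms sit at the \emph{even} nodes and are therefore separated by $2\tau$, not $\tau$. So take grid spacing $\tau=\frac{e^{-1}}{\Omega}(\sigma/m_{\min})^{1/(2n-2)}$; then $\min_{p\ne j}|y_p-y_j|=2\tau$ matches the claimed value. Second, your anticipated obstacle is a phantom: you do \emph{not} need Stirling or any estimate on the central binomial coefficient. The smallest positive weight is $\binom{2n-2}{0}=1$ (at the endpoints), so the normalization factor is exactly $m_{\min}$, and the bound becomes
\[
\|\mathcal F[\mu]-\mathcal F[\hat\mu]\|_\infty = m_{\min}\,\bigl|1-e^{i\tau\omega}\bigr|^{2n-2}\le m_{\min}(\tau\Omega)^{2n-2}=e^{-(2n-2)}\sigma<\sigma,
\]
strict for every $n\ge2$ and every $0<\sigma\le m_{\min}$. (Indeed, this shows the constant $2e^{-1}$ in the statement is not sharp from this side; any constant strictly below $2$ would also work.)
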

The above result gives a lower bound for the computational resolution limit $\mathcal D_{num}^{+}$ to the number detection problem. Combined with Theorem \ref{thm:upperboundnumberlimithm0}, it reveals that the computational resolution limit for number detection satisfies 
\[
\frac{2e^{-1}}{\Omega}\Big(\frac{\sigma}{m_{\min}}\Big)^{\frac{1}{2n-2}}<\mathcal D_{num}^{+}\leq \frac{2 e\pi }{\Omega}\Big(\frac{\sigma}{m_{\min}}\Big)^{\frac{1}{2n-2}}. 
\]

We remark that similar to the results of \cite{akinshin2015accuracy, batenkov2019super,liu2021theorylse}, our bounds are the worst-case bounds, and one may achieve better bounds for the case of random noise.  

\medskip	
We now consider the location (support) recovery problem in the super-resolution of positive sources. We first introduce the following concept of $\delta$-neighborhood of a discrete measure. 
\begin{defi}\label{defi:deltaneighborhood}
	Let  $\mu=\sum_{j=1}^{n}a_j \delta_{y_j}$ be a discrete measure and let $0<\delta$ be such that the $n$ intervals $(y_k- \delta, y_k + \delta), 1\leq k \leq n$ are pairwise disjoint. We say that 
	$\hat \mu=\sum_{j=1}^{n}\hat a_{j}\delta_{\hat y_j}$ is within $\delta$-neighborhood of $\mu$ if each $\hat y_j$ is contained in one and only one of the n intervals $(y_k- \delta, y_k + \delta), 1\leq k \leq n$. 
\end{defi}

According to the above definition, a measure in a $\delta$-neighbourhood preserves the inner structure of the real source. For any stable support recovery algorithm, the output should be a measure in some $\delta$-neighborhood, otherwise it is impossible to distinguish which is the reconstructed location of some $y_j$'s. We now introduce the computational resolution limit for stable support recoveries. For ease of exposition, we only consider measures supported in $I(n, \Omega)$, where $n$ is the number of supports. 

\begin{defi}\label{defi:computresolutionlimit2}
The computational resolution limit to the stable support recovery problem in the super-resolution of one-dimensional positive sources is defined as the smallest nonnegative number $\mathcal D_{supp}^{+}$ such that for all positive $n$-sparse measures $\sum_{j=1}^{n}a_{j}\delta_{y_j}, a_j>0, y_j \in I(n, \Omega)$ and the associated measurement $\vect Y$ in (\ref{equ:modelsetting1}), if 
	\[
	\min_{p\neq j} |y_j-y_p| \geq \mathcal D_{supp}^{+},
	\]
	then there exists $\delta>0$ such that any positive $\sigma$-admissible measure for $\mathbf Y$ with $n$ supports in $I(n, \Omega)$ is within a $\delta$-neighbourhood of $\mu$.  
\end{defi}

To state the results on the resolution limit to stable support recovery, we introduce the super-resolution factor which is defined as the ratio between Rayleigh limit $\frac{\pi}{\Omega}$ (for point spread function $sinc(x)^2$) and the minimum separation distance of sources $d_{\min}:= \min_{p\neq j}|y_p-y_j|$:
\[
SRF:= \frac{\pi}{\Omega d_{\min}}.
\]
As a direct consequence of  \cite[Theorem 3.2]{liu2022mathematical}, we have the following theorem giving the upper bound of $\mathcal D_{supp}^+$.
\begin{thm}\label{thm:upperboundsupportlimithm0}
	Let $n\geq 2$, assume that the positive measure $\mu=\sum_{j=1}^{n}a_j \delta_{y_j}$ is supported on $I(n, \Omega)$ and that 
	\begin{equation}\label{supportlimithm0equ0}
	\min_{p\neq j}\left|y_p-y_j\right|\geq \frac{2.36e\pi }{\Omega }\Big(\frac{\sigma}{m_{\min}}\Big)^{\frac{1}{2n-1}}. \end{equation}
	If $\hat \mu=\sum_{j=1}^{n}\hat a_{j}\delta_{\hat y_j}, \hat a_j>0$ supported on $I(n,\Omega)$ is a positive $\sigma$-admissible measure for the measurement generated by $\mu$, then $\hat \mu$ is within the $\frac{d_{\min}}{2}$-neighborhood of $\mu$. Moreover, after reordering the $\hat y_j$'s, we have 
	\begin{equation}\label{supportlimithm0equ2}
	\left|\hat{y}_j-y_j\right|<\frac{C(n)}{\Omega} S R F^{2 n-2} \frac{\sigma}{m_{\min }}, \quad 1 \leq j \leq n,
	\end{equation}
	where $C(n)=2^{2 n-\frac{3}{2}} e^{2 n-1}(\max (\sqrt{n-2}, 1) \pi)^{-\frac{1}{2}}$. 
\end{thm}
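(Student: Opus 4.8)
The plan is to obtain this statement as a specialization of the corresponding result for general (complex-valued) discrete measures, namely \cite[Theorem 2]{liu2021theorylse}, exactly as the phrase ``direct consequence'' signals. The conceptual starting point is that a positive discrete measure $\mu=\sum_{j=1}^{n}a_j\delta_{y_j}$ with $a_j>0$ is in particular an ordinary discrete measure, and that Definition \ref{determinecriterion1} of a positive $\sigma$-admissible measure is nothing but the usual notion of $\sigma$-admissibility (closeness of the Fourier data in $\bnorm{\cdot}_{\infty}$ to $\vect Y$) restricted to the subclass of measures with strictly positive amplitudes. Hence the set of positive $\sigma$-admissible measures of $\vect Y$ is contained in the set of all $\sigma$-admissible measures, so any sufficient-separation guarantee established for the larger class transfers verbatim to the positive class.

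First I would invoke \cite[Theorem 2]{liu2021theorylse} applied to the given source $\mu$, which is supported on $I(n,\Omega)$ and satisfies the separation hypothesis (\ref{supportlimithm0equ0}). That theorem asserts, for the general problem, that once the minimum separation exceeds the threshold $\frac{5.88\pi e}{\Omega}\big(\frac{\sigma}{m_{\min}}\big)^{1/(2n-1)}$, every $\sigma$-admissible measure with $n$ supports in $I(n,\Omega)$ lies in the $\frac{d_{\min}}{2}$-neighborhood of $\mu$ and, after a suitable reordering of its supports, obeys the quantitative bound (\ref{supportlimithm0equ2}) with the same constant $C(n)=n2^{4n-2}e^{2n}\pi^{-1/2}$. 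Since the separation constant, the exponent $\frac{1}{2n-1}$, and the error constant $C(n)$ in the statement above are taken to match those of the general theorem, specializing it to our $\mu$ and then restricting attention to the positive $\sigma$-admissible $\hat\mu$ yields both the neighborhood conclusion and the error estimate immediately.

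The one point requiring care is reconciling the two measurement models: the cited result is formulated for samples of the Fourier data at a sufficient number of evenly spaced frequencies, whereas (\ref{equ:modelsetting1}) uses the full continuum $\omega\in[-\Omega,\Omega]$. As remarked in the introduction, the two formulations are equivalent for these estimates; concretely, the continuous constraint $\bnorm{\mathcal F[\hat\mu]-\vect Y}_{\infty}<\sigma$ implies that the same bound holds at any finite set of sampling frequencies in $[-\Omega,\Omega]$, so every continuous positive $\sigma$-admissible $\hat\mu$ is a fortiori admissible in the finitely-sampled model of \cite{liu2021theorylse}, which is all that is needed to feed into that theorem.

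I do not anticipate a genuine mathematical obstacle: the argument is entirely an embedding/specialization, and the analytic substance of the support-recovery estimate resides in \cite{liu2021theorylse}. The only thing to check carefully is the bookkeeping, namely that the constant $5.88\pi e$, the exponent $\frac{1}{2n-1}$, and $C(n)$ appear in the source theorem exactly as transcribed, so that the specialization is verbatim rather than merely order-correct. By contrast, the novelty of the paper lies on the matching lower-bound side (such as Theorem \ref{thm:numberlowerboundthm0} and its support-recovery analogue), where positivity genuinely constrains the extremal construction and the inclusion argument used here is unavailable.
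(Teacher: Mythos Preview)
Your proposal is correct and mirrors the paper's own treatment: the paper does not give a separate proof but simply states the theorem as a direct consequence of \cite[Theorem~2]{liu2021theorylse}, which is exactly the specialization/inclusion argument you describe. The only detail you flag---matching the constants $5.88\pi e$, the exponent $\frac{1}{2n-1}$, and $C(n)$---is indeed just bookkeeping, and the paper takes this for granted.
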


Theorem \ref{thm:upperboundsupportlimithm0} gives an upper bound to the computational resolution limit $\mathcal{D}_{supp}^{+}$. We next show that the order of the upper bound is optimal.

\begin{thm}\label{thm:supportlowerboundthm0}
	For given $0<\sigma\leq m_{\min}$ and integer $n\geq 2$, let 
	\begin{equation}\label{supportlowerboundequ0}
	\tau=\frac{e^{-1}}{\Omega}\ \Big(\frac{\sigma}{m_{\min}}\Big)^{\frac{1}{2n-1}}.
	\end{equation}
	Then there exist a positive measure $\mu=\sum_{j=1}^{n}a_j \delta_{y_j}$ with $n$ supports at $\{-(n-\frac{1}{2})\tau, -(n-\frac{5}{2})\tau, \cdots, (n-\frac{3}{2})\tau\}$ and a positive measure $\hat \mu=\sum_{j=1}^{n}\hat a_j \delta_{\hat y_j}$ with $n$ supports at  $\{-(n-\frac{3}{2})\tau, -(n-\frac{7}{2})\tau,\cdots, (n-\frac{1}{2})\tau\}$ such that
	\[
	||\mathcal F[\hat \mu]- \mathcal F[\mu]||_{\infty}< \sigma, \quad \min_{1\leq j\leq n}|a_j|= m_{\min}.
	\]
\end{thm}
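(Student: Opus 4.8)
The plan is to exhibit the pair $(\mu,\hat\mu)$ explicitly and to reduce the Fourier-closeness requirement to a statement about a single univariate polynomial evaluated on a short arc of the unit circle. First I would record the geometric fact that the prescribed support sets interlace: together the $n$ nodes of $\mu$ and the $n$ nodes of $\hat\mu$ form an arithmetic progression $x_\ell=\big(\ell-\tfrac{2n+1}{2}\big)\tau$, $\ell=1,\dots,2n$, of $2n$ equally spaced points with common gap $\tau$, symmetric about the origin, with $\mu$ occupying the odd-indexed nodes and $\hat\mu$ the even-indexed ones. Writing $\nu=\mu-\hat\mu=\sum_{\ell=1}^{2n}(-1)^{\ell-1}b_\ell\,\delta_{x_\ell}$ with unknown weights $b_\ell>0$, the target inequality becomes $\|\mathcal F[\nu]\|_\infty<\sigma$ on $[-\Omega,\Omega]$.

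Next I would substitute $z=e^{i\tau\omega}$ and factor out the unimodular term $z^{-\frac{2n-1}{2}}$, so that $|\mathcal F[\nu](\omega)|=|Q(z)|$ where $Q(z)=\sum_{\ell=1}^{2n}(-1)^{\ell-1}b_\ell z^{\ell-1}$ is a polynomial of degree $2n-1$ whose coefficients alternate in sign exactly when all the weights $b_\ell$ are positive. As $\omega$ ranges over $[-\Omega,\Omega]$, the point $z$ traces the arc $\{e^{i\theta}:|\theta|\le\tau\Omega\}$, and since $\sigma\le m_{\min}$ we have $\tau\Omega=e^{-1}(\sigma/m_{\min})^{1/(2n-1)}\le e^{-1}<1$, a short arc hugging $z=1$. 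The whole problem is thus to produce one polynomial that is uniformly tiny on this arc while having sign-alternating coefficients of prescribed minimal magnitude.

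The key idea, and the step I expect to be the crux, is the choice $Q(z)=m_{\min}(1-z)^{2n-1}$. A single binomial expansion gives $b_\ell=m_{\min}\binom{2n-1}{\ell-1}$, so every weight is strictly positive and the smallest of them, attained at the endpoints $\ell=1$ and $\ell=2n$, equals $m_{\min}$; in particular the minimal amplitude of $\mu$ (at $\ell=1$) is $m_{\min}$ as required, while the high-order zero at $z=1$ forces $Q$ to vanish to order $2n-1$ precisely at the center of the relevant arc. Assigning the weights $b_\ell$ of odd index to $\mu$ and those of even index to $\hat\mu$ places the two positive measures exactly at the prescribed nodes.

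It then remains to bound $Q$ on the arc, a routine estimate: using $|1-e^{i\theta}|=2|\sin(\theta/2)|\le|\theta|$ one gets $|Q(z)|\le m_{\min}(\tau\Omega)^{2n-1}=m_{\min}\,e^{-(2n-1)}\,(\sigma/m_{\min})=e^{-(2n-1)}\sigma<\sigma$, uniformly in $\omega\in[-\Omega,\Omega]$. This yields $\|\mathcal F[\hat\mu]-\mathcal F[\mu]\|_\infty<\sigma$ and completes the construction. The only genuine difficulty is guessing the extremal polynomial $(1-z)^{2n-1}$; once it is chosen, the sign pattern, the amplitude normalization, and the decay on the arc all fall out at once, which is exactly why the exponent $\tfrac{1}{2n-1}$ and the constant $e^{-1}$ in $\tau$ are calibrated to make the final bound fall just below $\sigma$.
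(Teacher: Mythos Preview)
Your proof is correct and follows a genuinely different route from the paper's. The paper proceeds abstractly: it takes the Vandermonde matrix $A=(\phi_{2n-2}(t_1),\ldots,\phi_{2n-2}(t_{2n}))$, extracts a nonzero vector $a$ from its null space, determines the sign pattern via Lemma~\ref{lem:invervandermonde}, bounds $\sum_j|a_j|\le\frac{(2n)!}{n!(n-1)!}m_{\min}$ by comparing Lagrange weights, and then controls $\|\mathcal F[\gamma]\|_\infty$ by Taylor expansion and Stirling's formula. Your substitution $z=e^{i\tau\omega}$ and the explicit choice $Q(z)=m_{\min}(1-z)^{2n-1}$ collapse all of this: the binomial coefficients $b_\ell=m_{\min}\binom{2n-1}{\ell-1}$ are the (unique up to scaling) null-space vector, the sign alternation is automatic, the normalization $\min_j a_j=m_{\min}$ holds at $\ell=1$, and the bound $|Q(e^{i\theta})|\le m_{\min}|\theta|^{2n-1}$ replaces the Taylor/Stirling chain with a single line, even yielding the sharper constant $e^{-(2n-1)}\sigma$. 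The paper's machinery buys generality---the same null-space/Lagrange argument works for the irregularly spaced nodes of Proposition~\ref{thm:supportlowerboundthm1}, where no clean polynomial factorization is available---but for the equispaced case your construction is considerably more elementary and transparent.
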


Since the minimum distance between $y_j$'s is $d_{\min} = 2\tau$, thus for the positive $\sigma$-admissible measure $\hat \mu$, it is obviously that the $\hat y_j$'s are not in any $\delta$-neighborhood of $y_j$'s for $\delta\leq \frac{d_{\min}}{2}$ (for $\delta>\frac{d_{\min}}{2}$ the intervals in Definition \ref{defi:deltaneighborhood} are overlapped). According to Definition \ref{defi:computresolutionlimit2}, Theorem \ref{thm:supportlowerboundthm0} implies $\mathcal D_{supp}^{+}> \frac{2e^{-1}}{\Omega}\ \Big(\frac{\sigma}{m_{\min}}\Big)^{\frac{1}{2n-1}}$. Thus we conclude that 
\[
\frac{2e^{-1}}{\Omega}\ \Big(\frac{\sigma}{m_{\min}}\Big)^{\frac{1}{2n-1}} < \mathcal D_{supp}^{+}\leq \frac{2.36 e\pi }{\Omega}\ \Big(\frac{\sigma}{m_{\min}}\Big)^{\frac{1}{2n-1}}.
\]
To further demonstrate that the order $O\left(\frac{1}{\Omega}\left(\frac{\sigma}{m_{\min}}\right)^{\frac{1}{2n-1}}\right)$ is essentially optimal for stable location reconstruction, we present an example with a new distribution of the source locations as  follows.  

\begin{prop}\label{thm:supportlowerboundthm1}
    For given $0<\sigma<m_{\min}$ and integer $n\geq 2$, let 
	\begin{equation}\label{supportlowerboundequ1}
	\tau=\frac{0.2e^{-1}}{\Omega s^{\frac{2n+1}{2n-1}}}\ \Big(\frac{\sigma}{m_{\min}}\Big)^{\frac{1}{2n-1}}.
	\end{equation}
	Then there exist a positive measure $\mu=\sum_{j=1}^{n}a_j \delta_{y_j}$ with $n$ supports at $\{t_j=-\frac{sn-2}{2}\tau+\frac{(j-2)s}{2}\tau, j=2,4, \cdots, 2n\}$ and a positive measure $\hat \mu=\sum_{j=1}^{n}\hat a_j \delta_{\hat y_j}$ with $n$ supports at $\{t_j=t_{4\lceil\frac{j+1}{4}\rceil-2}+(-1)^{\frac{j+1}{2}}\tau, j =1,3,5,\cdots, 2n-1\}$ such that
	\[
	||\mathcal F[\hat \mu]- \mathcal F[\mu]||_{\infty}< \sigma, \quad \min_{1\leq j\leq n}|a_j|= m_{\min}.
	\]
\end{prop}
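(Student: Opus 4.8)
The plan is to build $\mu$ and $\hat\mu$ so that their difference $\nu=\hat\mu-\mu$ is a signed measure on the $2n$ prescribed nodes whose positive part is $\hat\mu$ and whose negative part is $\mu$, and to choose the amplitudes so that $\mathcal F[\nu]$ has the maximal number of vanishing low-order Taylor coefficients. Denoting the $2n$ locations by $t_1,\dots,t_{2n}$, I would first verify that for $s>1$ they are strictly increasing, $t_1<t_2<\cdots<t_{2n}$, and that parity matches membership: the odd-indexed nodes are exactly the supports of $\hat\mu$ and the even-indexed nodes those of $\mu$. The essential geometric observation is that the two families then interlace perfectly along the line as $\hat\mu,\mu,\hat\mu,\mu,\dots$; inside each cluster the order is (left $\hat\mu$ partner, central $\mu$ node, right $\hat\mu$ partner), while every non-clustered $\mu$ node $t_{4k}$ falls strictly between the right partner of one cluster and the left partner of the next.

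Next I would pin down the amplitudes through the Vandermonde (divided-difference) kernel. Imposing the $2n-1$ moment conditions $\sum_j c_j t_j^k=0$ for $0\le k\le 2n-2$ on the $2n$-dimensional coefficient vector leaves a one-dimensional solution space, with $c_j$ proportional to $w_j:=1/\prod_{l\neq j}(t_j-t_l)$. Since the nodes are ordered, $\sign(w_j)=(-1)^{2n-j}$, so $w_j$ is negative on odd indices and positive on even ones. Taking $c_j=-C\,w_j$ with a scale $C>0$ therefore places strictly positive weights on the odd ($\hat\mu$) nodes and strictly negative weights on the even ($\mu$) nodes, so that every amplitude equals $C/\prod_{l\neq j}|t_j-t_l|>0$, which is exactly the positivity demanded. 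I would then fix $C$ by the normalization $\min_{1\le j\le n}a_j=m_{\min}$, giving $C=m_{\min}\max_{j\text{ even}}\prod_{l\neq j}|t_j-t_l|$.

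With the moments $0,\dots,2n-2$ annihilated, $\mathcal F[\nu]$ becomes, up to sign, the divided difference at the nodes of $f_\omega(x)=e^{i\omega x}$: $\mathcal F[\nu](\omega)=-C\sum_j w_j e^{it_j\omega}=-C\,[t_1,\dots,t_{2n}]f_\omega$. The Hermite--Genocchi formula expresses this as the average of $f_\omega^{(2n-1)}$ over the standard $(2n-1)$-simplex of volume $1/(2n-1)!$, and since $|f_\omega^{(2n-1)}|\equiv|\omega|^{2n-1}$ this gives the clean bound $|\mathcal F[\nu](\omega)|\le C\,|\omega|^{2n-1}/(2n-1)!$, strict for $\omega\neq0$ as the phase is non-constant. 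Therefore $\bnorm{\mathcal F[\hat\mu]-\mathcal F[\mu]}_{\infty}\le C\,\Omega^{2n-1}/(2n-1)!$, and inserting the normalization reduces the admissibility requirement to the sufficient scalar inequality
\[
\max_{j\text{ even}}\ \prod_{l\neq j}|t_j-t_l|\ <\ \frac{(2n-1)!\,\sigma}{\Omega^{2n-1}\,m_{\min}}.
\]

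The main obstacle is this final quantitative estimate. Measuring every gap in units of $\tau$, one has $\prod_{l\neq j}|t_j-t_l|=\tau^{2n-1}P_{n,j}(s)$ with $P_{n,j}$ an explicit product of integer gap-ratios; the maximum over even $j$ is attained at a non-clustered $\mu$ node, where all $2n-1$ factors are of size $\asymp s$, so $\max_j P_{n,j}(s)\le A_n\,s^{2n-1}$ for a combinatorial constant $A_n$ of factorial type (crudely, each gap-ratio is at most $2ns$, giving $A_n\le(2n)^{2n-1}$). Substituting $\tau^{2n-1}=(0.2e^{-1})^{2n-1}(\sigma/m_{\min})/(\Omega^{2n-1}s^{2n+1})$ collapses the displayed inequality to $A_n(0.2e^{-1})^{2n-1}<(2n-1)!\,s^{2}$; thus the exponent $\tfrac{2n+1}{2n-1}$ on $s$ is precisely what leaves a spare factor $s^{2}$ on the right, and the factorial growth of $(2n-1)!$ then dominates $A_n(0.2e^{-1})^{2n-1}$ for every $n\ge2$ and every $s\ge1$. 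Making this domination rigorous with the stated numerical constant $0.2e^{-1}$—so that the inequality remains strict—is the delicate point, and I would carry it out via the product and factorial estimates collected in the appendix.
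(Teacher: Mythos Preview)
Your outline is correct, and the route you take is genuinely different from — and in several respects cleaner than — the paper's. Both arguments begin the same way: pick the amplitudes in the one-dimensional kernel of the $(2n-1)\times 2n$ Vandermonde system $\sum_j c_j t_j^k=0$ for $0\le k\le 2n-2$, and read off the alternating sign pattern from the ordering of the nodes. From that point the two diverge.

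The paper bounds $\mathcal F[\gamma]$ by Taylor-expanding $e^{it_j x}$, using that $Q_k(\gamma)=0$ for $k\le 2n-2$, and then controlling the tail via $|Q_k(\gamma)|\le\big(\sum_j|a_j|\big)\big(\max_j|t_j|\big)^k$. This forces an estimate on $\sum_j|a_j|$, which in turn requires both the \emph{maximum} and the \emph{minimum} of $\prod_{l\ne j}|t_j-t_l|$ over all $j$; the latter is the content of the rather laborious Lemmas~A.3 and~A.4 (the $c^{\max}_k$ and $c^{\min}_k$ computations split over the four node classes). Your divided-difference/Hermite--Genocchi argument bypasses this entirely: writing $\mathcal F[\nu](\omega)=-C\,f_\omega[t_1,\dots,t_{2n}]$ immediately gives $|\mathcal F[\nu](\omega)|\le C\,|\omega|^{2n-1}/(2n-1)!$, so the only node-geometry input you need is an \emph{upper} bound on $\max_{j\text{ even}}\prod_{l\ne j}|t_j-t_l|$ — essentially the $c^{\max}$ half of the appendix, and in fact only the even-index part of it. Your crude estimate $A_n\le(2n)^{2n-1}$ already suffices: the resulting inequality $(0.4n/e)^{2n-1}<(2n-1)!\,s^2$ holds for every $n\ge2$ and $s\ge1$ by a direct Stirling comparison, so the ``delicate point'' you flag is in fact routine.

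What the paper's approach buys is that it never leaves the elementary Taylor-series toolkit already used for Theorems~2.2 and~2.4; your approach buys a sharper and shorter estimate at the cost of invoking the Hermite--Genocchi representation. Both are valid, and yours avoids the $c^{\min}$ lemma altogether.
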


Note that the $n$ underlying sources in $\mu$ are spaced by 
\[
s\tau =  \frac{0.4e^{-1}}{\Omega s^{\frac{2}{2n-1}}}\ \Big(\frac{\sigma}{m_{\min}}\Big)^{\frac{1}{2n-1}}.
\]
Proposition \ref{thm:supportlowerboundthm1} reveals that when the $n$ point sources are separated by $\frac{c}{\Omega} \Big(\frac{\sigma}{m_{\min}}\Big)^{\frac{1}{2n-1}}$ for some constant $c$, the recovered source locations from the $\sigma$-admissible measures can be very unstable; see Figure \ref{fig:examplelocation}. 

\begin{figure}[!h]
	\includegraphics[width=0.7\textwidth]{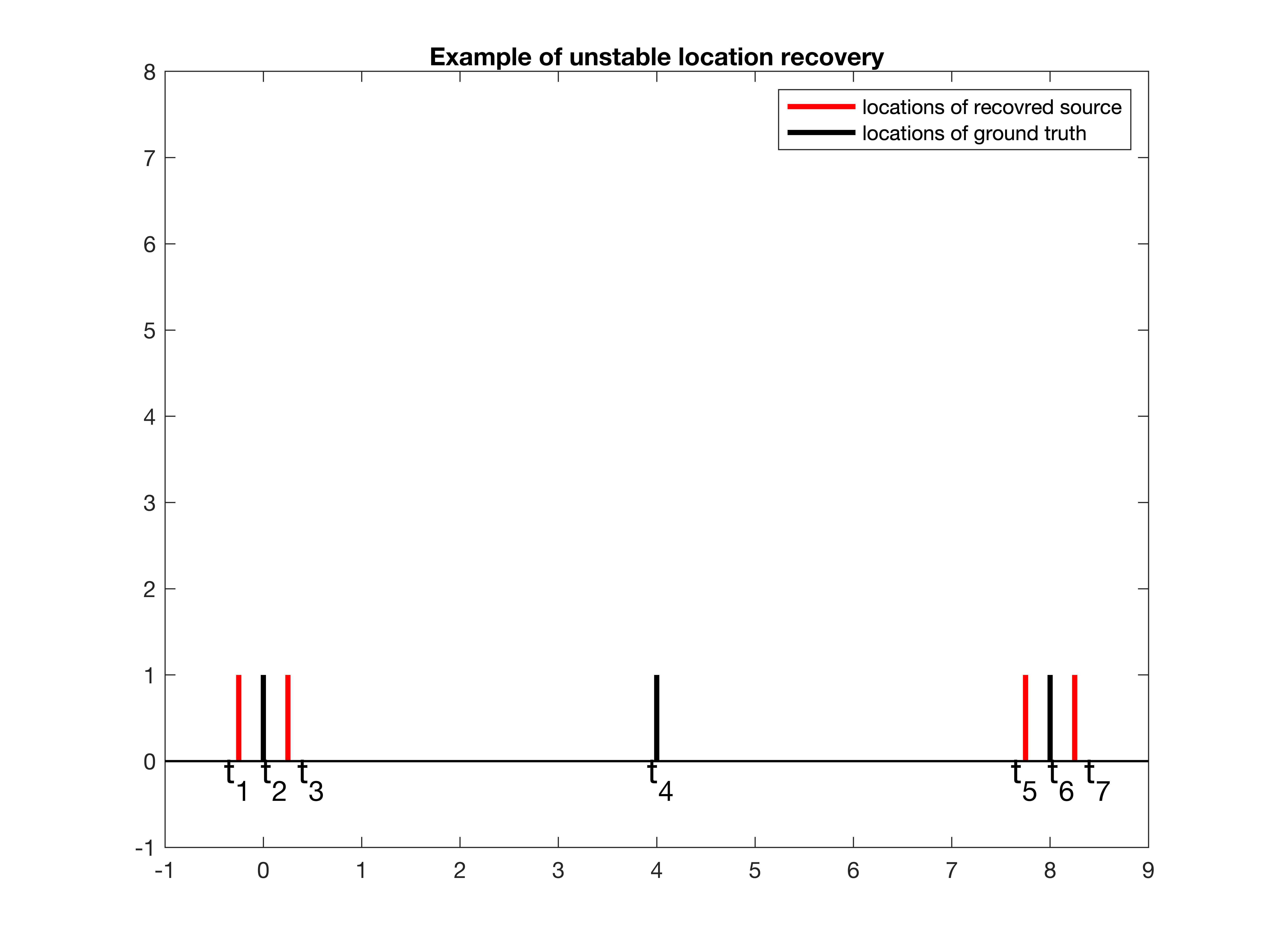}
	\centering
	\caption{An example of unstable location recovery.}
	\label{fig:examplelocation}
\end{figure}

\begin{remark}
Note that all of our results hold for the case when the sources are supported on a grid. Specifically, we consider the grid points $t_j=j\Delta, j=1,\cdots,N,$ where $N$ and $\Delta$ are the number and spacing of grid points, respectively, and assume the sources are supported on the grid. Assume also that the grid spacing $\Delta\leq \frac{e^{-1}}{\Omega}\left(\frac{\sigma}{m_{\min}}\right)^{\frac{1}{2n-1}}$ for fixed $n$ and $m_{\min}$. By Theorem \ref{thm:supportlowerboundthm0}, we can construct two positive measures $\mu = \sum_{q=1}^n a_q \delta_{t_{j_q}}$ and $\hat \mu = \sum_{q=1}^n \hat a_{q} \delta_{\hat t_{j_q}}$ supported on the grid with completely different supports such that the difference of their Fourier data is less than the noise level and the minimum separation of sources is equal or less than $\frac{2e^{-1}}{\Omega}\left(\frac{\sigma}{m_{\min}}\right)^{\frac{1}{2n-1}}$ with $\min_{q=1,\cdots,n}a_q =m_{\min}$.  
\end{remark}

\begin{remark}
Note that our estimates for both the resolution limits in the number detection and support recovery already improve the estimates in \cite{liu2021theorylse} for the case of general sources. 
\end{remark}

\subsection{Stability analysis of sparsity-promoting algorithms}
Nowadays, sparsity-promoting algorithms are popular methods in image processing, signal processing and many other fields. By our results for the resolution limits, we can derive a sharp stability result for the $l_0$ minimization in the super-resolution of positive sources. We consider the following $l_0$-minimization problem:
\begin{equation}\label{equ:l0minimization}
\min_{\text{$\rho$ supported on $\mathcal O$, $\rho$ is a positive discrete measure}} \bnorm{\rho}_{0} \quad \text{subject to} \quad |\mathcal F\rho(\omega) -\vect Y(\omega)|< \sigma, \quad \omega \in[-\Omega,\Omega], 
\end{equation}	
where $||\rho||_{0}$ is the number of Dirac masses representing the discrete measure $\rho$. As a corollary of Theorems \ref{thm:upperboundnumberlimithm0} and \ref{thm:upperboundsupportlimithm0}, we have the following theorem for its stability.

\begin{thm}\label{thm:sparspromstabilitythm0}
	Let $n\geq 2$ and $\sigma \leq m_{\min}$. Let the measurement $\vect Y$ in (\ref{equ:modelsetting1}) be generated by a positive $n$-sparse measure $\mu=\sum_{j=1}^{n}a_j \delta_{y_j}, y_j \in I(n,\Omega)$. Assume that
	\begin{equation}\label{equ:sparspromstabilitythm0equ1}
	d_{\min}:=\min_{p\neq j}\left| y_p-y_j\right|\geq \frac{2.36 e\pi }{\Omega }\Big(\frac{\sigma}{m_{\min}}\Big)^{\frac{1}{2n-1}}. \end{equation}
	Let $\mathcal O$ in the minimization problem (\ref{equ:l0minimization}) be (or be included in) $I(n,\Omega)$ , then the solution to (\ref{equ:l0minimization}) contains exactly $n$ point sources. For any solution $\hat \mu=\sum_{j=1}^{n}\hat a_{j}\delta_{\hat y_j}$, it is in a $\frac{d_{\min}}{2}$-neighborhood of $\mu$. Moreover, after reordering the ${\hat y}_j$'s, we have 
	\begin{equation}
	\left|\hat y_j-y_j\right|\leq \frac{C(n)}{\Omega}SRF^{2n-2}\frac{\sigma}{m_{\min}}, \quad 1\leq j\leq n,
	\end{equation}
	where $C(n)=C(n)=2^{2 n-\frac{3}{2}} e^{2 n-1}(\max (\sqrt{n-2}, 1) \pi)^{-\frac{1}{2}}$. 
\end{thm}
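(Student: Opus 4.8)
The plan is to obtain the result purely as a corollary of Theorems \ref{thm:upperboundnumberlimithm0} and \ref{thm:upperboundsupportlimithm0}, exploiting that the feasibility constraint $|\mathcal F\rho(\omega)-\vect Y(\omega)|<\sigma$ in (\ref{equ:l0minimization}) is precisely the defining condition of a positive $\sigma$-admissible measure (Definition \ref{determinecriterion1}). The argument splits into first pinning down the model order of any minimizer to be exactly $n$, and then feeding the minimizer into the support-stability estimate.

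First I would establish the sparsity. Since $\mathcal O\subseteq I(n,\Omega)$, every feasible $\rho$ is a positive $\sigma$-admissible measure supported on $I(n,\Omega)$, so Theorem \ref{thm:upperboundnumberlimithm0} applies once its separation hypothesis (\ref{upperboundnumberlimithm0equ0}) is checked. The key elementary step is that the support-type separation (\ref{equ:sparspromstabilitythm0equ1}) implies the number-detection separation: using $\sigma\leq m_{\min}$, hence $\sigma/m_{\min}\leq 1$, together with $\frac{1}{2n-1}<\frac{1}{2n-2}$, one has $\big(\frac{\sigma}{m_{\min}}\big)^{\frac{1}{2n-1}}\geq\big(\frac{\sigma}{m_{\min}}\big)^{\frac{1}{2n-2}}$, and since $5.88>4.4$,
\[
d_{\min}\geq \frac{5.88\pi e}{\Omega}\Big(\frac{\sigma}{m_{\min}}\Big)^{\frac{1}{2n-1}}\geq \frac{4.4\pi e}{\Omega}\Big(\frac{\sigma}{m_{\min}}\Big)^{\frac{1}{2n-2}}.
\]
Thus Theorem \ref{thm:upperboundnumberlimithm0} forbids any positive $\sigma$-admissible measure with fewer than $n$ supports, so every feasible $\rho$ obeys $\bnorm{\rho}_0\geq n$. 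Conversely the ground truth $\mu$ is itself feasible, because $|\mathcal F\mu(\omega)-\vect Y(\omega)|=|\mathbf W(\omega)|<\sigma$, and $\bnorm{\mu}_0=n$; hence the optimal value is at most $n$. Combining the two bounds shows the minimum equals $n$ and every solution $\hat\mu$ has exactly $n$ supports.

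With the model order fixed I would close the argument by invoking Theorem \ref{thm:upperboundsupportlimithm0}. Indeed $\hat\mu$ is a positive $\sigma$-admissible measure with exactly $n$ supports in $I(n,\Omega)$, and the hypothesis (\ref{equ:sparspromstabilitythm0equ1}) is identical to (\ref{supportlimithm0equ0}); therefore $\hat\mu$ lies in the $\frac{d_{\min}}{2}$-neighborhood of $\mu$ and, after reordering, satisfies $|\hat y_j-y_j|\leq \frac{C(n)}{\Omega}SRF^{2n-2}\frac{\sigma}{m_{\min}}$ with $C(n)=n2^{4n-2}e^{2n}\pi^{-\frac12}$, exactly as claimed.

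The only genuinely non-routine point — and the step I would treat most carefully — is the exponent monotonicity that converts (\ref{equ:sparspromstabilitythm0equ1}) into the number-detection hypothesis: it is precisely here that the standing assumption $\sigma\leq m_{\min}$ is indispensable, for without $\sigma/m_{\min}\leq 1$ the inequality between the two powers would reverse. A secondary bookkeeping point I would state explicitly is that $\mu$ is feasible only if $\operatorname{supp}(\mu)\subseteq\mathcal O$, which is ensured by taking $\mathcal O=I(n,\Omega)$ (or any $\mathcal O$ containing the true supports); this feasibility of $\mu$ is what supplies the matching upper bound $\bnorm{\hat\mu}_0\leq n$ needed to conclude the sparsity is exactly $n$.
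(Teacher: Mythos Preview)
Your proposal is correct and is precisely the argument the paper has in mind: the theorem is stated in the paper as a direct corollary of Theorems \ref{thm:upperboundnumberlimithm0} and \ref{thm:upperboundsupportlimithm0}, and your reduction---deducing the number-detection separation (\ref{upperboundnumberlimithm0equ0}) from (\ref{equ:sparspromstabilitythm0equ1}) via the monotonicity $\big(\tfrac{\sigma}{m_{\min}}\big)^{1/(2n-1)}\geq\big(\tfrac{\sigma}{m_{\min}}\big)^{1/(2n-2)}$ for $\sigma\leq m_{\min}$, then using feasibility of $\mu$ to match the lower bound---fills in exactly the details the paper omits. Your closing remark about the feasibility of $\mu$ requiring $\operatorname{supp}(\mu)\subseteq\mathcal O$ is a fair observation on the theorem as written; the intended reading is $\mathcal O=I(n,\Omega)$ (or at least $\mathcal O$ containing the true supports), under which your argument goes through verbatim.
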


Theorem \ref{thm:sparspromstabilitythm0} reveals that sparsity promoting over admissible solutions can resolve the source locations to the resolution limit level. It provides an insight that theoretically sparsity-promoting algorithms would have excellent performance on the super-resolution of positive sources, which already have been corroborated by \cite{morgenshtern2016super, morgenshtern2020super, denoyelle2017support}. Especially, under the separation condition (\ref{equ:sparspromstabilitythm0equ1}), any tractable sparsity-promoting algorithms (such as total variation minimization algorithms \cite{candes2014towards}) rendering the sparsest solution
could stably reconstruct  all the source locations.

\section{Resolution limits for super-resolution in multi-dimensional spaces}
In this section, combining the estimates in Section \ref{section:onedresolutionlimit} and \cite{liu2021mathematicalhighd, liu2022nearly}, we present our main results on the resolution limits to the super-resolution of positive sources in multi-dimensional spaces. Let us first introduce the model setting. We consider the source as the $n$-sparse positive measure 
\[
\mu=\sum_{j=1}^{n}a_{j}\delta_{\vect y_j},
\]
where $\delta$ denotes Dirac's $\delta$-distribution in $\mathbb R^k$, $\vect y_j \in \mathbb R^k,1\leq j\leq n$, represent the locations of the point sources and $a_j>0, 1\leq j \leq n$ are their amplitudes. Denote by
\begin{equation}\label{equ:intendisset}
m_{\min}=\min_{j=1,\cdots,n}|a_j|,
\quad 	d_{\min}=\min_{p\neq j}||\vect y_p-\vect y_j||_2.
\end{equation}
The available measurement is the noisy Fourier data of $\mu$ in a bounded region, that is, 
\begin{equation}\label{equ:modelsetting2}
\mathbf Y(\vect{\omega}) = \mathcal F [\mu] (\vect{\omega}) + \mathbf W(\vect{\omega})= \sum_{j=1}^{n}a_j e^{i \vect{y}_j\cdot \vect{\omega}} + \mathbf W(\vect{\omega}), \ \vect \omega \in \mathbb R^k,
\end{equation}
where with slight abuse of notation $\mathcal F \mu$ denotes the Fourier transform of $\mu$ in the $k$-dimensional space, 
$\Omega$ is the cut-off frequency, and $\mathbf W$ is the noise. We assume that 
\[
 ||\mathbf W(\vect{\omega})||_\infty< \sigma,\ ||\vect{\omega}||_2\leq \Omega,
\]
where $\sigma$ is the noise level and $\bnorm{f(\vect \omega)}_{\infty}:=\max_{\vect \omega\in\mathbb R^k, ||\vect \omega||_2\leq \Omega}|f(\vect \omega)|$ in this section. We are interested in the resolution limit for resolving a cluster of tightly-spaced point sources. Thus, we denote by
\[
B_{\delta}^k(\vect x) := \Big\{ \mathbf y \ \Big|\ \mathbf y\in \mathbb R^k,\  ||\vect y- \vect x||_2<\delta \Big\},
\] 
and assume that $\vect y_j \in B_{\frac{(n-1)\pi}{2\Omega}}^k(\vect 0), j=1,\cdots,n$, or equivalently $||\vect y_j||_2<\frac{(n-1)\pi}{2\Omega}$.  

We then define positive $\sigma$-admissible measures and computational resolution limits in the $k$-dimensional space analogously to those in the one-dimensional case. 

\begin{defi}{\label{def:sigmaadmissiblemeasure}}
	Given measurement $\mathbf Y$, we say that the positive measure $\hat \mu=\sum_{j=1}^{m} \hat a_j \delta_{ \mathbf{\hat y}_j}, \ \mathbf{\hat y}_j\in \mathbb R^k$, is a positive $\sigma$-admissible discrete measure of $\ \mathbf{Y}$ if
	\begin{equation*}
	\bnorm{\mathcal F[\hat \mu] (\vect{\omega})-\vect Y(\vect{\omega})}_\infty< \sigma, \ \text{for all}\ ||\vect{\omega}||_2\leq \Omega,\  \vect \omega \in \mathbb R^k.
	\end{equation*}
In particular, without the constraint on the positivity of the amplitudes, $\hat \mu$ is called a $\sigma$-admissible discrete measure of $\ \mathbf{Y}$.
\end{defi}

\begin{defi}\label{def:computresolutionlimitnumber}
The computational resolution limit to the number detection problem in $k$-dimensional space is defined as the smallest nonnegative number $\mathcal D_{k,num}^+$ such that for all positive $n$-sparse measures $\sum_{j=1}^{n}a_{j}\delta_{\mathbf y_j}, \vect y_j \in B_{\frac{(n-1)\pi}{2\Omega}}^{k}(\vect 0)$ and the associated measurement $\vect Y$ in (\ref{equ:modelsetting2}), if 
	\[
	\min_{p\neq j} ||\mathbf y_j-\mathbf y_p||_2 \geq \mathcal D_{k, num}^+,
	\]
then there does not exist any positive $\sigma$-admissible measure with less than $n$ supports for $\mathbf Y$. Note that if we remove the constraint of positivity on the source and the $\sigma$-admissible measure, the computational resolution limit is denoted by $\mathcal D_{k, num}$. 
\end{defi}

As a consequence of  \cite[Theorem 2.3]{liu2021mathematicalhighd}, we have the following result for the upper bound of the $\mathcal D_{k, num}^+$. 

\begin{thm}\label{thm:highdupperboundnumberlimit0}
	Let $n\geq 2$ and the measurement $\mathbf Y$ in (\ref{equ:modelsetting2}) be generated by a positive $n$-sparse measure $\mu =\sum_{j=1}^{n}a_j\delta_{\mathbf y_j}, \vect y_j \in B_{\frac{(n-1)\pi}{2\Omega}}^{k}(\vect 0)$.  There is a constant $C_{num}(k,n)$ which has an explicit form such that if 
		\begin{equation}\label{equ:highdupperboundnumberlimit1}
		\min_{p\neq j, 1\leq p, j\leq n}\btwonorm{\mathbf y_p- \mathbf y_j}\geq \frac{C_{num}(k,n)}{\Omega }\Big(\frac{\sigma}{m_{\min}}\Big)^{\frac{1}{2n-2}}
	\end{equation}
holds, then there do not exist any positive $\sigma$-admissible measures of \,$\mathbf Y$ with less than $n$ supports.
\end{thm}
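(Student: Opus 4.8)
The plan is to obtain the statement as a direct corollary of the corresponding result for general (complex-valued) sources, exploiting the fact that positivity only shrinks the class of competitors. Observe that every positive $\sigma$-admissible measure in the sense of Definition \ref{def:sigmaadmissiblemeasure} is, in particular, a $\sigma$-admissible measure in the general theory of \cite{liu2021mathematicalhighd}, since both notions are governed by the identical constraint on the Fourier discrepancy and differ only in requiring $\hat a_j>0$. Consequently, the non-existence of \emph{any} $\sigma$-admissible measure with fewer than $n$ supports (under the separation hypothesis) immediately forces the non-existence of any \emph{positive} one. Thus it suffices to invoke \cite[Theorem 2.3]{liu2021mathematicalhighd}, which supplies exactly a threshold of the form $\frac{C_{num}(k,n)}{\Omega}\big(\frac{\sigma}{m_{\min}}\big)^{\frac{1}{2n-2}}$ with $C_{num}(k,n)$ given explicitly there, and the conclusion follows verbatim with the same constant.

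To make the argument self-contained and to expose where the constant originates, I would reduce the $k$-dimensional problem to the one-dimensional Theorem \ref{thm:upperboundnumberlimithm0} by projection. Arguing by contradiction, suppose that under \eqref{equ:highdupperboundnumberlimit1} there exists a positive $\sigma$-admissible measure $\hat\mu$ with at most $n-1$ supports. First I would select a unit direction $\vect p\in S^{k-1}$ and replace $\mu,\hat\mu$ by their push-forwards under the map $\vect x\mapsto \langle\vect p,\vect x\rangle$. The key algebraic identity is that the restriction of the $k$-dimensional Fourier transform to the frequency line $\{t\vect p:|t|\le\Omega\}$ equals the one-dimensional Fourier transform of the push-forward measure; hence the projected $\hat\mu$ is automatically a positive $\sigma$-admissible measure for the projected $\mu$ in the one-dimensional model with the \emph{same} cutoff $\Omega$ and noise level $\sigma$. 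Since the original supports lie in $B_{\frac{(n-1)\pi}{2\Omega}}^k(\vect 0)$, the projected supports satisfy $|\langle\vect p,\vect y_j\rangle|\le\btwonorm{\vect y_j}<\frac{(n-1)\pi}{2\Omega}$ and therefore lie in $I(n,\Omega)$, so the hypotheses of Theorem \ref{thm:upperboundnumberlimithm0} are in force.

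The crux, and the main obstacle, is choosing $\vect p$ so that projection does not collapse separation: one must guarantee that the $n$ projected points remain pairwise distinct with a quantitatively controlled gap $\min_{p\neq j}|\langle\vect p,\vect y_p-\vect y_j\rangle|\ge c(k,n)\,d_{\min}$. I would secure such a $\vect p$ by a measure-theoretic argument on the sphere: for each of the $\binom{n}{2}$ normalized difference directions $\vect u_{pj}=(\vect y_p-\vect y_j)/\btwonorm{\vect y_p-\vect y_j}$, the set $\{\vect p:|\langle\vect p,\vect u_{pj}\rangle|<\epsilon\}$ has surface measure bounded by $C_k\,\epsilon$, so a union bound leaves a direction with $|\langle\vect p,\vect u_{pj}\rangle|\ge\epsilon$ simultaneously for all pairs once $\binom{n}{2}C_k\epsilon<1$. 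This yields $c(k,n)\sim \big(C_k\binom{n}{2}\big)^{-1}$ and hence the explicit constant $C_{num}(k,n)=\frac{4.4\pi e}{c(k,n)}$. With separation preserved, no merging occurs, so the projected $m_{\min}$ is unchanged and \eqref{upperboundnumberlimithm0equ0} holds for the projected data; Theorem \ref{thm:upperboundnumberlimithm0} then forbids a positive $\sigma$-admissible measure with fewer than $n$ supports in one dimension, contradicting that the push-forward of $\hat\mu$ (a positive measure, whose projection can only merge atoms) has at most $n-1$ supports. I expect the direction-selection estimate to be the only delicate quantitative step; the Fourier-restriction identity and the support/amplitude bookkeeping are routine.
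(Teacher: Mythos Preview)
Your proposal is correct and matches the paper's approach exactly: the paper itself offers no proof beyond the sentence ``As a consequence of \cite[Theorem 2.3]{liu2021mathematicalhighd},'' and your first paragraph spells out precisely why that citation suffices, namely that positive $\sigma$-admissible measures form a subclass of general $\sigma$-admissible measures. Your additional projection sketch is a reasonable reconstruction of how the cited result is proved, but it goes beyond what the present paper provides.
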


We next show that the above upper bound is optimal in terms of the signal-to-noise ratio.	
\begin{thm}\label{thm:highdnumberlowerbound0}
	For given $0<\sigma\leq m_{\min}$ and integer $n\geq 2$, there exist positive measures $\mu=\sum_{j=1}^{n}a_j\delta_{\vect y_j}$ with $n$ supports, and $\hat \mu=\sum_{j=1}^{n-1}\hat a_j \delta_{\mathbf{\hat y}_j}$ with $(n-1)$ supports such that 
	$||\mathcal F[\hat \mu]-\mathcal F [\mu]||_{\infty}< \sigma$. Moreover,
	\[
	\min_{1\leq j\leq n}|a_j|= m_{\min}, \quad \min_{p\neq j}\bnorm{\vect y_p-\vect y_j}_2= \frac{2e^{-1}}{\Omega}\Big(\frac{\sigma}{m_{\min}}\Big)^{\frac{1}{2n-2}}.
	\]
\end{thm}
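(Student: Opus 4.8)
The plan is to reduce the $k$-dimensional statement to the one-dimensional Theorem \ref{thm:numberlowerboundthm0} by placing all point sources along a single line through the origin. First I would invoke Theorem \ref{thm:numberlowerboundthm0} to obtain one-dimensional positive measures $\mu_0=\sum_{j=1}^{n}a_j\delta_{y_j}$ and $\hat\mu_0=\sum_{j=1}^{n-1}\hat a_j\delta_{\hat y_j}$ satisfying $\bnorm{\mathcal F[\hat\mu_0]-\mathcal F[\mu_0]}_{\infty}<\sigma$ (with the one-dimensional $\infty$-norm over $[-\Omega,\Omega]$), together with $\min_j a_j=m_{\min}$ and $\min_{p\neq j}|y_p-y_j|=\frac{2e^{-1}}{\Omega}(\frac{\sigma}{m_{\min}})^{\frac{1}{2n-2}}$; since $\sigma\leq m_{\min}$ the total span of these supports is bounded and, after translating, they lie in $I(n,\Omega)$. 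I then fix a unit vector $\vect e\in\R^k$ and lift by setting $\vect y_j:=y_j\vect e$ and $\hat{\vect y}_j:=\hat y_j\vect e$ while keeping the same amplitudes, producing the candidate $k$-dimensional measures $\mu,\hat\mu$.

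The crucial step is the Fourier constraint. Because all sources are collinear, $\mathcal F\mu(\vect\omega)=\sum_{j}a_j e^{iy_j\langle\vect e,\vect\omega\rangle}$ depends on $\vect\omega$ only through the scalar $t:=\langle\vect e,\vect\omega\rangle$, and the same holds for $\hat\mu$. By Cauchy--Schwarz, as $\vect\omega$ ranges over the ball $\{\btwonorm{\vect\omega}\leq\Omega\}$ the quantity $t$ ranges over exactly $[-\Omega,\Omega]$, the extremes being attained at $\vect\omega=\pm\Omega\vect e$. Consequently the $k$-dimensional supremum over the ball collapses to the one-dimensional supremum over the diameter:
\[
\bnorm{\mathcal F[\hat\mu]-\mathcal F[\mu]}_{\infty}
=\max_{t\in[-\Omega,\Omega]}\Big|\sum_{j=1}^{n-1}\hat a_j e^{i\hat y_j t}-\sum_{j=1}^{n}a_j e^{iy_j t}\Big|
=\bnorm{\mathcal F[\hat\mu_0]-\mathcal F[\mu_0]}_{\infty}<\sigma.
\]

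It then remains only to check the geometric quantities, which are immediate consequences of the lift: $\min_j|a_j|=m_{\min}$ is inherited verbatim; since $\vect e$ is a unit vector, $\btwonorm{\vect y_p-\vect y_j}=|y_p-y_j|$, so $\min_{p\neq j}\btwonorm{\vect y_p-\vect y_j}=\frac{2e^{-1}}{\Omega}(\frac{\sigma}{m_{\min}})^{\frac{1}{2n-2}}$; and $\btwonorm{\vect y_j}=|y_j|\leq\frac{(n-1)\pi}{2\Omega}$ places the supports in the required ball (a harmless contraction handles the open-ball boundary). I expect the main difficulty to be conceptual rather than computational: one must recognize that collinear placement forces the worst-case frequency to lie in the direction $\vect e$, so that the seemingly richer $k$-dimensional measurement over the whole ball can carry no more discriminating power than the one-dimensional measurement along the diameter. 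Once this projection identity is isolated, the bound follows directly from the one-dimensional construction with no new analysis.
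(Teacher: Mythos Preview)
Your proposal is correct and follows essentially the same approach as the paper: the paper places the sources along the first coordinate axis (i.e., takes $\vect e=(1,0,\dots,0)$), observes that $\mathcal F\gamma(\vect\omega)$ then depends only on $\omega_1\in[-\Omega,\Omega]$, and invokes Theorem~\ref{thm:numberlowerboundthm0}. Your general unit vector $\vect e$ and Cauchy--Schwarz argument is the same reduction, and the translation remark is in fact unnecessary since the one-dimensional construction already has supports at $(-n+j)\tau$ with $(n-1)\tau<(n-1)\pi/(2\Omega)$.
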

\begin{proof}
Consider $\gamma =\sum_{j=1}^{2n-1} a_j \delta_{\vect t_j}$ with $\vect t_1 = (-(n-1)\tau, 0, \cdots, 0), \vect t_2 = (-(n-2)\tau, 0, \cdots, 0), \cdots, \vect t_{2n-1}= ((n-1)\tau, 0, \cdots, 0)$ and $\tau = \frac{e^{-1}}{\Omega}\Big(\frac{\sigma}{m_{\min}}\Big)^{\frac{1}{2n-2}}$. For every $\vect \omega =(\omega_1, \omega_2, \cdots, \omega_k)^{\top}$, $$\mathcal F \gamma(\vect \omega) = \sum_{j=1}^{2n-1}a_je^{i \vect t_j \cdot \vect \omega} = \sum_{j=1}^{2n-1}a_je^{i (-n+j)\tau \omega_1}, |\omega_1|\leq \Omega.$$ This reduces the estimation of $\mathcal F \gamma(\vect \omega)$ to the one-dimensional case. Combined with Theorem \ref{thm:numberlowerboundthm0}, there exist $a_{2j-1}>0, a_{2j}<0,  1\leq j\leq n, \min_{j=1, \cdots, n}|a_{2j-1}|= m_{\min}$, so that $\bnorm{\mathcal F \gamma(\vect \omega)}_{\infty}<\sigma$. As a consequence,
\[
\mu=\sum_{j=1}^{n}a_{2j-1} \delta_{\vect t_{2j-1}},\quad  \hat \mu=\sum_{j=1}^{n-1}-a_{2j}\delta_{\vect t_{2j}}
\]
satisfy all the conditions of the theorem. 
\end{proof}

The above results indicate that
\[
\frac{C_{1,k}(n)}{\Omega}\Big(\frac{\sigma}{m_{\min}}\Big)^{\frac{1}{2n-2}} <  \mathcal D_{k,num}^+ \leq \frac{C_{2,k}(n)}{\Omega }\Big(\frac{\sigma}{m_{\min}}\Big)^{\frac{1}{2n-2}}, 
\]
with $C_{1,k}(n), C_{2,k}(n)$ being certain constants. An interesting open problem is to improve these constants. Two of the authors of this paper have made a progress in this direction \cite{liu2022nearly}. 

To state the estimates for the resolution limits to the location recovery, we introduce the following concepts which are analogue to those in the one-dimensional case. 
\begin{defi}\label{deltaneighborhood}
	Let  $\mu=\sum_{j=1}^{n}a_j \delta_{\vect y_j}$ be a positive $n$-sparse discrete measure in $\mathbb R^k$ and let $\delta>0$ be such that the $n$ balls $B_{\delta}^k(\vect y_j), 1\leq j \leq n,$ are pairwise disjoint. We say that 
	$\hat \mu=\sum_{j=1}^{n}\hat a_{j}\delta_{\mathbf{\hat y}_j}$ is within $\delta$-neighborhood of $\mu$ if each $\mathbf {\hat y}_j$ is contained in one and only one of the $n$ balls $B_{\delta}^k(\vect y_j), 1\leq j \leq n$.
\end{defi}

\begin{defi}\label{def:highdcomputresolutionlimitsupport}
	The computational resolution limit to the stable support recovery problem in $k$-dimensional space is defined as the smallest non-negative number $\mathcal D_{k,supp}^+$ such that for
	any positive $n$-sparse measure $\mu=\sum_{j=1}^{n}a_j \delta_{\mathbf y_j}, \vect y_j \in B_{\frac{(n-1)\pi}{2\Omega}}^{k}(\vect 0)$ and the associated measurement $\vect Y$ in (\ref{equ:modelsetting2}), 
	if
	\[
	\min_{p\neq j, 1\leq p,j \leq n} \btwonorm{\mathbf y_p-\mathbf y_j}\geq \mathcal{D}_{k,supp}^+,
	\]  
	then there exists $\delta>0$ such that any $\sigma$-admissible measure of $\mathbf Y$ with $n$ supports in $B_{\frac{(n-1)\pi}{2\Omega}}^k(\mathbf 0)$ is within a $\delta$-neighbourhood of $\mu$.  
\end{defi}

As a consequence of  \cite[Theorem 2.7]{liu2021mathematicalhighd}, we have the following result on the characterization of $\mathcal{D}_{k,supp}^+$.

\begin{thm}\label{thm:highdupperboundsupportlimit0}
	Let $n\geq 2$. Let the measurement $\vect Y$ in (\ref{equ:modelsetting2}) be generated by a positive $n$-sparse measure $\mu=\sum_{j=1}^{n}a_j \delta_{\vect y_j}, \vect y_j \in B_{\frac{(n-1)\pi}{2\Omega}}^{k}(\vect 0)$ in the $k$-dimensional space. There is a constant $C_{supp}(k,n)$ which has an explicit form such that if 
	\begin{equation}\label{equ:highdsupportlimithm0equ0}
	d_{\min}:=\min_{p\neq j}\Big|\Big|\mathbf y_p-\mathbf y_j\Big|\Big|_2\geq \frac{C_{supp}(k,n)}{\Omega }\Big(\frac{\sigma}{m_{\min}}\Big)^{\frac{1}{2n-1}}
	\end{equation}
 holds, then  for any $\hat \mu=\sum_{j=1}^{n}\hat a_{j}\delta_{\mathbf{\hat y}_j}, \mathbf {\hat y}_j \in B_{\frac{(n-1)\pi}{2\Omega}}^{k}(\vect 0)$ being a positive $\sigma$-admissible measure of $\vect Y$,  $\hat \mu$ is within the $\frac{d_{\min}}{2}$-neighborhood of $\mu$. Moreover, after reordering the $\mathbf{\hat y}_j$'s, we have 
	\begin{equation}\label{equ:highdsupportlimithm0equ1}
	\Big|\Big|\mathbf {\hat y}_j-\mathbf y_j\Big|\Big|_2\leq \frac{C(k, n)}{\Omega}SRF^{2n-2}\frac{\sigma}{m_{\min}}, \quad 1\leq j\leq n,
	\end{equation}
	where $SRF:=\frac{\pi}{\Omega}$ is the super-resolution factor and $C(k,n)$ has an explicit form. 
\end{thm}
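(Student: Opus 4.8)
The plan is to obtain Theorem~\ref{thm:highdupperboundsupportlimit0} as a direct specialization of the corresponding support recovery estimate for \emph{general} (signed or complex) $n$-sparse measures in $k$ dimensions, namely \cite[Theorem 2.7]{liu2021mathematicalhighd}. The essential observation is that the class of positive $\sigma$-admissible measures of Definition~\ref{def:sigmaadmissiblemeasure} is contained in the class of all $\sigma$-admissible measures treated in \cite{liu2021mathematicalhighd}: a positive measure with amplitudes $\hat a_j>0$ is in particular a measure for which $\min_j|\hat a_j|=\min_j\hat a_j$, and the admissibility constraint $||\mathcal F\hat\mu(\vect\omega)-\vect Y(\vect\omega)||_\infty<\sigma$ on $||\vect\omega||_2\leq\Omega$ is identical in both settings. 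Consequently, any conclusion that holds for \emph{all} $\sigma$-admissible measures holds a fortiori for the positive ones, which is exactly the quantifier appearing in our statement.

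First I would verify that the hypotheses of the present theorem match (or are a special case of) those of \cite[Theorem 2.7]{liu2021mathematicalhighd}. This reduces to checking three points: (i) the ground-truth measure $\mu$ is $n$-sparse and supported in $B_{(n-1)\pi/(2\Omega)}^k(\vect 0)$, which is the cluster assumption imposed there; (ii) the noise model $||\vect W(\vect\omega)||_\infty<\sigma$ on the ball $||\vect\omega||_2\leq\Omega$ coincides; and (iii) for positive sources $m_{\min}=\min_j a_j=\min_j|a_j|$, so that the separation condition (\ref{equ:highdsupportlimithm0equ0}) is precisely the hypothesis of the general theorem, with the same exponent $\frac{1}{2n-1}$ and the same form of constant. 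Once these are matched, the constant $C_{supp}(k,n)$ is simply inherited from the general result.

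Next I would apply \cite[Theorem 2.7]{liu2021mathematicalhighd} to the fixed $\mu$ together with an arbitrary positive $\sigma$-admissible measure $\hat\mu=\sum_{j=1}^{n}\hat a_j\delta_{\vect{\hat y}_j}$ with $n$ supports in $B_{(n-1)\pi/(2\Omega)}^k(\vect 0)$. Since $\hat\mu$ is in particular a $\sigma$-admissible measure in the sense of that paper, the theorem yields both the $\frac{d_{\min}}{2}$-neighborhood containment, so that each $\vect{\hat y}_j$ lies in exactly one ball $B_{d_{\min}/2}^k(\vect y_j)$, and the quantitative location error bound (\ref{equ:highdsupportlimithm0equ1}), with the explicit constant $C(k,n)$ and the factor $SRF=\frac{\pi}{\Omega}$ carried over verbatim after the appropriate reordering of the $\vect{\hat y}_j$'s.

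I do not expect a genuine analytic obstacle here, since the argument is a set-inclusion and specialization rather than a new estimate; the care required is purely in the bookkeeping. The one point meriting attention is confirming that the admissibility notion, the cutoff-frequency ball $||\vect\omega||_2\leq\Omega$, and the normalization of $m_{\min}$ agree between the two papers, so that the inherited constants $C_{supp}(k,n)$ and $C(k,n)$ indeed take the explicit forms claimed. Should there be any discrepancy in the definition of the admissible set (for instance a factor in the noise bound or in the support-constraint radius), it would surface precisely at this matching step and would need to be absorbed into the constant.
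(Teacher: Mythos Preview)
Your proposal is correct and matches the paper's own treatment: the paper states the theorem ``as a consequence of \cite[Theorem 2.7]{liu2021mathematicalhighd}'' without giving a separate proof, and your argument simply spells out why the specialization to positive $\sigma$-admissible measures is immediate from the general result. There is nothing to add.
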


Theorem \ref{thm:highdupperboundsupportlimit0} gives an upper bound for the computational resolution limit for the stable support recovery in the $k$-dimensional space. This bound is optimal in terms of the order of the signal-to-noise ratio, as is shown by the theorem below. 

\begin{thm}\label{thm:highdsupportlowerboundthm0}
	For given $0<\sigma\leq m_{\min}$ and integer $n\geq 2$, let 
	\begin{equation}\label{equ:highdsupportlowerboundequ0}
	\tau=\frac{e^{-1}}{\Omega}\ \Big(\frac{\sigma}{m_{\min}}\Big)^{\frac{1}{2n-1}}.
	\end{equation}
	Then there exist a positive measure $\mu=\sum_{j=1}^{n}a_j \delta_{\vect y_j}, \vect y_j \in \mathbb R^k,$ with $n$ supports at $\{(-(n-\frac{1}{2})\tau, 0, \cdots, 0), (-(n-\frac{5}{2})\tau, 0, \cdots, 0), \cdots, ((n-\frac{3}{2})\tau, 0, \cdots, 0)\}$ and a positive measure $\hat \mu=\sum_{j=1}^{n}\hat a_j \delta_{\mathbf{\hat y}_j}, \mathbf{\hat y}_j\in \mathbb R^k,$ with $n$ supports at  $\{(-(n-\frac{3}{2})\tau, 0, \cdots, 0), (-(n-\frac{7}{2})\tau, 0, \cdots, 0),\cdots, ((n-\frac{1}{2})\tau, 0,\cdots, 0)\}$ such that
	\[
	\bnorm{\mathcal F[\hat \mu]- \mathcal F[\mu]}_{\infty}< \sigma, \quad \min_{1\leq j\leq n}|a_j|= m_{\min}.
	\]
\end{thm}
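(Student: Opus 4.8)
The plan is to reduce the $k$-dimensional statement to the one-dimensional result of Theorem \ref{thm:supportlowerboundthm0}, exactly as was done for the number-detection lower bound in the proof of Theorem \ref{thm:highdnumberlowerbound0}. The key observation is that all the prescribed support points lie on the first coordinate axis, so for any frequency $\vect\omega = (\omega_1, \dots, \omega_k)^\top$ with $\|\vect\omega\|_2 \le \Omega$ the exponentials $e^{i\vect y_j\cdot\vect\omega}$ depend only on the first coordinate $\omega_1$, and moreover $|\omega_1| \le \|\vect\omega\|_2 \le \Omega$. Thus the $k$-dimensional Fourier data of these measures collapse to one-dimensional Fourier data evaluated on the interval $[-\Omega, \Omega]$.

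Concretely, first I would set $y_j^{(1)} = -(n-\frac12)\tau + (j-1)\tau$ for the support of $\mu$ and $\hat y_j^{(1)} = -(n-\frac32)\tau + (j-1)\tau$ for the support of $\hat\mu$ along the first axis, matching the one-dimensional locations in Theorem \ref{thm:supportlowerboundthm0}, and place all other coordinates at $0$. Define the one-dimensional measures $\mu_1 = \sum_{j=1}^n a_j \delta_{y_j^{(1)}}$ and $\hat\mu_1 = \sum_{j=1}^n \hat a_j \delta_{\hat y_j^{(1)}}$ on $\mathbb R$. Theorem \ref{thm:supportlowerboundthm0} supplies positive amplitudes $a_j, \hat a_j$ with $\min_{1\le j\le n}|a_j| = m_{\min}$ for which
\[
\max_{\omega\in[-\Omega,\Omega]} \babs{\mathcal F[\hat\mu_1](\omega) - \mathcal F[\mu_1](\omega)} < \sigma.
\]
I would then take exactly these amplitudes for $\mu$ and $\hat\mu$ in the $k$-dimensional measures.

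The remaining step is to verify that the $k$-dimensional admissibility bound follows. For $\|\vect\omega\|_2 \le \Omega$ we have
\[
\mathcal F[\hat\mu](\vect\omega) - \mathcal F[\mu](\vect\omega)
= \sum_{j=1}^n \hat a_j e^{i\hat y_j^{(1)}\omega_1} - \sum_{j=1}^n a_j e^{i y_j^{(1)}\omega_1}
= \mathcal F[\hat\mu_1](\omega_1) - \mathcal F[\mu_1](\omega_1),
\]
and since $|\omega_1| \le \Omega$ the right-hand side is bounded in modulus by $\sigma$. Taking the supremum over all admissible $\vect\omega$ gives $\bnorm{\mathcal F[\hat\mu] - \mathcal F[\mu]}_\infty < \sigma$, while $\min_{1\le j\le n}|a_j| = m_{\min}$ is inherited directly. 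I expect no real obstacle here: the entire content is the dimensional reduction, and once the support points are seen to lie on a single axis the estimate is a verbatim transfer of the one-dimensional bound. The only point needing a line of care is confirming $|\omega_1| \le \Omega$ whenever $\|\vect\omega\|_2 \le \Omega$, which is immediate, and checking that the minimum pairwise Euclidean distance of the interleaved supports is still governed by $\tau$ just as in the scalar case.
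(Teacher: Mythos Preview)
Your proposal is correct and follows exactly the paper's approach: the paper's proof is a one-line reduction to the one-dimensional Theorem~\ref{thm:supportlowerboundthm0} via the observation that supports on the first coordinate axis make $\mathcal F[\mu](\vect\omega)$ depend only on $\omega_1\in[-\Omega,\Omega]$, just as in the proof of Theorem~\ref{thm:highdnumberlowerbound0}. One small slip to fix: your formula $y_j^{(1)}=-(n-\tfrac12)\tau+(j-1)\tau$ gives spacing $\tau$ rather than the required $2\tau$; it should read $-(n-\tfrac12)\tau+2(j-1)\tau$ (and similarly for $\hat y_j^{(1)}$), consistent with your stated intent to match the locations in Theorem~\ref{thm:supportlowerboundthm0}.
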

\begin{proof}
Similar to the discussions in the proof of Theorem \ref{thm:highdnumberlowerbound0}, the problem can be reduced to the one-dimensional case. Then leveraging Theorem \ref{thm:supportlowerboundthm0} proves the result. 
\end{proof}

Theorem \ref{thm:highdsupportlowerboundthm0} provides a lower bound to the computational resolution limit $\mathcal{D}_{k,supp}^+$. Combined with Theorem \ref{thm:highdupperboundsupportlimit0}, it reveals that
\[
\frac{C_{3,k}(n)}{\Omega}\ \Big(\frac{\sigma}{m_{\min}}\Big)^{\frac{1}{2n-1}}< \mathcal D_{k, supp}^+ \leq \frac{C_{4,k}(n)}{\Omega }\Big(\frac{\sigma}{m_{\min}}\Big)^{\frac{1}{2n-1}}
\]
for certain constants $C_{3,k}(n), C_{4,k}(n)$.

\begin{remark}
Compared to the one-dimensional case in Section \ref{section:onedresolutionlimit}, the upper bounds of multi-dimensional computational resolution limits for the number detection and location recovery in the super-resolution of positive sources has the same dependence on the signal-to-noise ratio and cutoff frequency. Moreover, their dependence on the dimensionality are indicated by the constant factors in the upper bound. We conjecture that the optimal constants may be independent of the source number $n$. Note that the constant factors in the bounds have been improved in \cite{liu2022nearly} to nearly optimal for the two-dimensional case. 
\end{remark}

\section{Does the positivity enhace the resolution limit?}\label{section:positivitynotenhance}

It is indicated in previous theorems that the a priori information of positivity cannot further improve the resolution order as compared to the case of complex sources \cite{liu2021theorylse, liu2022mathematical}. A further question is whether the positivity can indeed improve the resolution limit or not. To answer the question, in this section we demonstrate that, in certain scenarios, positivity actually deteriorates the resolution limit. Specifically, we show that two point sources with different phases can have a better resolution limit than those with the same phase.

We first consider a generalized diffraction limit problem. Note that the classic diffraction limit problem considers distinguishing two positive sources with identical intensities \cite{abbe1873beitrage, rayleigh1879xxxi, schuster1904introduction, sparrow1916spectroscopic}. However,  in order to highlight the effect of the phase difference between two sources, we introduce a generalized diffraction limit that examines the ability to resolve two sources with equal magnitudes but varying phases.

\begin{defi}\label{defi:twopointresolimit}
	The generalized two-point diffraction limit is defined as the largest nonnegative number $\mathcal{R}(\theta)$ such that for all measures $\mu =\sum_{j=1}^{2}a_j\delta_{\vect y_j}, \vect y_j \in \mathbb R^k$ with $|a_1|=|a_2|>0$ and the phase difference of $a_1, a_2$ being $\theta$, if 
	\[
	\bnorm{\vect y_1-\vect y_2}_2 < \mathcal R(\theta),
	\]
	then, for some image $\vect Y$ in the model (\ref{equ:modelsetting2}), it is impossible to determine whether the image $\vect Y$ is generated from one or two sources from the $\sigma$-admissible measures defined in Definition \ref{def:sigmaadmissiblemeasure}. In other words, there exists a $\sigma$-admissible measure of some $\vect Y$ with only one point source. 
\end{defi}

By the above definition, when $\bnorm{\vect y_1-\vect y_2}_2 \geq \mathcal R(\theta)$, one can definitely distinguish two points with amplitudes of the same magnitude and a phase difference $\theta$  from their image.  Conversely, if the separation condition fails to hold, in some cases it is impossible to determine if the image is generated from one or two sources. We have the following theorem for the exact characterization of this generalized two-point diffraction.

\begin{thm} \label{thm:twopointresolution0}
	Consider the collection of two sources $\mu = a_1 \delta_{\vect y_1}+a_2 \delta_{\vect y_2}, y_j \in \mathbb R^k, j=1,2$ with $a_1= m_{\min}e^{i\theta_1}, a_2 =m_{\min}e^{i\theta_2}$. Denote by $\theta = \babs{\theta_1-\theta_2}$ and assume that $\theta \in \left[0, \frac{\pi}{2}\right]$.  When $\sin \left(\frac{\theta}{2}\right)^2\leq \frac{\sigma}{m_{\min}}\leq \frac{1}{2}$, the generalized two-point diffraction limit $\mathcal R(\theta)$ in a space of general dimensionality is given by
	\begin{equation}\label{equ:twopointresolution1}
	\mathcal R(\theta) =  \frac{4\arcsin\left(\left(\frac{\sigma}{m_{\min}}\right)^{\frac{1}{2}}\right)-\theta}{\Omega}.
	\end{equation}
   When $\frac{\sigma}{m_{\min}} >\frac{1}{2}$, no matter what the separation distance is, there are always some $\sigma$-admissible measures of some image $\vect Y$ with only one point source. 
\end{thm}

Theorem \ref{thm:twopointresolution0} precisely characterizes the resolution of two points with the same magnitude but with a phase difference in $[0, \frac{\pi}{2}]$ under certain noise level. In particular, it reveals an interesting fact that sometimes two sources with identical amplitudes have the worst diffraction limit. Therefore, the phase difference actually improves the resolution limit. 

However, Theorem \ref{thm:twopointresolution0} has a specific setting, and it is more crucial to determine the resolution limit for general scenarios. In the ensuing theorem, we investigate the computational resolution limit while considering  phase differences between sources. It is worth noting that the original definition of the computational resolution limit does not include the phase difference. Still, we have utilized the same notion to avoid any confusion.

	\begin{thm}\label{thm:computatwopointresolution0}
		For $\theta \in [0, \frac{\pi}{2}]$ and $\sin \left(\frac{\theta}{2}\right)^2\leq \frac{\sigma}{m_{\min}} \leq \frac{1}{2}$, the resolution limit $\mathcal D_{k, num}$ for resolving two sources with phase difference $\theta$ in $\mathbb R^k$ is given by 
		\begin{equation}\label{equ:computatwopointresolu0}
			\mathcal D_{k, num} =  \frac{4\arcsin\left(\left(\frac{\sigma}{m_{\min}}\right)^{\frac{1}{2}}\right)-\theta}{\Omega}.
		\end{equation}
		It can be attained if $|a_1|=|a_2|$. When $\frac{\sigma}{m_{\min}} >\frac{1}{2}$, no matter what the separation distance is, there are always some $\sigma$-admissible measures of some $\vect Y$ with only one point source.
	\end{thm}

Theorem \ref{thm:computatwopointresolution0} establishes that, under certain noise levels, when there is a phase difference between two sources, their computational resolution limit is strictly better than that of two positive sources; this means that positivity actually impairs the resolution limit rather than improving it. It is also worth noting that, as per Theorem \ref{thm:computatwopointresolution0}, super-resolution can already be achieved to some extent provided that $\frac{\sigma}{m_{\min}}< \frac{1}{2}$, especially when the source amplitudes differ in phase.

\section{Phase transition in the number detection}\label{section:numberphasetrasition}
In this section, employing the sweeping singular-value-thresholding number detection algorithm introduced in \cite{liu2021theorylse}, we verify the phase transition phenomenon for the number detection in the super-resolution of positive sources.

\subsection{Review of the sweeping singular-value-thresholding number detection algorithm}
In \cite{liu2021theorylse}, the authors proposed a number detection algorithm called sweeping single-value-thresholding number detection algorithm. It determines the number of sources by thresholding on the singular value of a Hankel matrix formulated from the measurement data. 

To be more specific, suppose the measurement is taken at $M$ evenly-spaced points $\omega_1=-\Omega, \omega_2, \cdots,  \omega_M = \Omega$, that is,  
\[
\mathbf Y(\omega_j) = \mathcal F \mu (\omega_j) + \mathbf W(\omega_j), \quad j=1, \cdots, M.
\]
We choose a partial measurement at the sample points $z_t= \omega_{(t-1)r+1}$ for $t=1,\cdots,2s+1$, where $s\geq n$ and $r=(M-1) \mod 2s$. 
For ease of exposition, assume $r=\frac{M-1}{2s}$. Then $z_t=\omega_{(t-1)\frac{M-1}{2s}+1} =-\Omega+\frac{t-1}{s}\Omega$ (since $\omega_1=-\Omega$, $\omega_{M}=\Omega$) and the partial measurement is
\[
\mathbf Y(z_t)= \mathcal F \mu (z_t) + \mathbf W(z_t), \quad 1\leq t \leq 2s+1.
\]
Assemble the following Hankel matrix by the measurements that 
\begin{equation}\label{hankelmatrix1}
\mathbf H(s)=\left(\begin{array}{cccc}
\mathbf Y(-\Omega)&\mathbf Y(-\Omega+\frac{1}{s}\Omega)&\cdots& \mathbf Y(0)\\
\mathbf Y(-\Omega+\frac{1}{s}\Omega)&\mathbf Y(-\Omega+\frac{2}{s}\Omega)&\cdots&\mathbf Y(\frac{1}{s}\Omega)\\
\cdots&\cdots&\ddots&\cdots\\
\mathbf Y(0)&\mathbf Y(\frac{1}{s}\Omega)&\cdots&\mathbf Y(\Omega)
\end{array}
\right).\end{equation}
We observe that $\mathbf H(s)$ has the decomposition
\[\mathbf H(s)= DAD^{\top}+\Delta,\]
where $A=\text{diag}(e^{-iy_1\Omega}a_1, \cdots, e^{-iy_n\Omega}a_n)$ and $D=\big(\phi_{s}(e^{i y_1 \frac{\Omega}{s}}), \cdots, \phi_{s}(e^{i y_n \frac{\Omega}{s}})\big)$ with $\phi_{s}(\omega)$ being defined as $(1, \omega, \cdots, \omega^s)^\top$ and 
\begin{equation*}
\Delta = \left(\begin{array}{cccc}
\mathbf {W}(-\Omega)&\mathbf {W}(-\Omega+\frac{1}{s}\Omega)&\cdots& \mathbf {W}(0)\\
\mathbf {W}(-\Omega+\frac{1}{s}\Omega)&\mathbf {W}(-\Omega+\frac{2}{s}\Omega)&\cdots&\mathbf {W}(\frac{1}{s}\Omega)\\
\vdots&\vdots&\ddots&\vdots\\
\mathbf {W}(0)&\mathbf {W}(\frac{1}{s}\Omega)&\cdots&\mathbf {W}(\Omega)
\end{array}
\right).
\end{equation*}
We denote the singular value decomposition of $\mathbf H(s)$ as  
\[\mathbf H(s)=\hat U\hat \Sigma \hat U^*,\]
where $\hat\Sigma =\text{diag}(\hat \sigma_1,\cdots, \hat \sigma_n, \hat \sigma_{n+1},\cdots,\hat\sigma_{s+1})$ with the singular values $\hat \sigma_j$, $1\leq j \leq s+1$, ordered in a decreasing manner. From \cite{liu2021theorylse}, we have the following theorem for the threshold to determine the source number. 
\begin{thm}\label{MUSICthm1}
	Let $s\geq n$ and $\mu=\sum_{j=1}^{n}a_j \delta_{y_j}$ with $y_j\in I(n,\Omega), 1\leq j\leq n$. We have 
	\begin{equation}\label{MUSICthm1equ-1}
	\hat \sigma_j\leq  (s+1)\sigma,\quad j=n+1,\cdots,s+1.
	\end{equation}
	Moreover, if the following separation condition is satisfied
	\begin{equation}\label{MUSICthm1equ0}
	\min_{p\neq j}|y_p-y_j|>\frac{\pi s}{\Omega}\Big(\frac{2n(s+1)}{\zeta(n)^2}\frac{\sigma}{m_{\min}}\Big)^{\frac{1}{2n-2}},
	\end{equation}
	where $\zeta(n)= \left\{
\begin{array}{cc}
(\frac{n-1}{2}!)^2,& \text{$n$ is odd,}\\
(\frac{n}{2})!(\frac{k-2}{2})!,& \text{$n$ is even,}
\end{array} 
\right.$ then
	\begin{equation}\label{MUSICthm1equ2}
	\hat\sigma_{n}>(s+1)\sigma.
	\end{equation}
\end{thm}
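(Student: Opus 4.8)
The plan is to treat the noise term $\Delta$ as a perturbation of the noiseless Hankel matrix $DAD^{\top}$ and to compare their singular values. Denote by $\varrho_j(M)$ the $j$th largest singular value of a matrix $M$. Since $A$ is $n\times n$ and $D$ is $(s+1)\times n$, the clean part $DAD^{\top}$ has rank at most $n$, so $\varrho_j(DAD^{\top})=0$ for $j\geq n+1$. Weyl's perturbation inequality $|\varrho_j(\mathbf H(s))-\varrho_j(DAD^{\top})|\leq \|\Delta\|_2$ then reduces the whole theorem to two ingredients: a crude upper bound on $\|\Delta\|_2$ and a sharp lower bound on $\varrho_n(DAD^{\top})$.

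For the upper bound (\ref{MUSICthm1equ-1}), I would simply note that $\Delta$ is an $(s+1)\times(s+1)$ matrix each of whose entries has modulus less than $\sigma$, so $\|\Delta\|_2\leq \|\Delta\|_F\leq (s+1)\sigma$. Combined with $\varrho_{n+1}(DAD^{\top})=0$ and Weyl's inequality, this gives $\hat\sigma_j=\varrho_j(\mathbf H(s))\leq \varrho_{n+1}(\mathbf H(s))\leq (s+1)\sigma$ for every $j\geq n+1$, since the $\hat\sigma_j$ are nonincreasing.

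For the lower bound (\ref{MUSICthm1equ2}), Weyl gives $\hat\sigma_n\geq \varrho_n(DAD^{\top})-\|\Delta\|_2\geq \varrho_n(DAD^{\top})-(s+1)\sigma$, so it suffices to show $\varrho_n(DAD^{\top})>2(s+1)\sigma$. I would first decouple the amplitudes from the geometry. Using the SVD $D=U\Sigma V^{*}$ with $U^{*}U=I_n$, one has $DAD^{\top}=U\big(\Sigma (V^{*}A\bar V)\Sigma\big)U^{\top}$, and since $U$ is an isometry while $U^{\top}\bar U=I_n$, the nonzero singular values of $DAD^{\top}$ coincide with those of $\Sigma (V^{*}A\bar V)\Sigma$. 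As singular values are invariant under transposition and $V^{*}A\bar V$ shares its singular values with $A$, the submultiplicativity $\varrho_{\min}(XYZ)\geq \varrho_{\min}(X)\varrho_{\min}(Y)\varrho_{\min}(Z)$ for square matrices yields $\varrho_n(DAD^{\top})\geq \varrho_n(D)^{2}\,\varrho_{\min}(A)=\varrho_n(D)^{2}m_{\min}$, where $\varrho_{\min}(A)=\min_j a_j=m_{\min}$.

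The crux, and the step I expect to be the main obstacle, is a sharp lower bound on $\varrho_n(D)$, the smallest singular value of the Vandermonde matrix $D$ whose nodes $\omega_j=e^{iy_j\Omega/s}$ lie on the unit circle with angular separation at least $d_{\min}\Omega/s$. This is exactly the content of the nonlinear approximation theory in the Vandermonde space of \cite{liu2021theorylse}; the extremal constant $\zeta(n)$ appears as the minimal value of the product of node gaps $\Pi_{q\neq j}|y_j-y_q|$ over the worst-case evenly spaced centered configuration, the same Lagrange--Vandermonde structure underlying Lemma \ref{lem:invervandermonde}. Granting the resulting estimate, which takes the form $\varrho_n(D)\geq \frac{\zeta(n)}{\sqrt n}\left(\frac{d_{\min}\Omega}{\pi s}\right)^{n-1}$, I obtain $\varrho_n(D)^2 m_{\min}\geq \frac{\zeta(n)^2}{n}\left(\frac{d_{\min}\Omega}{\pi s}\right)^{2n-2}m_{\min}$, and the separation condition (\ref{MUSICthm1equ0}) is precisely what forces this quantity to exceed $2(s+1)\sigma$. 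Feeding this back into the Weyl estimate gives $\hat\sigma_n>(s+1)\sigma$, completing the proof.
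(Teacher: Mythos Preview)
The paper does not prove this theorem; it is quoted from \cite{liu2021theorylse} and used as a black box for the number-detection algorithm. Your outline is correct and is essentially the argument that appears in that reference: Weyl's inequality plus the rank-$n$ structure of $DAD^{\top}$ give (\ref{MUSICthm1equ-1}) from the crude Frobenius bound $\|\Delta\|_2\leq(s+1)\sigma$, and for (\ref{MUSICthm1equ2}) the decoupling $\varrho_n(DAD^{\top})\geq \varrho_n(D)^2 m_{\min}$ together with the Vandermonde singular-value lower bound $\varrho_n(D)\geq \frac{\zeta(n)}{\sqrt n}\big(\frac{d_{\min}\Omega}{\pi s}\big)^{n-1}$ makes the separation condition (\ref{MUSICthm1equ0}) exactly the threshold $\varrho_n(DAD^{\top})>2(s+1)\sigma$. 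Your SVD manipulation to extract $\varrho_n(D)^2 m_{\min}$ is clean and correct (the identity $U^{\top}\bar U=I_n$ follows by transposing $U^{*}U=I_n$). The one step you explicitly grant---the Vandermonde estimate with the constant $\zeta(n)$---is indeed the substantive lemma, and you attribute it to the right place; it is proved in \cite{liu2021theorylse} via exactly the Lagrange--Vandermonde structure you describe.
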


Based on this theorem, the threshold should be $(s+1)\sigma$ and the following \textbf{Algorithm \ref{algo:singularvaluenumberdetect}} was proposed to detect the source number for fixed $s$. 

\begin{algorithm}
	\caption{\textbf{Singular-value-thresholding number detection algorithm}}
	\textbf{Input:} Number $s$, Noise level $\sigma$;\\
	\textbf{Input:} measurement: $\mathbf{Y}=(\mathbf Y(\omega_1),\cdots, \mathbf Y(\omega_M))^{\top}$;\\	
	1: $r=(M-1)\mod 2s$,  $\mathbf{Y}_{new}=(\mathbf Y(\omega_1), \mathbf Y(\omega_{r+1}), \cdots, \mathbf Y(\omega_{2sr+1}))^{\top}$\;
	2: Formulate the $(s+1)\times(s+1)$ Hankel matrix $\mathbf H(s)$ from $\mathbf{Y}_{new}$, and
	compute the singular value of $\mathbf H(s)$ as $\hat \sigma_{1}, \cdots,\hat \sigma_{s+1}$ distributed in a decreasing manner\;
	4: Determine $n$ by $\hat \sigma_n>(s+1)\sigma$ and $\hat \sigma_{j}\leq (s+1)\sigma, j=n+1,\cdots, s+1$\;
	\textbf{Return:} $n$. 
	\label{algo:singularvaluenumberdetect}
\end{algorithm}

\medskip
In \textbf{Algorithm \ref{algo:singularvaluenumberdetect}}, the $s\geq n$ should be properly chosen to have a good resolution. To address this issue, a sweeping strategy was utilized and the following \textbf{Algorithm \ref{algo:sweepsingularvaluenumberdetect}} was proposed. It was shown in \cite{liu2021theorylse} that the \textbf{Algorithm \ref{algo:sweepsingularvaluenumberdetect}} achieves the optimal resolution order.

\begin{algorithm}
	\caption{\textbf{Sweeping singular-value-thresholding number detection algorithm}}	
	\textbf{Input:} Noise level $\sigma$, measurement: $\mathbf{Y}=(\mathbf Y(\omega_1),\cdots, \mathbf Y(\omega_M))^{\top}$;\\
	\textbf{Input:} $n_{max}=0$\\
	\For{$s=1: \lfloor \frac{M-1}{2}\rfloor$}{
		Input $s,\sigma, \mathbf{Y}$ to \textbf{Algorithm 1}, save the output of \textbf{Algorithm 1} as $n_{recover}$\; 
		\If{$n_{recover}>n_{max}$}{$n_{max}=n_{recover}$}
	}
	{
		\textbf{Return} $n_{max}$.
	}
	\label{algo:sweepsingularvaluenumberdetect}
\end{algorithm}

\subsection{Phase transition} \label{sec-phase}
We know from Section \ref{section:onedresolutionlimit} that the resolution limit to the number detection problem in super-resolution of positive sources is bounded from below and above by  $\frac{C_1}{\Omega}(\frac{\sigma}{m_{\min}})^{\frac{1}{2n-2}}$ and $\frac{C_2}{\Omega}(\frac{\sigma}{m_{\min}})^{\frac{1}{2n-2}}$, respectively for some constants $C_1,C_2$. This indeed implies  
a phase transition phenomenon in the problem. Specifically, recall that the super-resolution factor is 
$SRF=\frac{\pi}{d_{\min} \Omega}$ and the $\frac{m_{\min}}{\sigma}$ can be viewed as the signal-to-noise ratio $SNR$. Taking the logarithm of both sides of the two bounds, we can conclude that the exact number detection is guaranteed if
$$
\log(SNR) > (2n-2)\log(SRF)+(2n-2) \log \frac{C_1}{\pi}, 
$$
and may fail if
$$
\log(SNR) < (2n-2)\log(SRF)+(2n-2) \log \frac{C_2}{\pi}.  
$$
As a consequence, we expect that in the parameter space of $\log SNR-\log SRF$, there exist two lines both with slope $2n-2$ such that the number detection is successful for cases above the first line and unsuccessful for cases below the second. In the intermediate region between the two lines, the number detection can be either successful or unsuccessful from case to case. This is clearly demonstrated in the numerical experiments below. 

We fix $\Omega=1$ and consider $n$ point sources randomly spaced in $\left[-\frac{(n-1)\pi}{2}, \frac{(n-1)\pi}{2} \right]$ with positive amplitudes $a_j$'s. The noise level is $\sigma$ and the minimum separation distance between sources is $d_{\min}$. We perform 10000 random experiments (the randomness is in the choice of $(d_{\min},\sigma, y_j, a_j)$) to detect the source number based on \textbf{Algorithm \ref{algo:sweepsingularvaluenumberdetect}}. Figure \ref{fig:numberphasetransition} shows the results for $n=2,4,$ respectively. In each case, two lines of slope $2n-2$ strictly separate the blue points (successful detection) and red points (unsuccessful detection) and in-between is the phase transition region. It clearly elucidates the phase transition phenomenon of \textbf{Algorithm \ref{algo:sweepsingularvaluenumberdetect}} and is consistent with our theory.

\begin{figure}[!h]
	\centering
	\begin{subfigure}[b]{0.28\textwidth}
		\centering
		\includegraphics[width=\textwidth]{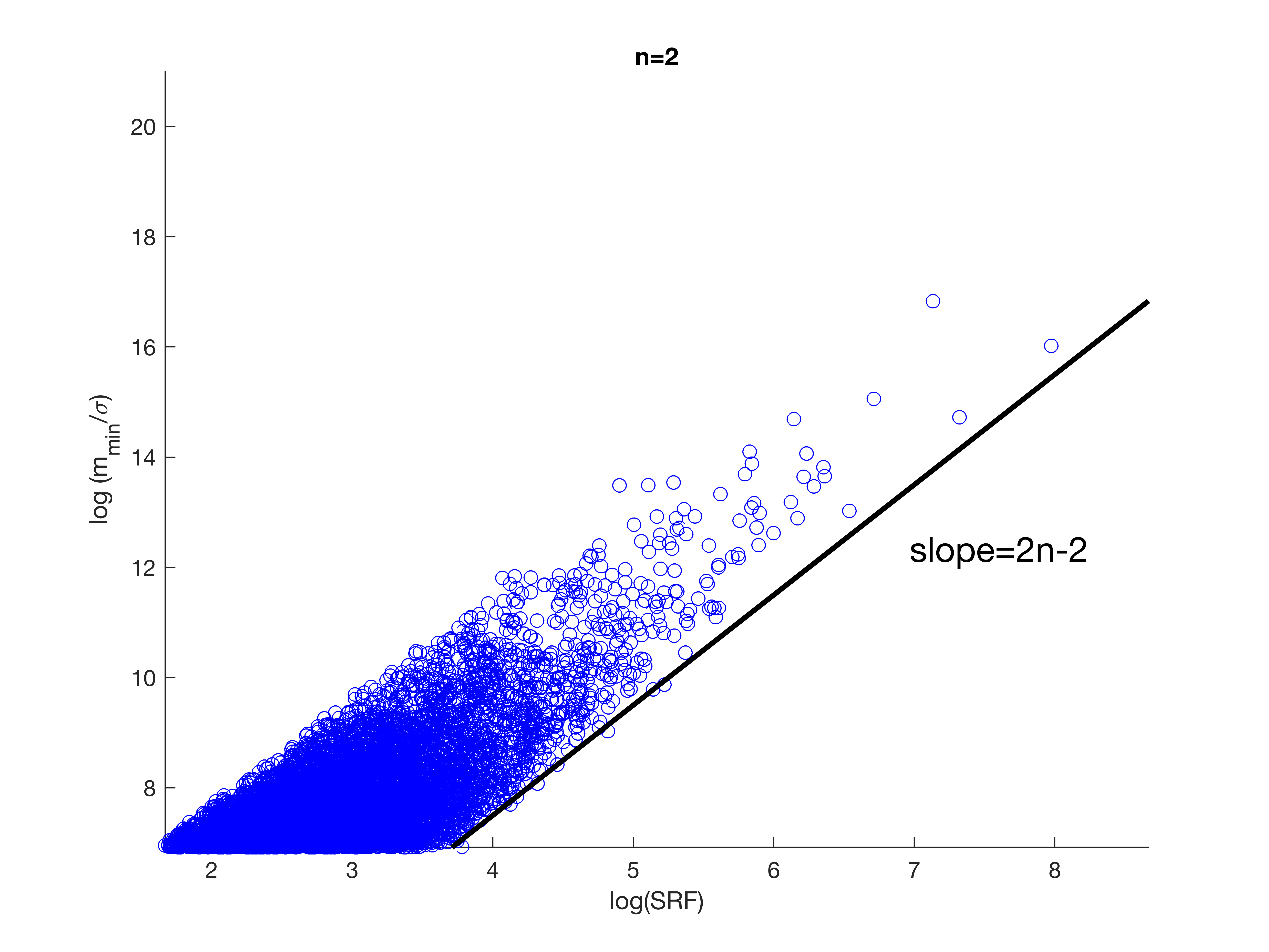}
		\caption{detection success}
	\end{subfigure}
	\begin{subfigure}[b]{0.28\textwidth}
		\centering
		\includegraphics[width=\textwidth]{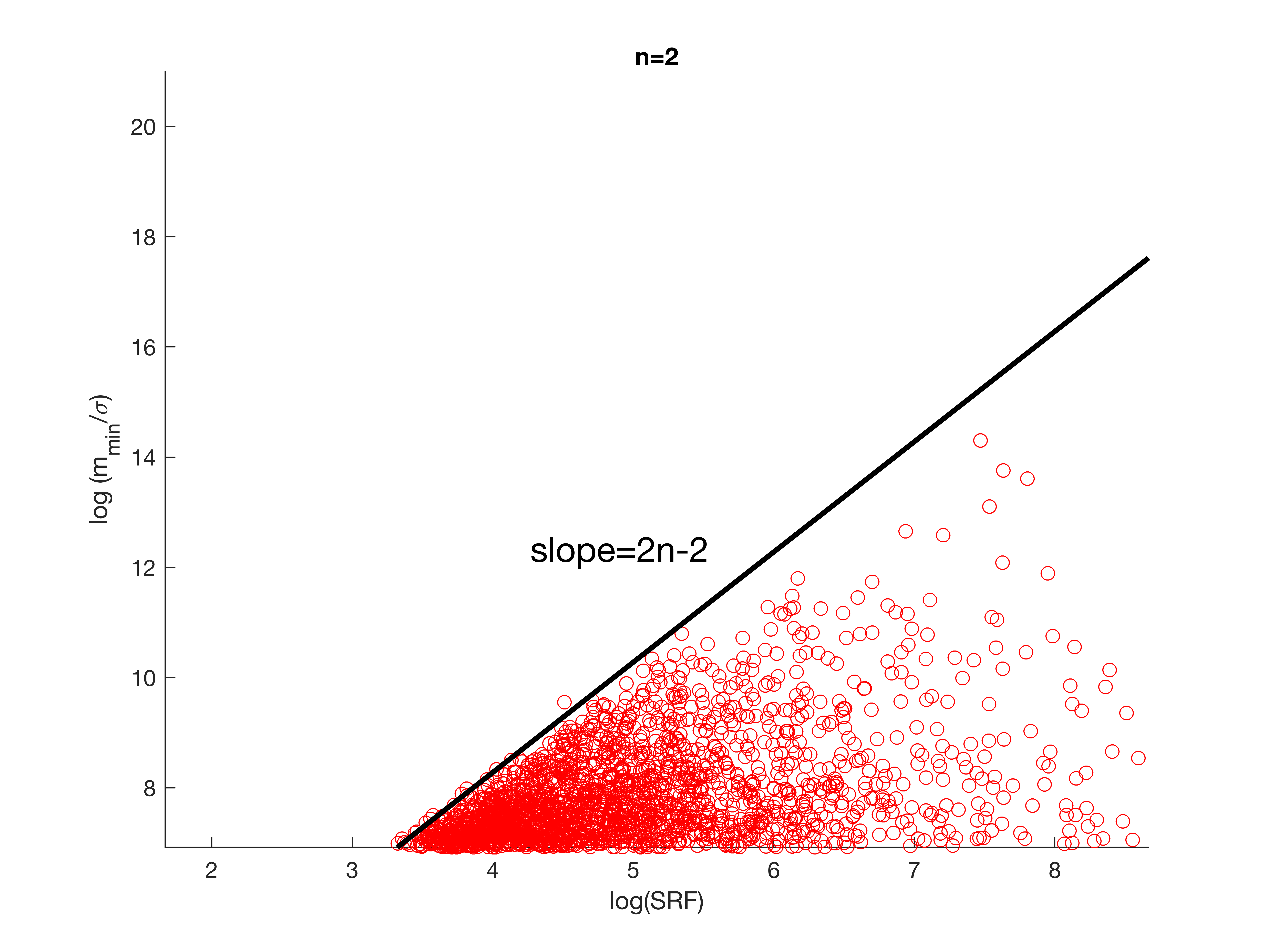}
		\caption{detection fail}
	\end{subfigure}
	\begin{subfigure}[b]{0.28\textwidth}
		\centering
		\includegraphics[width=\textwidth]{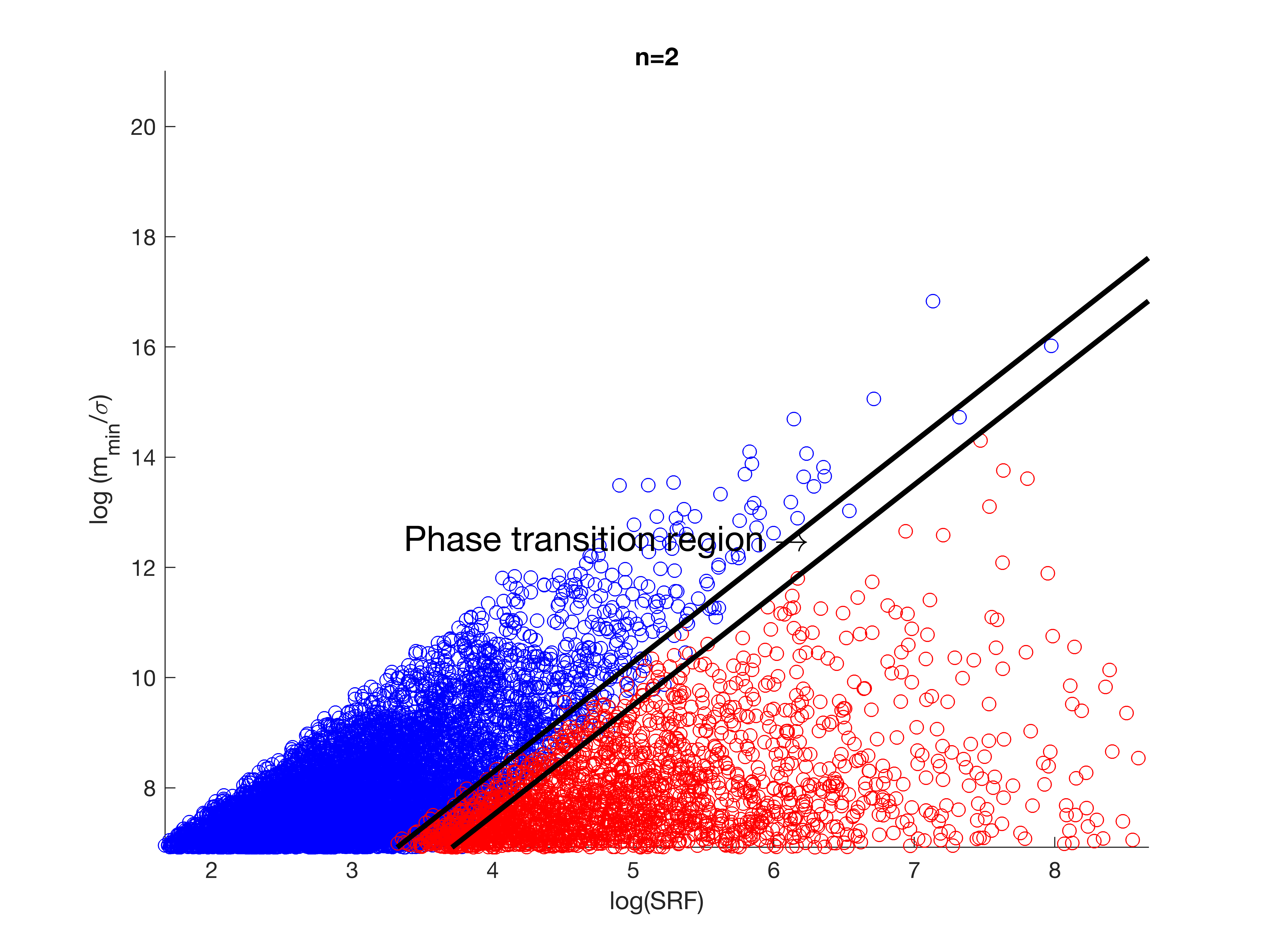}
		\caption{phase transition region}
	\end{subfigure}
	\begin{subfigure}[b]{0.28\textwidth}
		\centering
		\includegraphics[width=\textwidth]{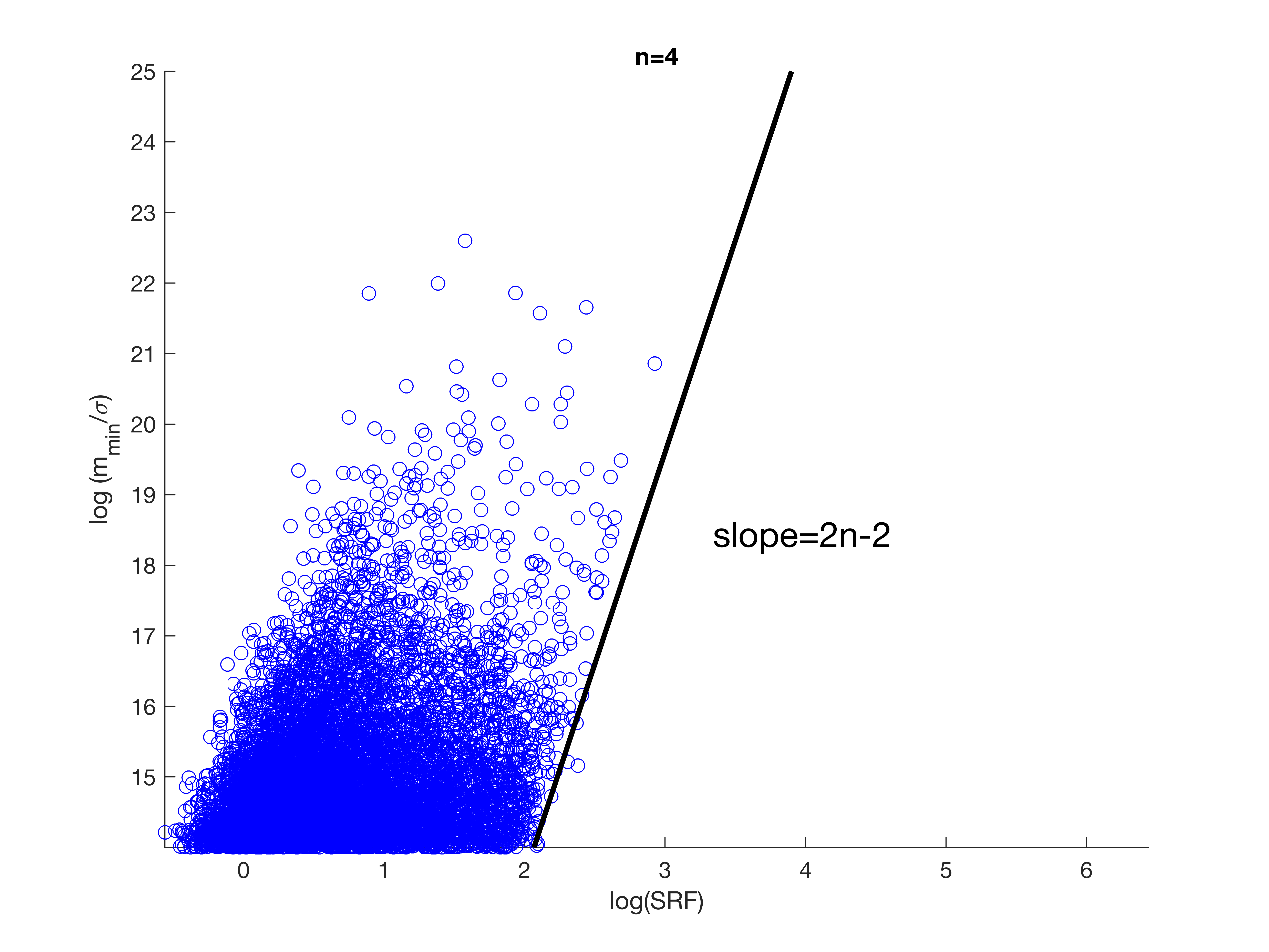}
		\caption{detection success}
	\end{subfigure}
	\begin{subfigure}[b]{0.28\textwidth}
		\centering
		\includegraphics[width=\textwidth]{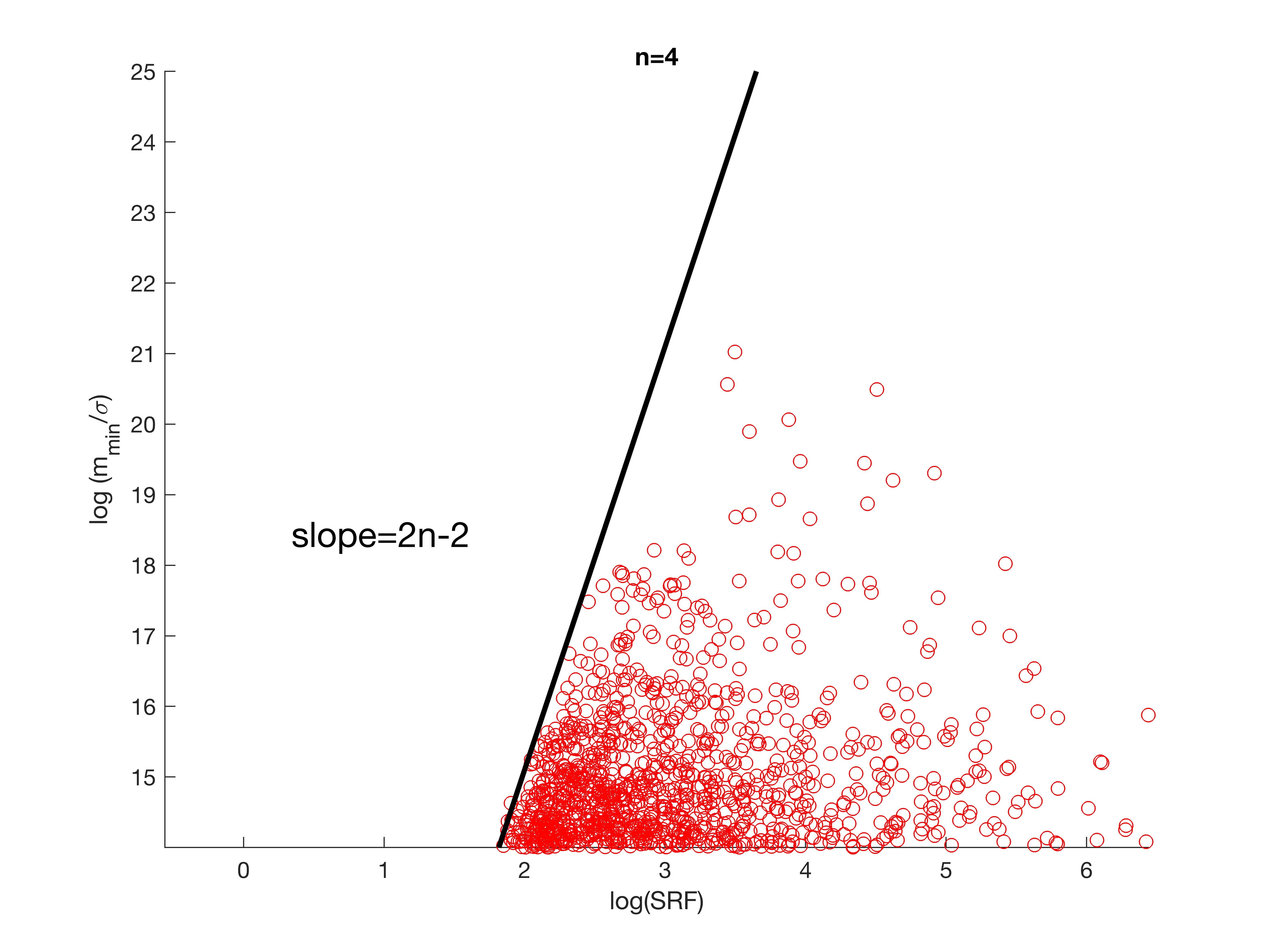}
		\caption{detection fail}
	\end{subfigure}
	\begin{subfigure}[b]{0.28\textwidth}
		\centering
		\includegraphics[width=\textwidth]{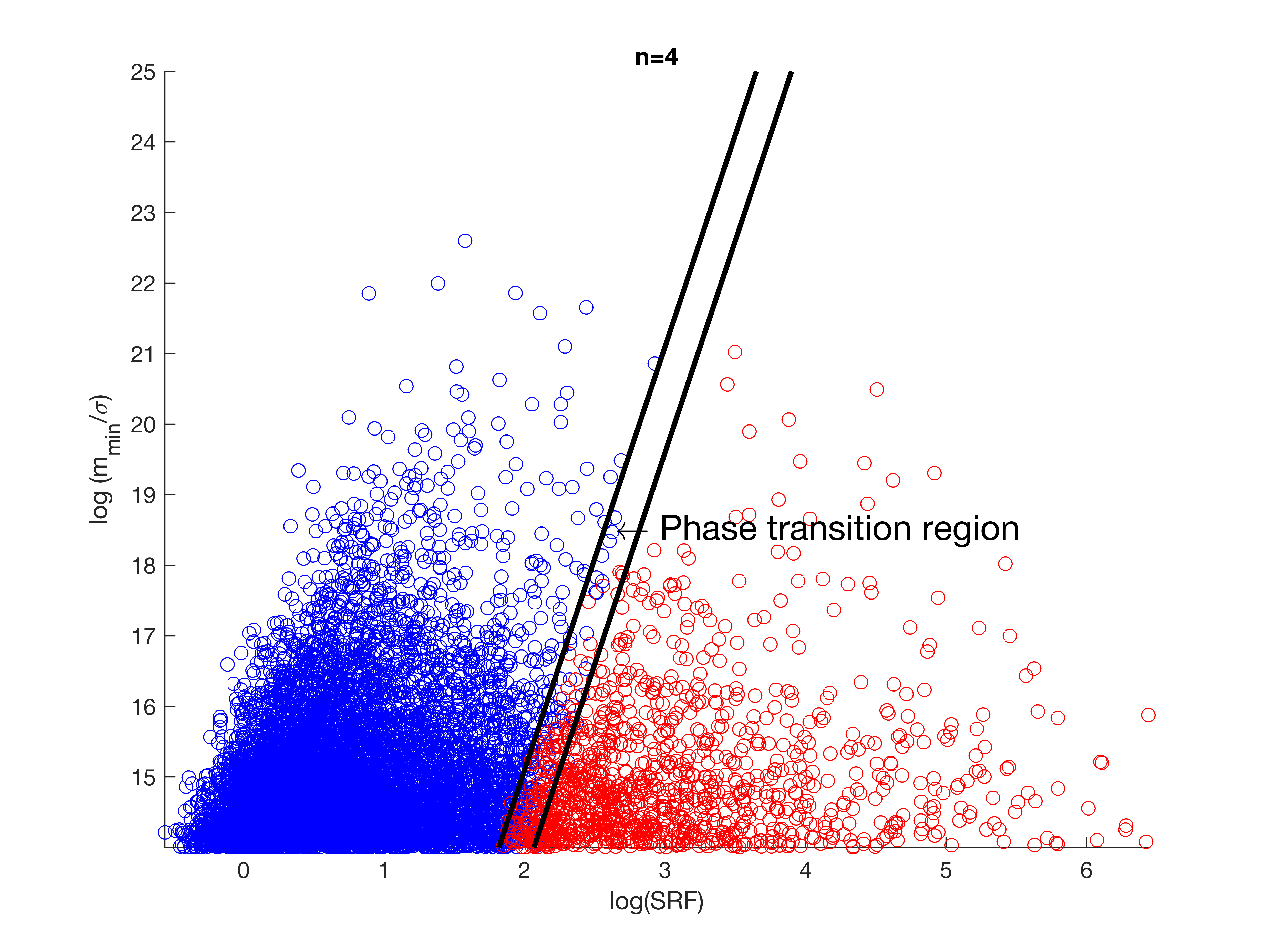}
		\caption{phase transition region}
	\end{subfigure}
	\caption{Plots of the successful and the unsuccessful number detection by \textbf{Algorithm 2} depending on the relation between $\log(SRF)$ and $\log(\frac{m_{\min}}{\sigma})$. (a) illustrates that two positive point source can be exactly detected if $\log(\frac{m_{\min}}{\sigma})$ is above a line of slope $2$ in the parameter space. Conversely, for the same case, (b) shows that the number detection fails if $\log(\frac{m_{\min}}{\sigma})$ falls below another line of slope $2$. (c) highlights the phase transition region which is bounded by the black slashes in (a) and (b). (d),(e) and (f) illustrate parallel results for four positive point sources.}
	\label{fig:numberphasetransition}
\end{figure}

\section{Phase transition in the location recovery}
In this section, by the MUSIC algorithm we verify the phase transition phenomenon for the location recovery in the super-resolution of positive sources.

\subsection{Review of the MUSIC algorithm} 
In this section we review the MUSIC algorithm. From the measurement $\mathbf Y =(\mathbf Y(\omega_1),\mathbf Y(\omega_2),\cdots,\mathbf Y(\omega_M))^\top$ and $\hat M=\lfloor \frac{M-1}{2}\rfloor$, we assemble the $(\hat M +1)\times(\hat M +1)$ Hankel matrix, 
\begin{equation}\label{hankelmatrix1}\vect H=\begin{pmatrix}
\mathbf Y(\omega_1)&\mathbf Y(\omega_2)&\cdots& \mathbf Y(\omega_{\hat M})\\
\mathbf Y(\omega_2)&\mathbf Y(\omega_3)&\cdots&\mathbf Y(\omega_{\hat M+1})\\
\cdots&\cdots&\ddots&\cdots\\
\mathbf Y(\omega_{\hat M})&\mathbf Y(\omega_{\hat M+1})&\cdots&\mathbf Y(\omega_{2\hat M+1})
\end{pmatrix}.
\end{equation}
 We perform the following singular value decomposition for $\vect H$,
\[
\vect H=\hat U\hat \Sigma \hat U^*=[\hat U_1\quad \hat U_2]\text{diag}(\hat \sigma_1, \hat \sigma_2,\cdots,\hat \sigma_n,\hat \sigma_{n+1},\cdots,\hat \sigma_{\hat M+1})[\hat U_1\quad \hat U_2]^*,
\]
where $\hat U_1=(\hat U(1),\cdots,\hat U(n)), \hat U_2=(\hat U(n+1),\cdots,\hat U(\hat M+1))$ with $n$ being the source number. Then we denote the orthogonal projection to the space $\hat U_2$ by $\hat P_2x=\hat U_2(\hat U_2^*x)$. For a test vector $\Phi(x)=(1, e^{ihx},\cdots,e^{i\hat Mhx})^{\top}$ with $h$ being the spacing parameter, we define the MUSIC imaging functional 
\begin{align*}
\hat J(x)=\frac{||\Phi(x)||_2}{||\hat P_2\Phi(x)||_2}=\frac{||\Phi(x)||_2}{||\hat U_2^*\Phi(x)||_2}.
\end{align*}
The local  maximizers of $\hat J(x)$ indicate the locations of the point sources. In practice, we test evenly spaced points in a specified interval and plot the discrete imaging functional and then determine the source locations by detecting the peaks. We present the peak selection algorithm as \textbf{Algorithm \ref{algo:peakselection}} and summarize the MUSIC algorithm in \textbf{Algorithm \ref{algo:standardmusic}} below.

\begin{algorithm}[H]
	\caption{\textbf{MUSIC algorithm}}
	\textbf{Input:} Measurements: $\mathbf{Y}=(\mathbf Y(\omega_1),\cdots, \mathbf Y(\omega_M))^{\top}$, sampling distance $h$, source number $n$\;
	\textbf{Input:} Region of test points $[TS, TE]$ and spacing of test points $TPS$\;
	1: Let $\hat M =\lfloor\frac{M-1}{2}\rfloor$, formulate the $(\hat M +1)\times (\hat M +1)$ Hankel matrix $\hat X$ from $\mathbf{Y}$\;
	2: Compute the singular vector of $\hat X$ as $\hat U(1), \hat U(2),\cdots,\hat U(\hat M +1)$ and formulate the noise space $\hat U_{2}=(\hat U(n+1),\cdots,\hat U(\hat M +1))$\;
	3: For test points $x$'s in $[TS, TE]$ evenly spaced by $TPS$, construct the test vector $\Phi(x)=(1,e^{ihx}, \cdots, e^{i\hat M hx})^{\top}$\; 
	4: Plot the MUSIC imaging functional $\hat J(x)=\frac{||\Phi(x)||_2}{||\hat U_2^*\Phi(x)||_2}$\;
	5: Select the peak locations $\hat y_j$'s in the $\hat J(x)$ by \textbf{Algorithm \ref{algo:peakselection}}\;
	\textbf{Return} $\hat y_j$'s.
	\label{algo:standardmusic}
\end{algorithm}

\begin{algorithm}[H]
    \caption{\textbf{Peak selection algorithm}}
    \textbf{Input:} Image $IMG = (f(x_1), \cdots, f(x_N))$\;
    \textbf{Input:} Peak compare range $PCR$, differential compare range $DCR$,  differential compare threshold $DCT$\;
    1: Initialize the Local maximum points $LMP = [\ ]$, peak points $PP = [\ ]$\;
    2: Differentiate the image $IMG$ to get the $DIMG = (f'(x_1), \cdots, f'(x_N))$\;
    3: \For{$j=1:N$}{
    \If{$f(x_j) = \max(f(x_{j-PCR}), f(x_{j-PCR+1}), \cdots, f(x_{j+PCR}))$}{
    $LMP$ appends $x_j$\;
    }}
    4: \For{$x_j$ in $LMP$}{
    \If{
    $\max(|f'(x_{j-DCR})|, |f'(x_{j-DCR+1})|, \cdots, |f'(x_{j+DCR})|)\geq DCT$
    }{$PP$ appends $x_j$\;}
    }
    \textbf{Return:} $PP$.
    \label{algo:peakselection}
\end{algorithm}

\subsection{Phase transition}
The derived bounds for the resolution limit $\mathcal D_{supp}^+$ of the location recovery in the super-resolution of positive sources implies a phase transition in the problem. Taking the logarithm of both sides of the two bounds, we can draw a conclusion that the location recovery is stable if
$$
\log(SNR) > (2n-1)\log(SRF)+(2n-1) \log \frac{C_3}{\pi}, 
$$
and may be unstable if
$$
\log(SNR) < (2n-1)\log(SRF)+(2n-1) \log \frac{C_4}{\pi},
$$
for certain constants $C_3, C_4$. Similar to the number detection, we expect that in the parameter space of $\log SNR-\log SRF$, there exist two lines both with slope $2n-1$ such that the location recovery is stable for cases above the first line and unstable for cases below the second. This phase transition phenomenon has been demonstrated numerically using the Matrix Pencil method, MUSIC and ESPRIT in \cite{batenkov2019super, liao2016music, li2021stable, li2020super} for resolving general sparse sources.

In what follows, we shall conduct numerical experiments to demonstrate the phase transition phenomenon for the MUSIC algorithm in the super-resolution of positive sources. For simplicity, we fix $\Omega=1$ and consider $n=2$ or $4$ positive point sources separated with minimum separation $d_{\min}$. We perform $10000$ random experiments (the randomness is in the choice of $(d_{\min},\sigma, y_j, a_j)$ to recover the source locations using \textbf{Algorithm \ref{algo:standardmusic}}. The recovery is deemed stable only if $n$ locations $\hat y_j$'s are recovered and they are in a $\frac{d_{\min}}{2}$-neighborhood of the ground truth; see \textbf{Algorithm \ref{algo:singleexperiemnt}} for details in a single experiment. As is shown in Figure \ref{fig:onedsupportphasetransition}, in each case, two lines with slope $2n-1$ strictly separate the blue points (stable recoveries) and red points (unstable recoveries), and in-between is the phase transition region. This is exactly the predicted phase transition phenomenon by our theory. It also demonstrates that the MUSIC can resolve the location of positive point sources with optimal resolution order. 


\begin{algorithm*}[H]\label{algo:singleexperiemnt}
	\caption{\textbf{A single experiment}}	
	\textbf{Input:} Sources $\mu=\sum_{j=1}^{n}a_j \delta_{y_j}$, 
	source number $n$;\\
	\textbf{Input:} Measurements: $\mathbf{Y}$\\
	Input source number $n$ and measurement $\vect Y$ to \textbf{Algorithm \ref{algo:standardmusic}} and save the output as $\hat y_1, \cdots, \hat y_k$, which are ordered in an increasing manner\; 
	\eIf{k==n}{
		Compute the reconstruction error for the source location $y_j$ that $e_j:= |\hat y_j - y_j|$\; 
		\eIf{$\max_{j=1,\cdots,n} e_j< \frac{\min_{p\neq j}|y_p- y_j|}{2}$}
		{Return Stable}
		{Return Unstable}
	}{Return Unstable.}
\end{algorithm*}

\begin{figure}[!h]
	\centering
	\begin{subfigure}[b]{0.28\textwidth}
		\centering
		\includegraphics[width=\textwidth]{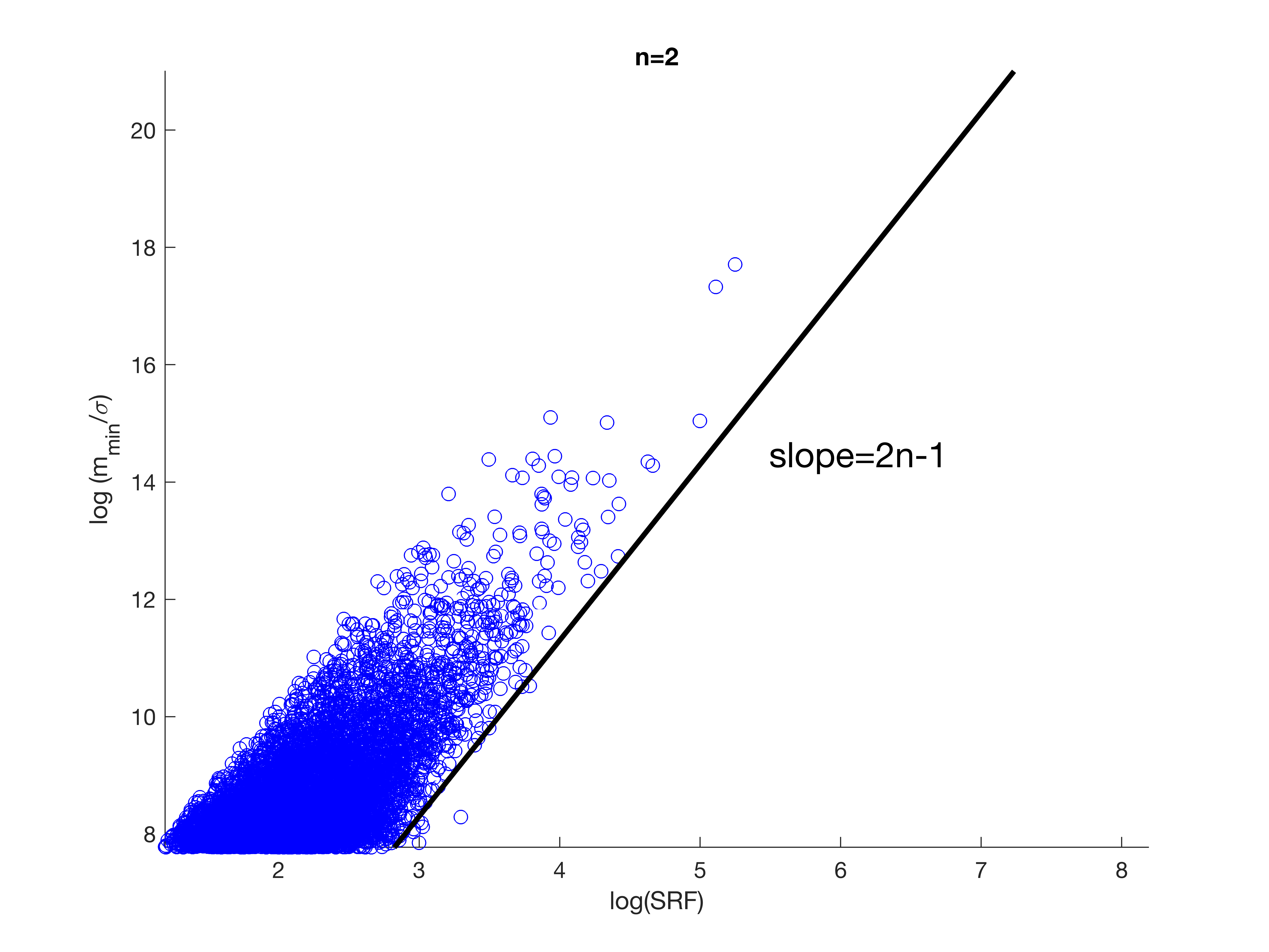}
		\caption{recovery success}
	\end{subfigure}
	\begin{subfigure}[b]{0.28\textwidth}
		\centering
		\includegraphics[width=\textwidth]{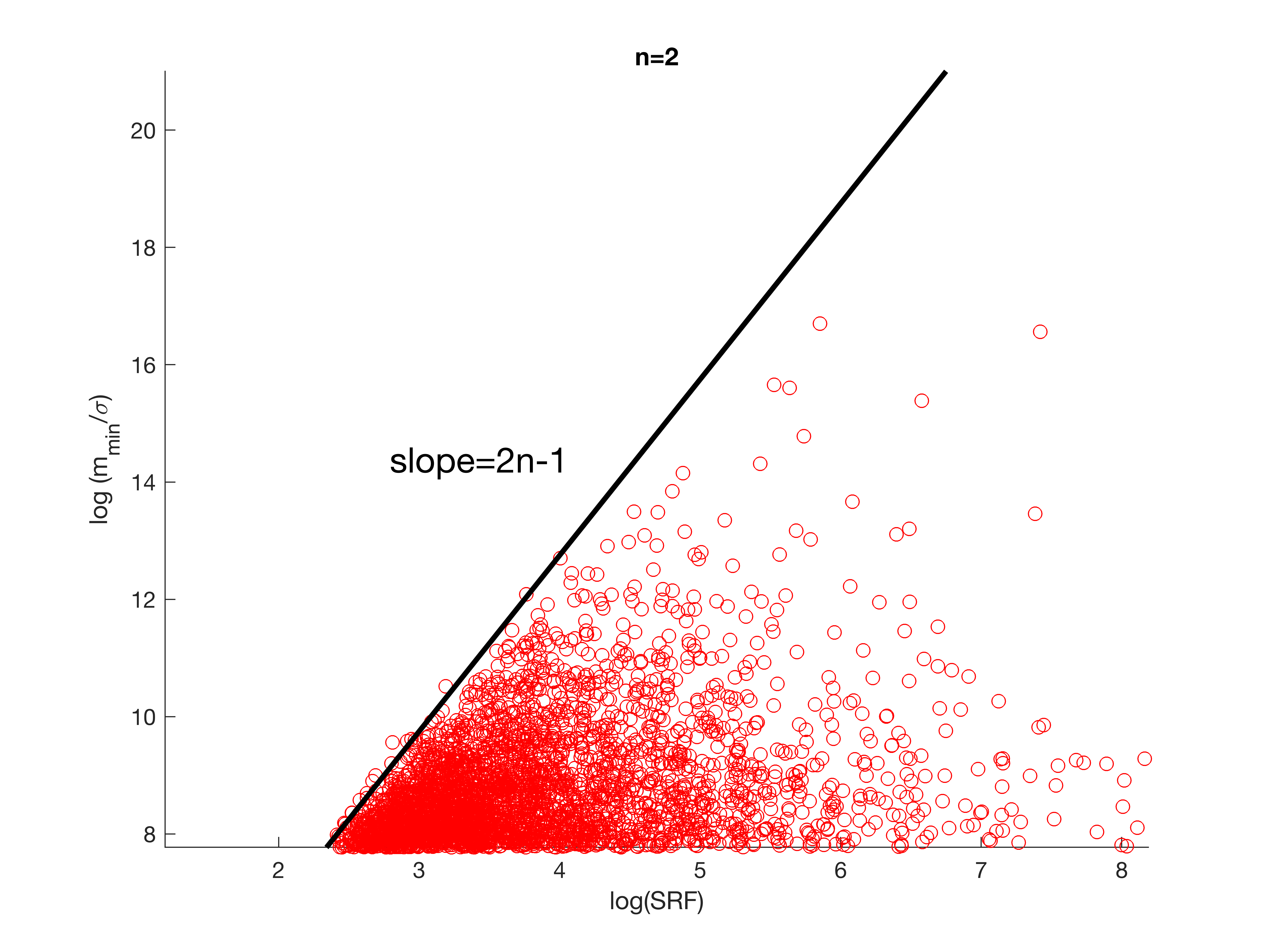}
		\caption{recovery fail}
	\end{subfigure}
	\begin{subfigure}[b]{0.28\textwidth}
		\centering
		\includegraphics[width=\textwidth]{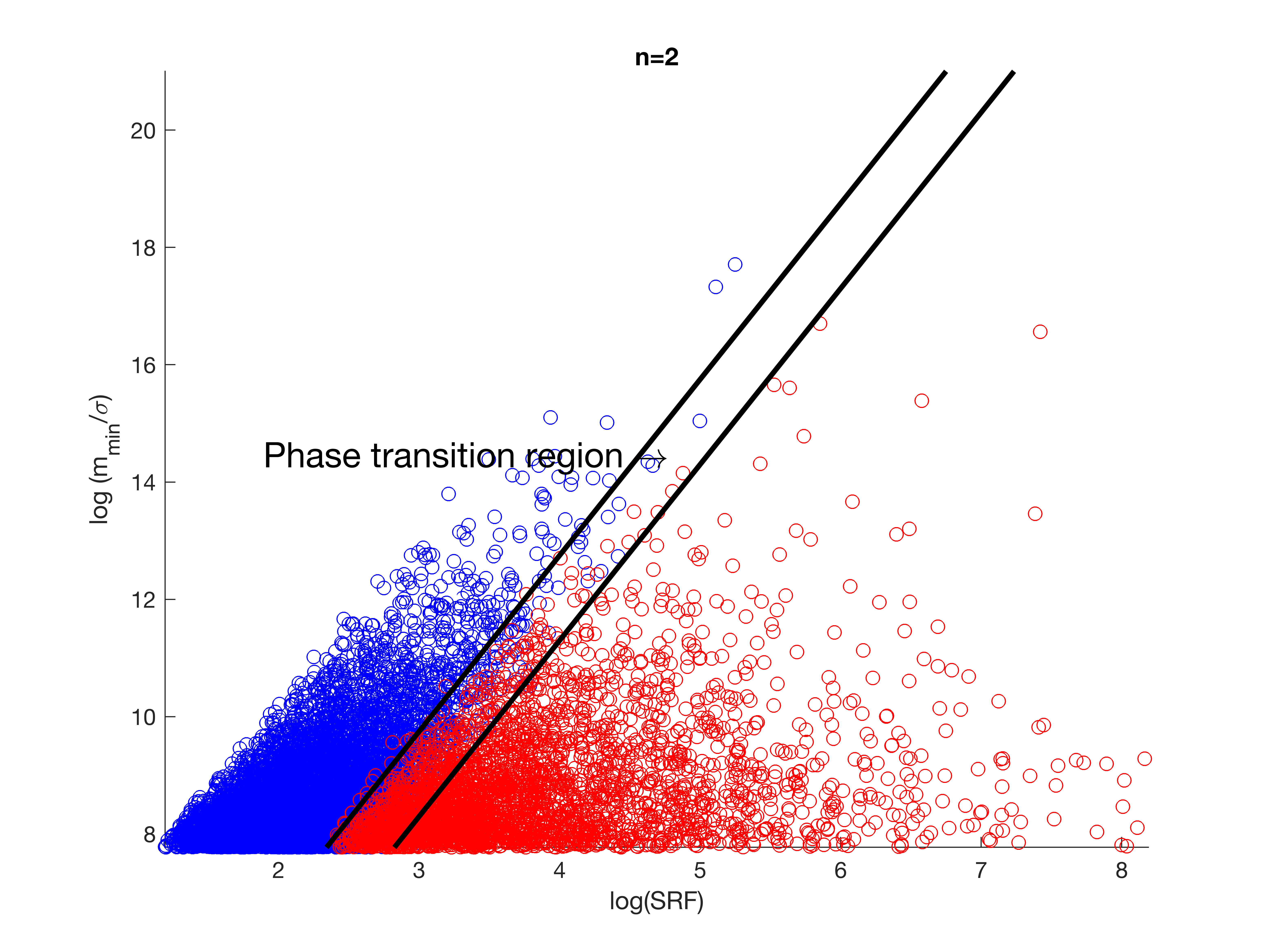}
		\caption{phase transition region}
	\end{subfigure}
	\begin{subfigure}[b]{0.28\textwidth}
		\centering
		\includegraphics[width=\textwidth]{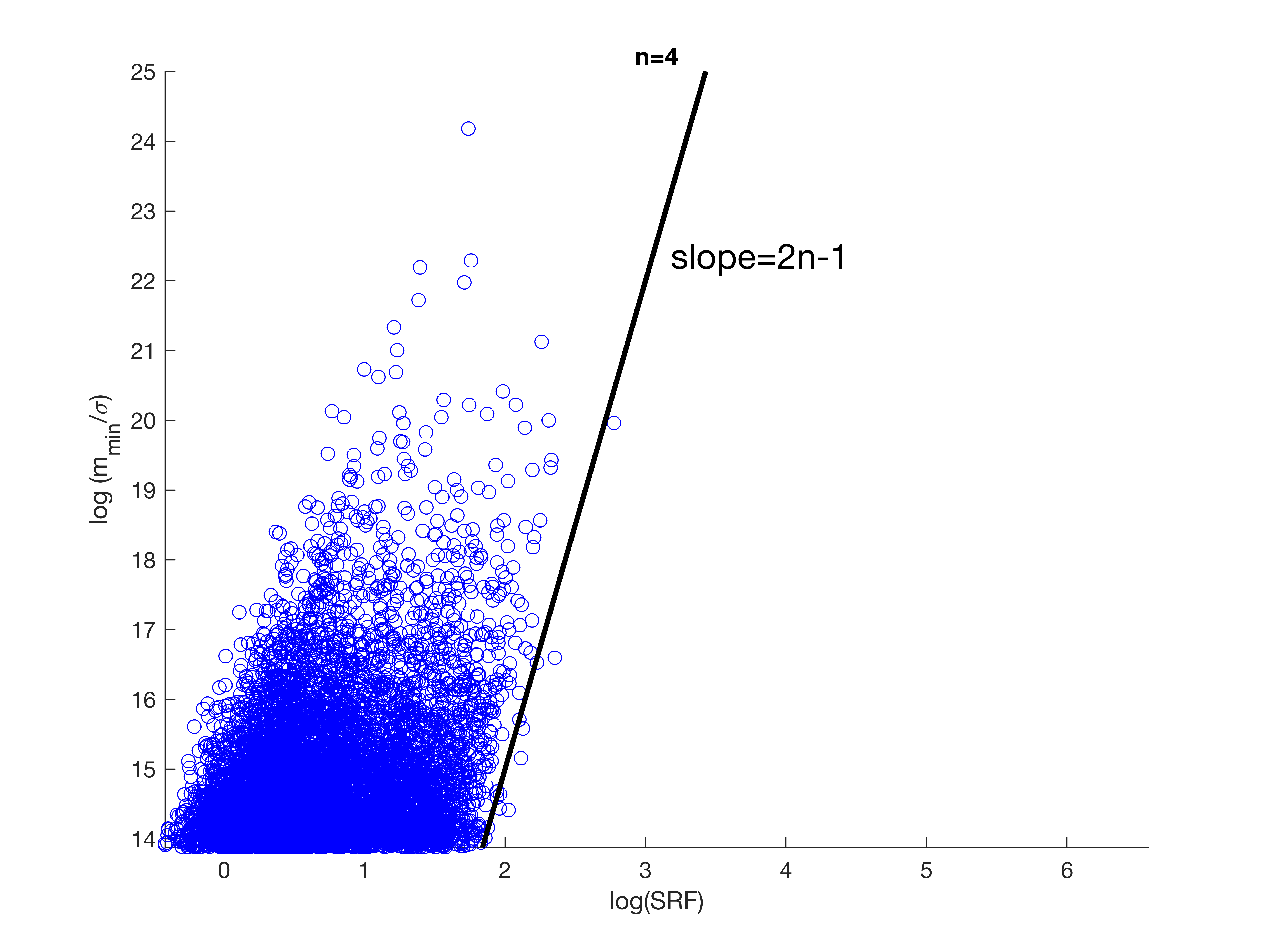}
		\caption{recovery success}
	\end{subfigure}
	\begin{subfigure}[b]{0.28\textwidth}
		\centering
		\includegraphics[width=\textwidth]{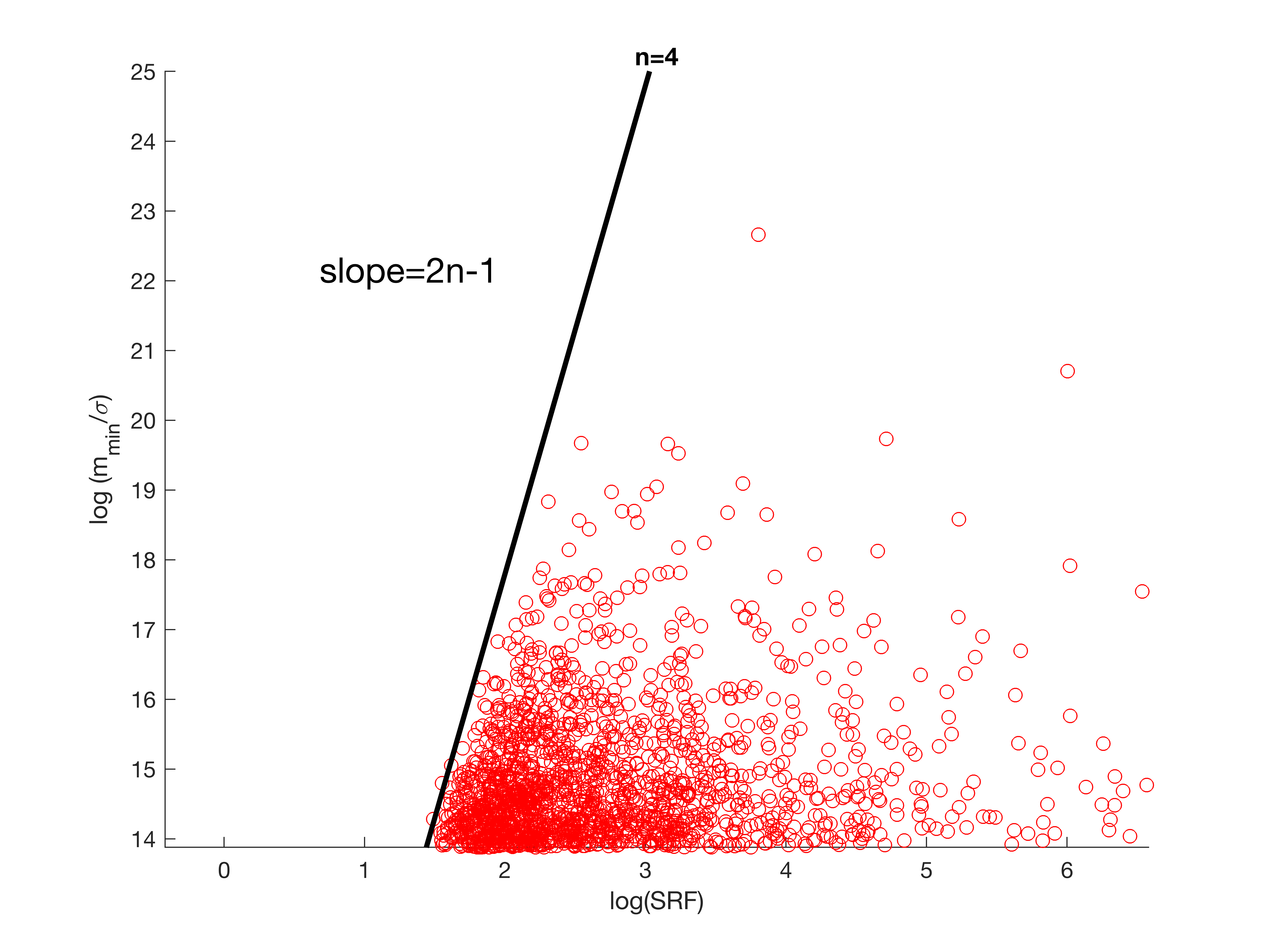}
		\caption{recovery fail}
	\end{subfigure}
	\begin{subfigure}[b]{0.28\textwidth}
		\centering
		\includegraphics[width=\textwidth]{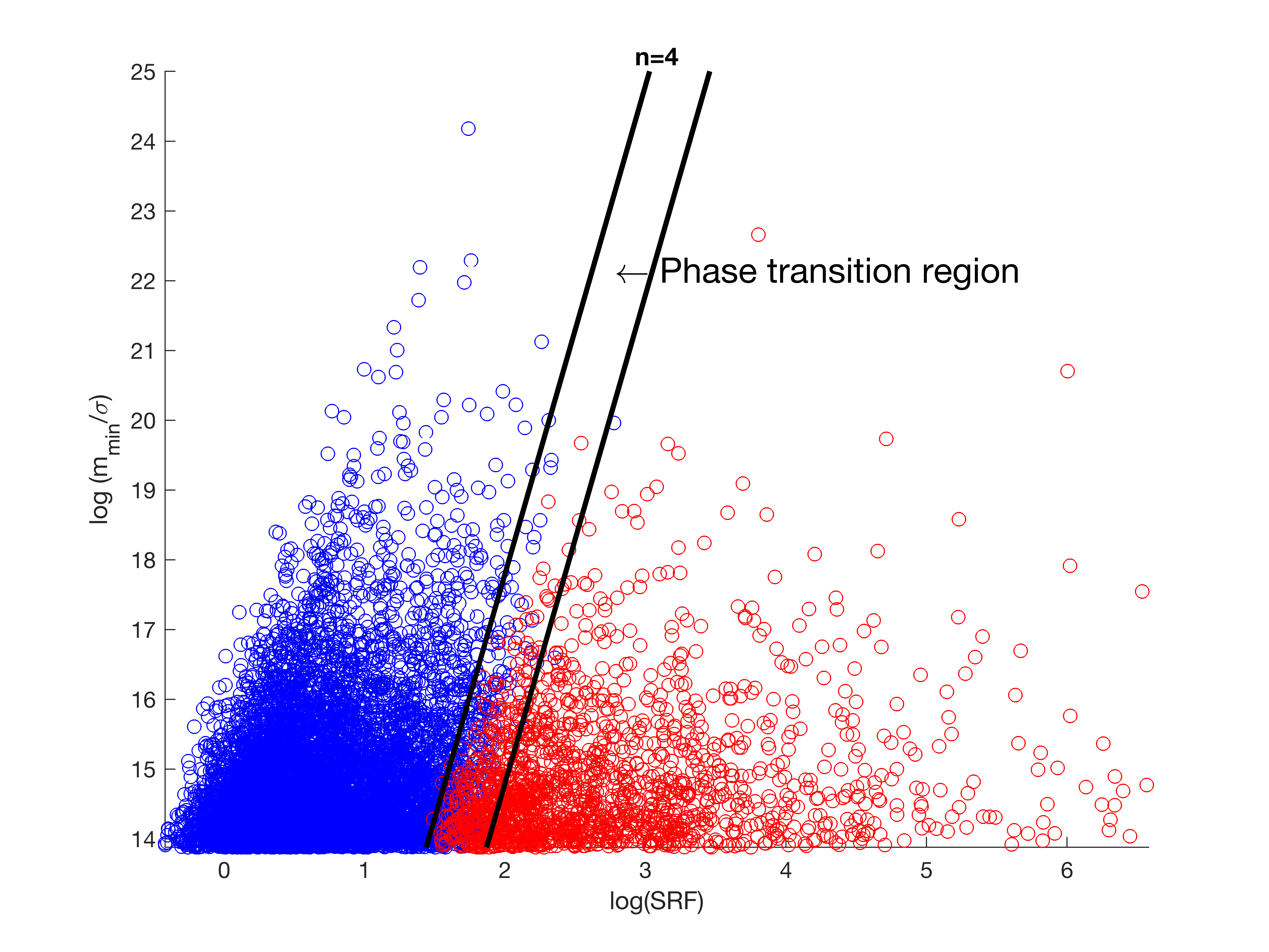}
		\caption{phase transition region}
	\end{subfigure}
	\caption{Plots of the stable and the unstable location recoveries by \textbf{Algorithm \ref{algo:standardmusic}} in view of the relation between $\log(SRF)$ and $\log(\frac{m_{\min}}{\sigma})$. (a) illustrates that the locations of two positive point sources can be stably recovered if $\log(\frac{m_{\min}}{\sigma})$ is above a line of slope $3$ in the parameter space. Conversely, for the same case, (b) shows that the locations cannot be stably recovered by MUSIC if $\log(\frac{m_{\min}}{\sigma})$ falls below another line of slope $3$. (c) highlights the phase transition region which is bounded by the black slashes in (a) and (b). (d),(e) and (f) illustrate parallel results for four positive point sources.}
	\label{fig:onedsupportphasetransition}
\end{figure}

\section{Conclusions and future works}
In this paper, we have introduced the resolution limit for respectively the number detection and the location recovery in the super-resolution of positive sources. We have quantitatively characterized the two limits by establishing their sharp upper and lower bounds. We have also verified the phase transition phenomena that predicted by our theory in the number detection and support recovery problems. 

Our new technique provides a way to analyze the resolving capability of the super-resolution of positive sources. The applications of the technique introduced here to other problems will be presented in a near future.

\section{Proofs of results in Section \ref{section:onedresolutionlimit}}\label{section:proofofmainresult}
We first introduce some notation and lemmas that are used in the following proofs. Set 
\begin{equation}\label{equ:phiformula}
	\phi_{s}(t) = \left(1, t, \cdots, t^s\right)^{\top}.
\end{equation}
We recall the Stirling formula that 
\begin{equation}\label{stirlingformula0}
	\sqrt{2\pi} n^{n+\frac{1}{2}}e^{-n}\leq n! \leq e n^{n+\frac{1}{2}}e^{-n}.
\end{equation}
\begin{lem}{\label{lem:invervandermonde}}
	Let $t_1, \cdots, t_k$ be $k$ different real numbers and let $t$ be a real number. We have
	\[
	\left(D_k(k-1)^{-1}\phi_{k-1}(t)\right)_{j}=\Pi_{1\leq q\leq k,q\neq j}\frac{t- t_q}{t_j- t_q},
	\]
	where $D_k(k-1):=  \big(\phi_{k-1}(t_1),\cdots,\phi_{k-1}(t_k)\big)$ with $\phi_{k-1}(\cdot)$ defined by (\ref{equ:phiformula}). 
\end{lem}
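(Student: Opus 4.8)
The plan is to recognize $D_k(k-1)$ as a Vandermonde matrix and to identify the target vector through Lagrange interpolation. First I would observe that, by the definition of $\phi_{k-1}$ in (\ref{equ:phiformula}), the matrix $D_k(k-1)=\big(\phi_{k-1}(t_1),\dots,\phi_{k-1}(t_k)\big)$ is the $k\times k$ Vandermonde matrix whose $j$-th column is $(1,t_j,\dots,t_j^{\,k-1})^{\top}$. Since the $t_j$ are pairwise distinct, its determinant $\prod_{p<q}(t_q-t_p)$ is nonzero, so $D_k(k-1)$ is invertible and the vector $\mathbf c:=D_k(k-1)^{-1}\phi_{k-1}(t)$ is well defined and is the unique solution of the linear system $D_k(k-1)\,\mathbf c=\phi_{k-1}(t)$.

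Next I would write this system out coordinate by coordinate. Reading off the row corresponding to the power $t^{\,i}$ gives the $k$ scalar equations $\sum_{j=1}^k c_j\,t_j^{\,i}=t^{\,i}$ for $i=0,1,\dots,k-1$. Thus proving the lemma reduces to verifying that the explicit choice $c_j=\prod_{1\le q\le k,\,q\neq j}\frac{t-t_q}{t_j-t_q}$ satisfies exactly these equations; by the invertibility established above this forced solution is then $\mathbf c$ itself, which is the claim.

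The key step is to recognize these candidate coefficients as the Lagrange cardinal polynomials evaluated at $t$. I would introduce $L_j(x)=\prod_{q\neq j}\frac{x-t_q}{t_j-t_q}$ associated with the nodes $t_1,\dots,t_k$, which have degree $k-1$ and satisfy $L_j(t_p)=\delta_{jp}$. Because Lagrange interpolation at these $k$ distinct nodes reproduces every polynomial of degree at most $k-1$ exactly, each monomial $x^{\,i}$ with $0\le i\le k-1$ obeys the polynomial identity $x^{\,i}=\sum_{j=1}^k t_j^{\,i}\,L_j(x)$. Evaluating at $x=t$ yields $t^{\,i}=\sum_{j=1}^k t_j^{\,i}\,L_j(t)$ for every $i=0,\dots,k-1$, which is precisely the system above with $c_j=L_j(t)$. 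Hence $\mathbf c=\big(L_1(t),\dots,L_k(t)\big)^{\top}$, and componentwise this is the asserted formula.

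I do not expect a genuine obstacle here: the only point requiring care is that the monomial-reproduction identity holds \emph{exactly} rather than approximately, which is guaranteed since $\deg x^{\,i}\le k-1$ lies within the degree of the interpolation space. If one prefers to avoid invoking interpolation theory as a black box, the same identity follows directly: the polynomial $p(x):=x^{\,i}-\sum_{j=1}^k t_j^{\,i}L_j(x)$ has degree at most $k-1$ and vanishes at the $k$ distinct points $t_1,\dots,t_k$, hence $p\equiv 0$.
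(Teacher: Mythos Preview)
Your proof is correct and rests on the same idea as the paper's: both arguments identify the components of $D_k(k-1)^{-1}\phi_{k-1}(t)$ with the Lagrange basis polynomials $L_j(t)=\prod_{q\neq j}\frac{t-t_q}{t_j-t_q}$. The only difference is organizational: the paper reads the $j$-th row of $D_k(k-1)^{-1}$ as the coefficient vector of a degree-$(k-1)$ polynomial $P_j$, notes from $D_k(k-1)^{-1}D_k(k-1)=I$ that $P_j(t_p)=\delta_{jp}$, and concludes $P_j=L_j$; you instead verify directly that $c_j=L_j(t)$ solves $D_k(k-1)\mathbf c=\phi_{k-1}(t)$ via the monomial-reproduction identity $x^i=\sum_j t_j^{\,i}L_j(x)$ for $0\le i\le k-1$.
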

\begin{proof}
	This is \cite[Lemma 5]{liu2021theorylse}. For the reader's convenience, we present a simple proof here. We denote $\left(D_{k}(k-1)^{-1}\right)_{jq}=b_{jq}$. Observe that
	$$
	\left(D_k(k-1)^{-1}\phi_{k-1}(t)\right)_{j}=\sum_{q=1}^{k}b_{jq}t^{q-1}.
	$$
	We have 
	\[\sum_{q=1}^{k}b_{jq}(t_p)^{q-1}=\delta_{jp},\ \forall j,p=1,\cdots,k,\]
	where $\delta_{jp}$ is the Kronecker delta function. Then the polynomial $P_j(x)=\sum_{q=1}^{k}b_{jq}x^{q-1}$ satisfies $P_{j}(t_1)=0,\cdots,P_j(t_{j-1})=0,P_j(t_j)=1,P_j(t_{j+1})=0,\cdots,P_j(t_k)=0$. Therefore, it must be the Lagrange polynomial
	\[
	P_j(x)=\Pi_{1\leq q\leq k,q\neq j}\frac{x-t_q}{t_j-t_q}.
	\]
	It follows that
	\begin{align*}
		\left(D_k(k-1)^{-1}\phi_{k-1}(t)\right)_{j}=\Pi_{1\leq q\leq k,q\neq j}\frac{ t- t_q}{t_j-t_q}.
	\end{align*}
\end{proof}

\subsection{Proof of Theorem \ref{thm:numberlowerboundthm0}}
\begin{proof}\textbf{Step 1.} Let 
	\begin{equation}\label{equ:numberlowerboundequ1}
		\tau = \frac{e^{-1}}{\Omega}\Big(\frac{\sigma}{m_{\min}}\Big)^{\frac{1}{2n-2}}
	\end{equation} 
	and $t_1=-(n-1)\tau, t_2=-(n-2)\tau,\cdots, t_n=0, t_{n+1}=\tau,\cdots, t_{2n-1}=(n-1)\tau$. Consider the following system of linear equations:  
	\begin{equation}\label{equ:numberlowerboundthm0equ1}
		Aa=0, 
	\end{equation}
	where $A=\big(\phi_{2n-3}(t_1),\cdots,\phi_{2n-3}(t_{2n-1})\big)$ with $\phi_{2n-3}(\cdot)$ being defined by (\ref{equ:phiformula}). Since $A$ is underdetermined, there exists a nontrivial solution $a=(a_1,\cdots,a_{2n-1})^{\top}$ to (\ref{equ:numberlowerboundthm0equ1}). By the linear independence of the any $(2n-2)$ column vectors of $A$, we can show that all $a_j$'s are nonzero. By a scaling of $a$, we can assume that $a_{2n-1}>0$ and
	\begin{equation}\label{equ:numberlowerboundthm0equ4}
		\min_{1\leq j\leq n}|a_{2j-1}|=m_{\min}.
	\end{equation}
	We define
	\[
	\mu=\sum_{j=1}^{n}a_{2j-1} \delta_{t_{2j-1}},\quad  \hat \mu=\sum_{j=1}^{n-1}-a_{2j}\delta_{t_{2j}}.
	\]
	We shall show that the intensities in $\hat \mu$ and $\mu$ are all positive and $||\mathcal F [\hat \mu]-\mathcal F [\mu]||_{\infty}<\sigma$ in the subsequent steps. 
	
	\textbf{Step 2.}
	We first analyze the sign of each $a_j, j=1, \cdots, 2n-1,$ based on $a_{2n-1}>0$. The equation (\ref{equ:numberlowerboundthm0equ1}) implies that
	\[
	-a_{2n-1}\phi_{2n-3}(t_{2n-1}) = \big(\phi_{2n-3}(t_{1}), \cdots, \phi_{2n-3}(t_{2n-2})\big)(a_{1},\cdots, a_{2n-2})^{\top},
	\]
	and hence
	\[
	-a_{2n-1} \left(\phi_{2n-3}(t_{1}), \cdots, \phi_{2n-3}(t_{2n-2})\right)^{-1}\phi_{2n-3}(t_{2n-1}) = (a_{1},\cdots, a_{2n-2})^{\top}.
	\]
	Together with Lemma \ref{lem:invervandermonde}, we have 
	\begin{equation}\label{equ:proofnumberlowerequ2}
		-a_{2n-1}\Pi_{1\leq q\leq 2n-2, q\neq j}\frac{t_{2n-1}-t_{q}}{t_{j}-t_{q}}= a_{j},
	\end{equation}
	for $j=1, \cdots, 2n-2$. Observe first that $\Pi_{1\leq q\leq 2n-2, q\neq j}(t_{2n-1}-t_{q})$ is always positive for $1\leq j\leq 2n-2$. For $j=2n-2$, since $a_{2n-1}>0$, $-a_{2n-1} \Pi_{1\leq q\leq 2n-2, q\neq 2n-2}(t_{2n-2}-t_q)$ is negative in (\ref{equ:proofnumberlowerequ2}). Thus we have $a_{2n-2}<0$. In the same fashion, we see that $a_{j}<0$ for even $j$ and $a_j>0$ for odd $j$. Hence the intensities in $\hat \mu$ and $\mu$ are all positive.

	\textbf{Step 3.}
	We demonstrate that $||\mathcal F [\hat \mu]-\mathcal F [\mu]||_{\infty}<\sigma$. Observe that
	\begin{equation}\label{equ:numberlowerboundthm0equ3}
		||\mathcal F[\hat \mu]-\mathcal F[\mu]||_{\infty}= \max_{x\in [-\Omega, \Omega]} |\mathcal F(\gamma)(x)|, 
	\end{equation}
	where $\gamma=\sum_{j=1}^{2n-1}a_j \delta_{t_j}$ and 
	\begin{equation}\label{Taylorseries1}
		\mathcal F[\gamma](x)=\sum_{j=1}^{2n-1} a_j e^{it_j x}=\sum_{j=1}^{2n-1}a_j \sum_{k=0}^{\infty}\frac{(it_jx)^k}{k!}=\sum_{k=0}^{\infty}Q_{k}(\gamma)\frac{(ix)^k}{k!}. 
	\end{equation} 
	Here, $Q_{k}(\gamma)=\sum_{j=1}^{2n-1}a_j t_{j}^k$.
	By (\ref{equ:numberlowerboundthm0equ1}), we have 
	$Q_{k}(\gamma)=0, k=0,\cdots,2n-3$. We next estimate $Q_{k}(\gamma)$ for $k> 2n-3$.

	\textbf{Step 4.}	
	We estimate $\sum_{j=1}^{2n-1}|a_{j}|$ first.
	We begin by ordering $a_j$'s such that
	\[
	m_{\min}=|a_{j_1}|\leq |a_{j_2}|\leq  \cdots \leq |a_{j_{2n-1}}|.
	\]
	Then (\ref{equ:numberlowerboundthm0equ1}) implies that
	\[
	a_{j_1}\phi_{2n-3}(t_{j_1}) = \big(\phi_{2n-3}(t_{j_2}), \cdots, \phi_{2n-3}(t_{j_{2n-1}})\big)(-a_{j_2},\cdots, -a_{j_{2n-1}})^{\top},
	\]
	and hence
	\[
	a_{j_1} \left(\phi_{2n-3}(t_{j_2}), \cdots, \phi_{2n-3}(t_{j_{2n-1}})\right)^{-1}\phi_{2n-3}(t_{j_1}) = (-a_{j_2},\cdots, -a_{j_{2n-1}})^{\top}.
	\]
	Together with Lemma \ref{lem:invervandermonde}, we have 
	\[
	a_{j_1}  \Pi_{2\leq q\leq 2n-2}\frac{t_{j_1}-t_{j_q}}{t_{j_{2n-1}}-t_{j_q}}= -a_{j_{2n-1}}.
	\]
	Further, 
	\begin{equation}\label{equ:proofnumberlower0}
		\begin{aligned}
			\babs{a_{j_{2n-1}}}=& \babs{a_{j_1}}  \Pi_{2\leq q\leq 2n-2}\frac{\babs{t_{j_1}-t_{j_q}}}{\babs{t_{j_{2n-1}}-t_{j_q}}} = \babs{a_{j_1}}  \Pi_{2\leq q\leq 2n-2}\frac{\babs{t_{j_1}-t_{j_q}}}{\babs{t_{j_{2n-1}}-t_{j_q}}} \frac{\babs{t_{j_1}-t_{j_{2n-1}}}}{\babs{t_{j_{2n-1}}-t_{j_{1}}}}\\
			= &\babs{a_{j_1}} \frac{ \Pi_{2\leq q\leq 2n-1}\babs{t_{j_1}-t_{j_q}}}{ \Pi_{1\leq q\leq 2n-2}\babs{t_{j_{2n-1}}-t_{j_q}}}\leq \babs{a_{j_1}} \frac{\max_{j_1=1, \cdots, 2n-1} \Pi_{2\leq q\leq 2n-1}\babs{t_{j_1}-t_{j_q}}}{\min_{j_{2n-1}=1, \cdots, 2n-1} \Pi_{1\leq q\leq 2n-2}\babs{t_{j_{2n-1}}-t_{j_q}}}. 
		\end{aligned}
	\end{equation}
	Thus, based on the distribution of $t_j$'s and (\ref{equ:numberlowerboundthm0equ4}), we have
	\[
	|a_{j_{2n-1}}|\leq \frac{(2n-2)!}{\left((n-1)!\right)^2}|a_{j_1}|\leq  \frac{(2n-2)!}{\left((n-1)!\right)^2}m_{\min},
	\]
	and consequently,
	\begin{equation}\label{equ:numberlowerboundthm0equ2}
		\sum_{j=1}^{2n-1}|a_j| = 	\sum_{q=1}^{2n-1}|a_{j_q}| \leq (2n-1) |a_{j_{2n-1}}|\leq  \frac{(2n-1)!}{\left((n-1)!\right)^2}m_{\min}.
	\end{equation}
	It follows that for $k\geq 2n-2$,
	\begin{align*}
		|Q_{k}(\gamma)|=&|\sum_{j=1}^{2n-1}a_j t_j^{k}|\leq \sum_{j=1}^{2n-1}|a_j| \big((n-1)\tau\big)^{k}\leq  \frac{(2n-1)!}{\left((n-1)!\right)^2}m_{\min} \big((n-1)\tau\big)^{k}.
	\end{align*}
	
	\textbf{Step 5.} Using (\ref{Taylorseries1}), we have
	\begin{align*}
		\max_{x\in [-\Omega, \Omega]}\left|\mathcal F[\gamma](x)\right|
		\leq& \sum_{k\geq 2n-2}\frac{(2n-1)!}{\left((n-1)!\right)^2}m_{\min} \big((n-1)\tau\big)^{k}\frac{\Omega^k}{k!}\\
		= &\frac{(2n-1)!m_{\min}(n-1)^{2n-2}(\tau\Omega)^{2n-2}}{\left((n-1)!\right)^2(2n-2)!}\ \sum_{k=0}^{+\infty}\frac{(\tau\Omega)^{k}(2n-2)!(n-1)^k}{(k+2n-2)!}\\
		< &\frac{(2n-1)m_{\min}(n-1)^{2n-2}(\tau\Omega)^{2n-2}}{\left((n-1)!\right)^2}\ \sum_{k=0}^{+\infty}\left(\frac{\tau\Omega}{2}\right)^k\\
		\leq &\frac{(2n-1)m_{\min}(n-1)^{2n-2}(\tau\Omega)^{2n-2}}{\left((n-1)!\right)^2} \frac{1}{0.8} \quad \left(\text{by (\ref{equ:numberlowerboundequ1}), $\frac{\tau\Omega}{2}\leq 0.2$}\right)\\
		\leq& \frac{(2n-1)m_{\min}}{2\pi(n-1)}(e\tau\Omega)^{2n-2} \frac{1}{0.8}. \quad \Big(\text{by (\ref{stirlingformula0})}\Big)
	\end{align*}
	Finally, using (\ref{equ:numberlowerboundequ1}) and the inequality that $\frac{(2n-1)}{2\pi(n-1)}\frac{1}{0.8}<1$, we obtain
	\[
	\max_{x\in [-\Omega, \Omega]}\left|\mathcal F[\gamma](x)\right|<\sigma.
	\]	
	This completes the proof.
\end{proof}

\subsection{Proof of Theorem \ref{thm:supportlowerboundthm0}}
\begin{proof} Let $t_1=-(n-\frac{1}{2})\tau,t_2=-(n-\frac{3}{2})\tau,\cdots, t_n=-\frac{\tau}{2}, t_{n+1}=\frac{\tau}{2},\cdots, t_{2n}=(n-\frac{1}{2})\tau$. Consider the following system of linear equations: 
	\begin{equation}
		Aa=0,
	\end{equation}
	where $A=\big(\phi_{2n-2}(t_1),\cdots,\phi_{2n-2}(t_{2n})\big)$ with $\phi_{2n-2}(\cdot)$ being defined in (\ref{equ:phiformula}). Since $A$ is underdetermined, there exists a nontrivial solution $a=(a_1,\cdots,a_{2n})^{\top}$. By the linear independence of any $(2n-1)$ column vectors of $A$, all $a_j$'s are nonzero. By a scaling of $a$, we can assume that $a_{2n}<0$ and
	\begin{equation}\label{equ:supplowerboundthm0equ4}
		\min_{1\leq j\leq n}|a_{2j-1}|=m_{\min}.
	\end{equation}
	We define
	\[
	\mu=\sum_{j=1}^{n}a_{2j-1} \delta_{t_{2j-1}},\quad  \hat \mu=\sum_{j=1}^{n}-a_{2j}\delta_{t_{2j}}.
	\]
	Similar to  Step 2 in the proof of Theorem \ref{thm:numberlowerboundthm0}, we can show that $a_{2j-1}>0, j=1, \cdots,n,$ and $a_{2j}<0, j=1, \cdots, n$. Thus both $\hat \mu$ and $\mu$ are positive measures. Similar to  Step 4 in the proof of Theorem \ref{thm:numberlowerboundthm0}, we can show that 
	\begin{equation}\label{equ:supportlowerboundthm0equ1}	\sum_{j=1}^{2n}|a_j|\leq\frac{(2n)!}{n!(n-1)!}m_{\min}.
	\end{equation}
	We now prove that $$||\mathcal F[\hat \mu]-\mathcal F[\mu]||_{\infty}\leq \max_{x\in[-\Omega,\Omega]}|\mathcal F[\gamma](x)|<\sigma,$$
	where $\gamma=\sum_{j=1}^{2n}a_j \delta_{t_j}$. Indeed, (\ref{equ:supportlowerboundthm0equ1}) implies, for $k\geq 2n-1$,
	\begin{align*}
		|\sum_{j=1}^{2n}a_j t_j^{k}|\leq \sum_{j=1}^{2n}|a_j| ((n-1/2)\tau)^{k}\leq \frac{(2n)!}{n!(n-1)!}m_{\min}  \left((n-{1}/{2})\tau\right)^{k}.
	\end{align*}
	On the other hand, similar to expansion (\ref{Taylorseries1}), we can expand $\mathcal F[\gamma]$ and have 
	\begin{align*}
		Q_{k}(\gamma)=0,\  k=0,\cdots,2n-2 \quad \text{and}\ |Q_{k}(\gamma)|\leq\frac{(2n)!}{n!(n-1)!}m_{\min}\left((n-{1}/{2})\tau\right)^{k}, k\geq 2n-1. 
	\end{align*}
	Therefore, for $|x|\leq \Omega$,
	\begin{align*}
		\max_{x\in [-\Omega, \Omega]}\left|\mathcal F[\gamma](x)\right|\leq& \sum_{k\geq 2n-1}\frac{(2n)!}{n!(n-1)!}m_{\min} \left((n-{1}/{2})\tau\right)^{k}\frac{|x|^k}{k!}\leq \sum_{k\geq 2n-1}\frac{(2n)!}{n!(n-1)!}m_{\min} \left((n-{1}/{2})\tau\right)^{k}\frac{\Omega^k}{k!}\\
		=& \frac{(2n)!m_{\min}(n-1/2)^{2n-1}(\tau\Omega)^{2n-1}}{n!(n-1)!(2n-1)!}\ \sum_{k=0}^{+\infty}\frac{(\tau\Omega)^{k}(2n-1)!(n-1/2)^k}{(k+2n-1)!}\\
		<& \frac{2nm_{\min}(n-1/2)^{2n-1}(\tau\Omega)^{2n-1}}{n!(n-1)!}\ \sum_{k=0}^{+\infty}\left(\frac{\tau\Omega}{2}\right)^k\\
		=&\frac{{2n}m_{\min}(n-1/2)^{2n-1}(\tau\Omega)^{2n-1}}{n!(n-1)!} \frac{1}{0.8} \quad \Big(\text{(\ref{supportlowerboundequ0}) implies $\frac{\tau \Omega}{2} \leq 0.2 $}\Big)\\
		\leq& \frac{nm_{\min}(n-1/2)^{2n-1}}{\pi n^{n+\frac{1}{2}}(n-1)^{n-\frac{1}{2}}}(e\tau\Omega)^{2n-1} \frac{1}{0.8}\quad \Big(\text{by (\ref{stirlingformula0})}\Big)\\
		\leq & \frac{n}{\pi(n-1/2)} m_{\min}(e\tau\Omega)^{2n-1} \frac{1}{0.8}\\
		<& \sigma. \quad \Big(\text{by (\ref{supportlowerboundequ0}) and $ \frac{n}{\pi(n-1/2)}\frac{1}{0.8}<1$} \Big)
	\end{align*} 
	It follows that $||\mathcal F[\hat \mu]-\mathcal F[\mu]||_{\infty}<\sigma$.
\end{proof}

\subsection{Proof of Proposition \ref{thm:supportlowerboundthm1}}
\begin{proof} \textbf{Step 1.} For $j\in\{1,2,\cdots,2n\}$, set $t_j=-\frac{sn-2}{2}\tau+\frac{(j-2)s}{2}\tau$ if $j$ is even and $t_j=t_{4\lceil\frac{j+1}{4}\rceil-2}+(-1)^{\frac{j+1}{2}}\tau$ otherwise. Consider the following system of linear equations: 
	\begin{equation*}
		Aa=0,
	\end{equation*}
	where $A=\big(\phi_{2n-2}(t_1),\cdots,\phi_{2n-2}(t_{2n})\big)$ with $\phi_{2n-2}(\cdot)$ defined in (\ref{equ:phiformula}). Since $A$ is underdetermined, there exists a nontrivial solution $a=(a_1,\cdots,a_{2n})^{\top}$. Also, by the linear independence of any $(2n-1)$ column vectors of A, we can show that all $a_j$'s are nonzero. By a scaling of $a$, we can assume that $a_{2n}>0$ and
	\begin{equation}\label{equ:supportlowerboundthm1equ4}
		\min_{1\leq j\leq n}|a_{2j}|=m_{\min}.
	\end{equation}
	We define
	\[
	\mu=\sum_{j=1}^{n}a_{2j} \delta_{t_{2j}},\quad  \hat \mu=\sum_{j=1}^{n}-a_{2j-1}\delta_{t_{2j-1}}.
	\]
	Similar to  Step 2 in the proof of Theorem \ref{thm:numberlowerboundthm0}, we can show that $a_{2j-1}<0, j=1, \cdots,n,$ and $a_{2j}>0, j=1, \cdots, n$. Thus, both $\hat \mu$ and $\mu$ are positive measures.

	\textbf{Step 2.} 	
	We now estimate $\sum_{j=1}^{2n}|a_{j}|$. Reorder $a_j$ such that
	\[
	m_{\min}=|a_{j_1}|\leq |a_{j_2}|\leq  \cdots \leq |a_{j_{2n}}|.
	\]
	Similar to  Step 4 in the proof of Theorem  \ref{thm:numberlowerboundthm0}, we have
	\begin{equation}\label{equ:proofsupplowerdistri1}
		a_{j_1}  \Pi_{2\leq q\leq 2n-1}\frac{t_{j_1}-t_{j_q}}{t_{j_{2n}}-t_{j_q}}= -a_{j_{2n}}.
	\end{equation}
	We next estimate $\Pi_{2\leq q\leq 2n-1}\left|\frac{t_{j_1}-t_{j_q}}{t_{j_{2n}}-t_{j_q}}\right|$. Note that 
	\begin{equation}
		\label{equ:product-of-distance1}
		\begin{split}
			\Pi_{2\leq q\leq 2n-1}\frac{\left|t_{j_1}-t_{j_q}\right|}{\left|t_{j_{2n}}-t_{j_q}\right|} &= \bigg(\Pi_{2\leq q\leq 2n-1}\frac{\left|t_{j_1}-t_{j_q}\right|}{\left|t_{j_{2n}}-t_{j_q}\right|}\bigg)\cdot \frac{\left|t_{j_1}-t_{j_{2n}}\right|}{\left|t_{j_{2n}}-t_{j_1}\right|} = \frac{\Pi_{2\leq q\leq 2n}\left|t_{j_1}-t_{j_q}\right|}{\Pi_{1\leq q\leq 2n-1}\left|t_{j_{2n}}-t_{j_q}\right|}\\
			&\leq\frac{\max_{j\in\{1,2,\dots,2n\}}\Pi_{i\in\{1,2,\dots,2n\}, i\neq j}\left|t_{i}-t_{j}\right|}{\min_{j\in\{1,2,\dots,2n\}}\Pi_{i\in\{1,2,\dots,2n\}, i\neq j}\left|t_{i}-t_{j}\right|}.  
		\end{split}
	\end{equation}
	We separate $\{t_j\}_{j=1,2,\dots,2n}$ into four classes: $C_1 = \{t_{4j-2}\}_{j=1}^{\lceil\frac{n}{2}\rceil}, C_2 = \{t_{4j}\}_{j=1}^{\lfloor\frac{n}{2}\rfloor}, C_3 =\{t_{4j-3}\}_{j=1}^{\lceil\frac{n}{2}\rceil}, C_4 = \{t_{4j-1}\}_{j=1}^{\lfloor\frac{n}{2}\rfloor}$; See Figure \ref{fig:locationdistribution} for an illustration. The points in each class are evenly-spaced, by which we can estimate the right-hand side of (\ref{equ:product-of-distance1}). 
	\begin{figure}[!h]
		\includegraphics[width=0.7\textwidth]{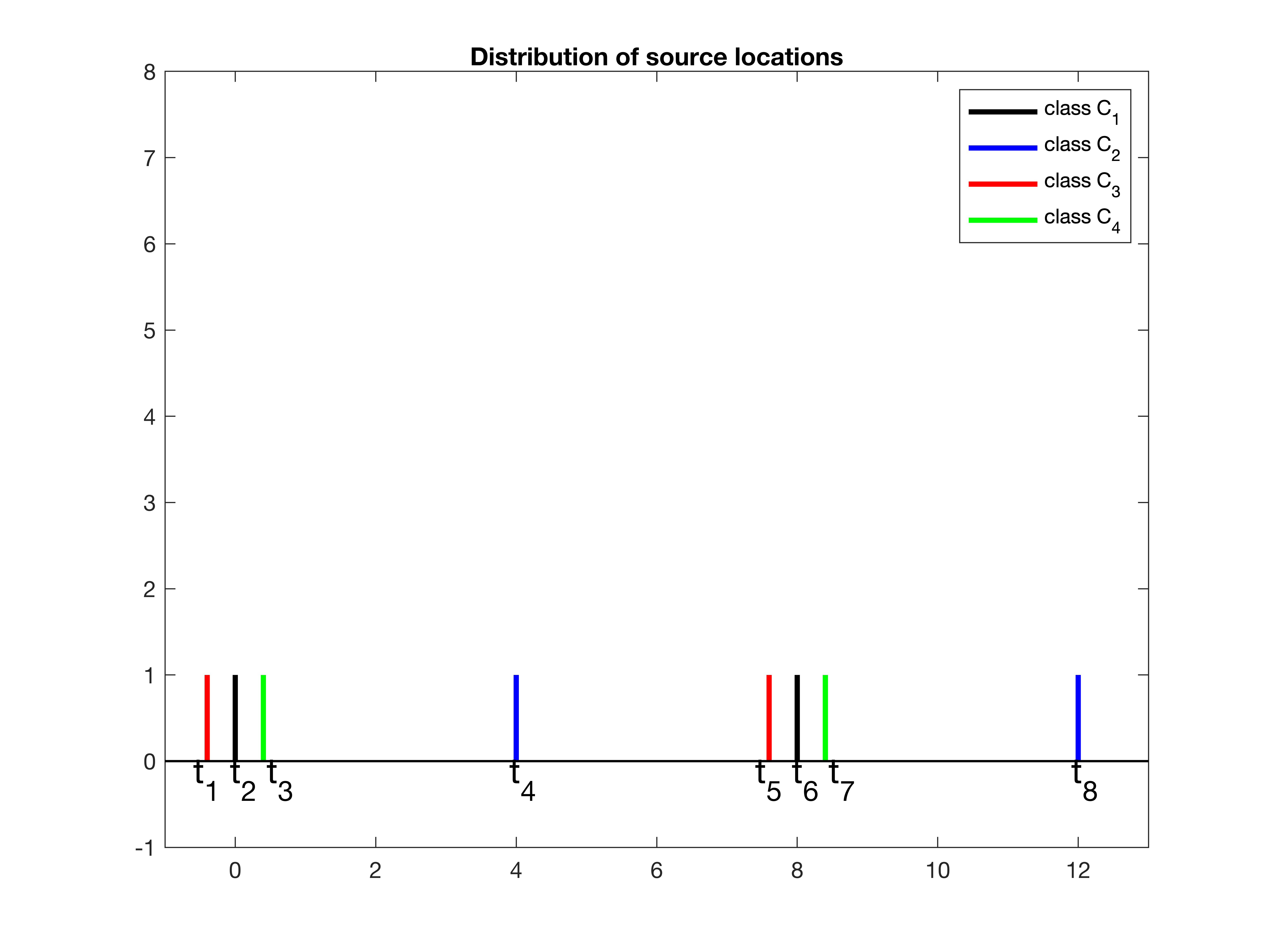}
		\centering
		\caption{Distribution of source locations.}
		\label{fig:locationdistribution}
	\end{figure}
	
	Note that
	\begin{equation}
		\label{equ:nominator-max}
		\begin{split}
			&\max_{j\in\{1,2,\dots,2n\}}\Pi_{p\in\{1,2,\dots,2n\}, p\neq j}|t_{p}-t_{j}|= \max_{k=1,2,3,4}\bigg(\max_{x\in C_k}\Pi_{y\in\{t_1,t_2\dots,t_{2n}\},y\neq x}|x-y|\bigg)\\
			\leq &  \max_{k=1,2,3,4}\bigg(\max_{x\in C_k}\Pi_{p=1,2,3,4}\Pi_{y\in C_p,y\neq x}|x-y|\bigg)=: \max_{k=1,2,3,4} c^{\max}_k,
		\end{split}
	\end{equation}
	and
	\begin{equation}
		\label{equ:denominator-min}
		\begin{split}
			&\min_{j\in\{1,2,\dots,2n\}}\Pi_{p\in\{1,2,\dots,2n\}, p\neq j}|t_{p}-t_{j}|= \min_{k=1,2,3,4}\bigg(\min_{x\in C_k}\Pi_{y\in\{t_1,t_2\dots,t_{2n}\},y\neq x}|x-y|\bigg)\\
			\geq &  \min_{k=1,2,3,4}\bigg(\min_{x\in C_k}\Pi_{p=1,2,3,4}\Pi_{y\in C_p,y\neq x}|x-y|\bigg)=: \min_{k=1,2,3,4} c^{\min}_k.
		\end{split}
	\end{equation}
	The estimates of $\max_{k=1,2,3,4} c^{\max}_k$ and $\min_{k=1,2,3,4} c^{\min}_k$ are detailed in Lemmas \ref{lem:c-max} and \ref{lem:c-min} in Appendix \ref{appendixA}. With the aid of them we control the left-hand side of \eqref{equ:product-of-distance1} that
	\begin{equation}
		\label{equ:product-of-distance}
		\begin{split}
			&\quad \Pi_{2\leq q\leq 2n-1}\frac{|t_{j_1}-t_{j_q}|}{|t_{j_{2n}}-t_{j_q}|} \leq \frac{\tau^{2n-1} s^{2n-1}\left(2\lceil\frac{n}{2}\rceil\right)!\left(2\lceil\frac{n}{2}\rceil-1\right)!}{\tau^{2n-1}s^{2n-3}\cdot \bigg((2\lfloor\frac{\lfloor\frac{n}{2}\rfloor}{2}\rfloor-1)!\bigg)^4}\leq  \frac{s^2e^{11}}{\pi^2}(n+1)^{10}2^{2n-8},
		\end{split}
	\end{equation}
	where the last inequality is obtained by Lemma \ref{numberlowerboundcalculate1} in the following step.\\
	\textbf{Step 3.}
	\begin{lem}\label{numberlowerboundcalculate1}
		For $n\geq 2$, we have
		\begin{align*}
			\frac{\left(2\lceil\frac{n}{2}\rceil\right)!\left(2\lceil\frac{n}{2}\rceil-1\right)!}{ \bigg((2\lfloor\frac{\lfloor\frac{n}{2}\rfloor}{2}\rfloor-1)!\bigg)^4}\leq \frac{e^{11}}{\pi^2}(n+1)^{10}2^{2n-8}.
		\end{align*}
	\end{lem}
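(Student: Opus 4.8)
The plan is to bound the numerator from above and the denominator from below via the two-sided Stirling bounds (\ref{stirlingformula0}), thereby reducing the statement to an elementary one-variable inequality in $n$. Writing $N:=2\lceil\frac{n}{2}\rceil$ and, using the identity $\lfloor\lfloor n/2\rfloor/2\rfloor=\lfloor n/4\rfloor$, $M:=2\lfloor n/4\rfloor-1$, the quantity to be bounded is $\frac{N!\,(N-1)!}{(M!)^4}=\frac{(N!)^2}{N\,(M!)^4}$. First I would record the elementary facts $n\le N\le n+1$ together with the exact value of $M$ in each residue class of $n$ modulo $4$, which yields $\frac{n-5}{2}\le M\le\frac{n}{2}-1$ and, crucially, $4M-2N\le-4$.

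For $n\ge 8$ (so that $M\ge 2$ and Stirling applies to every factorial) I would insert the upper bound $(N!)^2\le e^2N^{2N+1}e^{-2N}$ and the lower bound $(M!)^4\ge(2\pi)^2M^{4M+2}e^{-4M}$ from (\ref{stirlingformula0}), giving
\[
\frac{(N!)^2}{N(M!)^4}\le\frac{e^2}{4\pi^2}\,\frac{N^{2N}}{M^{4M+2}}\,e^{4M-2N}.
\]
Taking logarithms reduces the target to the single inequality
\[
2N\ln N-(4M+2)\ln M+(4M-2N)\le 9+10\ln(n+1)+(2n-6)\ln 2,
\]
where the factor $e^{11}/\pi^2$ and the $2^{-8}$ have been absorbed into the constant $9$ and the $(2n-6)\ln 2$ on the right.

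The heart of the argument is the cancellation of the dominant $n\ln n$ growth: substituting $N=n+\varepsilon$ with $\varepsilon\in\{0,1\}$ and $M=\frac{n}{2}+\delta$ with $\delta\in\{-1,-\frac32,-2,-\frac52\}$, the terms $2N\ln N-4M\ln M$ collapse to $2n\ln2$ plus a correction whose $\ln n$-coefficient is $2\varepsilon-4\delta$; combined with the $-2\ln M$ coming from the exponent $4M+2$, the total coefficient of $\ln n$ is at most $10$, attained exactly when $n\equiv3\pmod 4$, and is therefore absorbed by $10\ln(n+1)$. The residual $O(1)$ terms are controlled using $4M-2N\le-4$ and the remaining slack in the constant, completing the case $n\ge 8$.

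The main obstacle is precisely this tight matching to $(n+1)^{10}$: the crude bound $\delta\ge-\frac52$ alone would produce a $\ln n$-coefficient of $12$, so one must retain the exact residue-class values of $M$ and carefully account for the compensating $-2\ln M$ before the exponent is large. Finally, the finitely many remaining values $n\in\{2,3,4,5,6,7\}$ — for which $M\le1$ and the Stirling reduction is too weak, and where for $n=2,3$ the denominator factorial is degenerate and interpreted as the empty product $1$ — are verified by direct evaluation, each holding with an enormous margin.
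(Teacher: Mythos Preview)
Your proposal is correct and follows essentially the same approach as the paper: both apply the two-sided Stirling bounds (\ref{stirlingformula0}) to the numerator and denominator, reduce to an elementary inequality in $n$, and treat the small values of $n$ by direct computation. The paper checks $n\le 11$ by hand and then, for $n>11$, avoids your residue-class analysis by simply inserting the uniform worst-case bounds $2\lceil n/2\rceil\le n+1$ and $2\lfloor n/4\rfloor-1\ge\frac{n-5}{2}$; the resulting ratio factors cleanly as
\[
\frac{1}{4e\pi^2}\,(n+1)^{10}\Bigl(\frac{n+1}{\frac{n}{2}-\frac{5}{2}}\Bigr)^{2n-8}
=\frac{1}{4e\pi^2}\,(n+1)^{10}\,2^{2n-8}\Bigl(1+\frac{6}{n-5}\Bigr)^{2n-8},
\]
and the last factor is bounded by $e^{12}\cdot\bigl(1+\frac{6}{n-5}\bigr)^2\le 4e^{12}$ via $(1+x/m)^m\le e^x$. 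This delivers the exponent $10$ on $(n+1)$ and the factor $2^{2n-8}$ without ever splitting into cases. Your route recovers the same exponent $10$ by tracking the $\ln n$ coefficient $2\varepsilon-4\delta-2$ across the four residue classes, which is more bookkeeping but equally valid; the only point to be careful about is making the ``residual $O(1)$ terms'' step fully explicit, since the margin in the constant is not large.
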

	\begin{proof}
		Recall the Stirling approximation of factorial, that is, 
		\begin{equation}\label{stirlingformula}
			\sqrt{2\pi} n^{n+\frac{1}{2}}e^{-n}\leq n! \leq e n^{n+\frac{1}{2}}e^{-n}.
		\end{equation}
		For $n\leq 11$, the inequality can be checked by calculation. For $n>11$, we have 
		\begin{align*}
			&\frac{\left(2\lceil\frac{n}{2}\rceil\right)!\left(2\lceil\frac{n}{2}\rceil-1\right)!}{ \bigg((2\lfloor\frac{\lfloor\frac{n}{2}\rfloor}{2}\rfloor-1)!\bigg)^4}
			\leq\frac{ e^{2-4\lceil\frac{n}{2}\rceil+1}(2\lceil\frac{n}{2}\rceil)^{2\lceil\frac{n}{2}\rceil+\frac{1}{2}}\cdot(2\lceil\frac{n}{2}\rceil-1)^{2\lceil\frac{n}{2}\rceil-\frac{1}{2}}}{(\sqrt{2\pi})^4 (2\lfloor\frac{\lfloor\frac{n}{2}\rfloor}{2}\rfloor-1)^{4(2\lfloor\frac{\lfloor\frac{n}{2}\rfloor}{2}\rfloor-1)+2}e^{-4(2\lfloor\frac{\lfloor\frac{n}{2}\rfloor}{2}\rfloor-1)}}\\
			\leq& \frac{1}{4\pi^2}\frac{e^{3-2n}}{e^{-4(2\cdot\frac{n}{4}-1)}}\cdot\frac{(2(\frac{n}{2}+\frac{1}{2}))^{2n+2}}{(2(\frac{n}{4}-\frac{3}{4})-1)^{4(2(\frac{n}{4}-\frac{3}{4})-1)+2}}= \frac{1}{4e\pi^2}\frac{(n+1)^{2n+2}}{(\frac{n}{2}-\frac{5}{2})^{2n-8}}\\
			= & \frac{1}{4e\pi^2}(n+1)^{10}\left(2+\frac{6}{\frac{n}{2}-\frac{5}{2}}\right)^{2n-8}=  \frac{1}{4e\pi^2}(n+1)^{10}2^{2n-8}\left(1+\frac{6}{n-5}\right)^{12\cdot\frac{n-5}{6}+2}\\
			\leq & \frac{e^{11}}{4\pi^2}\left(1+\frac{6}{n-5}\right)^2(n+1)^{10}2^{2n-8} \leq \frac{e^{11}}{\pi^2}(n+1)^{10}2^{2n-8}.
		\end{align*}
	\end{proof}

	\textbf{Step 4.}
	Thus, combined (\ref{equ:proofsupplowerdistri1}) and (\ref{equ:product-of-distance}), we have
	\[
	\babs{a_{j_{2n}}}\leq \frac{e^{11}s^2}{\pi^2}(n+1)^{10}2^{2n-8}\babs{a_{j_1}},
	\]
	and consequently,
	\begin{equation}
		\sum_{j=1}^{2n}|a_j| \leq \sum_{q=1}^{2n}|a_{j_q}|\leq 2n  \frac{e^{11}s^2}{\pi^2}(n+1)^{10}2^{2n-8}m_{\min}.
	\end{equation}
	It then follows that for $k\geq 2n-1$,
	\begin{align*}
		\left|\sum_{j=1}^{2n-1}a_j t_j^{k}\right|\leq \sum_{j=1}^{2n-1}|a_j| \left(\frac{sn}{2}\tau\right)^{k}
		\leq&\frac{2n(n+1)^{10} e^{11}s^2}{\pi^2}2^{2n-8}m_{\min} \left(\frac{sn}{2}\tau\right)^{k}.
	\end{align*}
	
	\textbf{Step 5.}
	We now prove that $$||\mathcal F[\hat \mu]-\mathcal F[\mu]||_{\infty}\leq \max_{x\in[-\Omega,\Omega]}|\mathcal F[\gamma](x)|<\sigma,$$
	where $\gamma=\sum_{j=1}^{2n}a_j \delta_{t_j}$. On the other hand, similar to expansion (\ref{Taylorseries1}), we can expand $\mathcal F[\gamma]$ and have
	\begin{align*}
		Q_{k}(\gamma)=0,\  k=0,\cdots,2n-2 \quad \text{and}\ |Q_{k}(\gamma)|\leq \frac{2n(n+1)^{10} e^{11}s^22^{2n-8}}{\pi^2}m_{\min} \left(\frac{sn}{2}\tau\right)^{k}, \quad k\geq 2n-1. 
	\end{align*}
	Therefore, for $|x|\leq \Omega$ and $n\geq 6$, we have 
	\begin{align*}
		|\mathcal F[\gamma](x)|\leq& \sum_{k\geq 2n-1}\frac{2n(n+1)^{10} e^{11}s^2}{\pi^2}2^{2n-8}m_{\min} \left(\frac{sn}{2}\tau\right)^{k}\frac{|x|^k}{k!}\\
		\leq& \sum_{k\geq 2n-1}\frac{2n(n+1)^{10} e^{11}s^2}{\pi^2}2^{2n-8}m_{\min} \left(\frac{sn}{2}\tau\right)^{k}\frac{\Omega^k}{k!}\\
		\leq& \frac{2e^{11}}{\pi^2}\cdot\frac{s^2n(n+1)^{10}2^{2n-8}m_{\min}\cdot (\frac{sn\tau\Omega}{2})^{2n-1}}{(2n-1)!} \sum_{k=0}^{+\infty}\frac{(\frac{s\tau\Omega}{2})^{k}(2n-1)!n^k}{(k+2n-1)!}\\
		<&\frac{e^{11}}{2^6\pi^2}\cdot\frac{(\tau\Omega)^{2n-1}s^{2n+1}m_{\min}n(n+1)^{10}n^{2n-1}}{(2n-1)!}\cdot \sum_{k=0}^{+\infty}\left(\frac{s\tau\Omega}{2}\right)^k  \\
		\leq& \frac{e^{11}}{2^6\pi^2} (\tau\Omega)^{2n-1}s^{2n+1}m_{\min} 
		\frac{(n+1)^{10}n^{2n}e^{2n-1}}{\sqrt{2\pi}(2n-1)^{2n-\frac{1}{2}}} \sum_{k=0}^{+\infty}\left(\frac{s\tau\Omega}{2}\right)^k \quad \Big(\text{by Stirling's formula (\ref{stirlingformula0})}\Big)\\
		\leq& 38\cdot 0.2^{2n-1}\sigma\cdot \frac{n(n+1)^{10}}{\sqrt{2n-1}}\frac{n^{2n-1}}{(2n-1)^{2n-1}} \sum_{k=0}^{+\infty}\left(\frac{s\tau\Omega}{2}\right)^k\quad \Big(\text{by separation condition (\ref{supportlowerboundequ1})}\Big)\\
		= &38\cdot 0.1^{2n-1}\sigma\cdot \frac{n(n+1)^{10}}{\sqrt{2n-1}}\left(\frac{n}{n-\frac{1}{2}}\right)^{2n-1} \frac{1}{1-0.04}\quad \Big(\text{(\ref{supportlowerboundequ1}) implies $\frac{s\tau \Omega}{2} \leq 0.04 $}\Big)\\
		\leq&  38e\cdot 0.1^{2n-1}\sigma\cdot \frac{n(n+1)^{10}}{\sqrt{2n-1}} \frac{1}{1-0.04}\\
		<& \sigma \quad \left(\text{$38e\cdot 0.1^{2n-1}\cdot \frac{n(n+1)^{10}}{\sqrt{2n-1}} \frac{1}{1-0.04}<1$ for $n\geq 6$}\right).
	\end{align*} 
	It then follows that $||\mathcal F[\hat \mu]-\mathcal F[\mu]||_{\infty}<\sigma$. 
	
	Now consider the case when $2\leq n\leq 5$. By (\ref{equ:product-of-distance}), we have 
	\[
	\Pi_{2\leq q\leq 2n-1}\frac{|t_{j_1}-t_{j_q}|}{|t_{j_{2n}}-t_{j_q}|} \leq \frac{s^{2}\left(2\lceil\frac{n}{2}\rceil\right)!\left(2\lceil\frac{n}{2}\rceil-1\right)!}{ \bigg((2\lfloor\frac{\lfloor\frac{n}{2}\rfloor}{2}\rfloor-1)!\bigg)^4},
	\]
	and consequently, 
	\[
	\sum_{j=1}^{2n}|a_j|\leq 2n \frac{s^{2}\left(2\lceil\frac{n}{2}\rceil\right)!\left(2\lceil\frac{n}{2}\rceil-1\right)!}{ \bigg((2\lfloor\frac{\lfloor\frac{n}{2}\rfloor}{2}\rfloor-1)!\bigg)^4}.
	\]
	By similar arguments as those for the case when $n\geq 6$, we can show that for $2\leq n\leq 5$,
	\[
	|\mathcal F[\gamma](x)|<\sigma. 
	\]
	
\end{proof}

\section{Proof of results in Section \ref{section:positivitynotenhance}}

\subsection{Proof of Theorem \ref{thm:twopointresolution0}}
\begin{proof}
	\textbf{Step 1.} We first prove the one-dimensional case. Let $\mu = \sum_{j=1}^2a_j \delta_{y_j} , a_j = m_{\min} e^{i\theta_j}$ and $\hat \mu = a \delta_{\hat y}$. A crucial relation is
	\begin{equation}\label{equ:prooftwopointlimit1}
		\mathcal F[\hat \mu](\omega)= \mathcal F[\mu](\omega) +\vect w_1(\omega), \ |\vect w_1(\omega)|< 2\sigma, \ \omega \in [-\Omega, \Omega].
	\end{equation}
	Note that if (\ref{equ:prooftwopointlimit1}) holds, $\hat \mu$ can be a $\sigma$-admissible measure of some $\vect Y$ generated by model (\ref{equ:modelsetting2}). This time, resolving two point sources is impossible. Conversely, if (\ref{equ:prooftwopointlimit1}) does not hold, $\hat \mu$ cannot be any $\sigma$-admissible measure of some $\vect Y$ generated by $\mu$ as in model (\ref{equ:modelsetting2}). Thus the resolution limit $\mathcal R(\theta)$ is the constant such that (\ref{equ:prooftwopointlimit1}) holds when $|y_1-y_2|< \mathcal R(\theta)$ and fails to hold in the opposite case. 
	
	\textbf{Step 2.} Note that for the general source locations $y_1, y_2$, shifting them by $x$ and get that
	\[
	\mathcal F[\hat \mu](\omega)e^{ix\omega}= \mathcal F[\mu](\omega)e^{ix\omega} +\vect w_1(\omega)e^{ix\omega}, \ |\vect w_1(\omega)e^{ix\omega}|< 2\sigma, \ \omega \in [-\Omega, \Omega],
	\]
	we can transform the problem into the case when $y_1=-y_2$. Thus we consider that the underlying source is $\mu = m_{\min}e^{i\frac{\theta}{2}} \delta_{y_1}+ m_{\min}e^{i\frac{-\theta}{2}}\delta_{y_2}$ with $y_1>0, y_1 = -y_2$. The measure $\hat \mu$ is $a\delta_{\hat y}$ with $a$ and $\hat y$ to be determined.  
	
	From (\ref{equ:prooftwopointlimit1}), we get that 
	\begin{align*}
		&\vect w_1(\omega) = ae^{i \hat y\omega}-m_{\min}(e^{i\frac{\theta}{2}}e^{iy_1\omega}+e^{i\frac{-\theta}{2}}e^{i y_2\omega})= ae^{i\hat 
			y\omega}-2m_{\min}\cos\left(y_1\omega+\frac{\theta}{2}\right). 
	\end{align*}
	Note that for two non-negative values $x,y$, we have 
	\begin{equation}\label{equ:prooftwopointlimit-1}
	\left|xe^{iq}-y\right|^2 = (x\cos(q)-y)^2+x^2\sin^2(q) = x^2+y^2-2xy\cos(q) \geq (x-y)^2 
	\end{equation}
	and the equality is attained when $q=0$. We only consider the case when
	\begin{equation}\label{equ:prooftwopointlimit0}
	0\leq y_1\Omega+ \frac{\theta}{2}\leq \frac{\pi}{2}\text{ and }-\frac{\pi}{2}\leq -y_1\Omega+ \frac{\theta}{2}\leq 0
	\end{equation}
and we shall see that this coincides with the case in the theorem. By the above condition, we have $\cos(y_1\omega+\frac{\theta}{2})\geq 0, \omega \in [-\Omega,\Omega]$. Thus by (\ref{equ:prooftwopointlimit-1}), for every $\omega$, 
	\[
	|\vect w_1(\omega)|\geq \left||a|-2m_{\min}\cos\left(y_1\omega+\frac{\theta}{2}\right)\right|
	\]
	and the minimum is attained when $\hat y=0$ and $a$ is a positive number. We now try to find the condition on $y_1$ so that there exists $a$ satisfying
	\[
	\left||a|-2m_{\min}\cos\left(y_1\omega+\frac{\theta}{2}\right)\right|<2\sigma, \quad \omega\in [-\Omega,\Omega]. 
	\]
	This is equivalent to
	\begin{equation}\label{equ:prooftwopointlimit3}
		\max_{ \omega, \omega' \in [-\Omega,\Omega]}\left|2m_{\min}\left(\cos\left(y_1\omega+\frac{\theta}{2}\right)- \cos\left(y_1\omega'+\frac{\theta}{2}\right)\right)\right|<4\sigma.
	\end{equation}

We then analyze the problem for two different cases. We denote $d_{\min}:=|y_1-y_2|$ and  now the condition (\ref{equ:prooftwopointlimit0}) is
\begin{equation}\label{equ:prooftwopointlimit5}
0\leq \frac{d_{\min}\Omega+\theta}{2}\leq \frac{\pi}{2} \text{ and } \frac{-d_{\min}\Omega+\theta}{2}\leq 0.
\end{equation}  
Under this condition, problem (\ref{equ:prooftwopointlimit3}) becomes 
\[
2m_{\min}\left|1-\cos\left(\frac{d_{\min}}{2}\Omega+\frac{\theta}{2}\right)\right|< 4\sigma. 
\]
Thus $4\sin\left(\frac{d_{\min}\Omega+\theta}{4}\right)^2<\frac{4\sigma}{m_{\min}}$, and equivalently
\[
d_{\min} < \frac{4\arcsin\left(\left(\frac{\sigma}{m_{\min}}\right)^{\frac{1}{2}}\right)-\theta }{\Omega}.
\]
By the above discussions, in this case $\mathcal R(\theta)$ is
\[
\mathcal R(\theta) = \frac{4\arcsin\left(\left(\frac{\sigma}{m_{\min}}\right)^{\frac{1}{2}}\right)-\theta }{\Omega}.
\]
Now the condition (\ref{equ:prooftwopointlimit5}) holds when 
\[
\frac{4\arcsin\left(\left(\frac{\sigma}{m_{\min}}\right)^{\frac{1}{2}}\right)-\theta }{\Omega} \Omega + \theta \leq \pi, \text{and} -\left(\frac{4\arcsin\left(\left(\frac{\sigma}{m_{\min}}\right)^{\frac{1}{2}}\right)-\theta }{\Omega}\right)\Omega+\theta \leq 0. 
\]
which further holds when
\[
\sin\left(\frac{\theta}{2}\right)^2 \leq \frac{\sigma}{m_{\min}}\leq \frac{1}{2}.
\]
This proves the first case in the theorem.

 Now, we consider the case when $\frac{\sigma}{m_{\min}}>\frac{1}{2}$. 
	We choose the specific case where $a=m_{\min}e^{i\theta_1}$ and $\hat y =y_1$. Then 
	\[
	\vect w_1(\omega) =m_{\min}e^{i\theta_1}e^{iy_1\omega}-(m_{\min}e^{i\theta_1}e^{iy_1\omega}+m_{\min} e^{i\theta_2}e^{i y_2\omega}) =m_{\min}e^{i\theta_2}e^{iy_2\omega}, \quad \omega \in [-\Omega,\Omega]. 
	\]
	Condition $\frac{\sigma}{m_{\min}}>\frac{1}{2}$ gives
	\[
	|\vect w_1(\omega)|<2\sigma.
	\]
	Thus the case when $\frac{\sigma}{m_{\min}}>\frac{1}{2}$ is meaningless. Indeed, there are always some $\sigma$-admissible measures for some images with only one point source.

	\textbf{Step 3.} Now we consider the case when the  sources $\vect y_j$'s are in $\mathbb R^k$. We still consider the crucial relation that
	\begin{equation}\label{equ:prooftwopointlimit4}
		\mathcal F[\hat \mu](\vect \omega)= \mathcal F[\mu](\vect \omega) +\vect w_1(\vect \omega), \ |\vect w_1(\vect \omega)|< 2\sigma, \ \bnorm{\vect \omega}_{2}\leq \Omega. 
	\end{equation}
	By a similar argument as the one in step 1, we know that the resolution limit $\mathcal R(\theta)$ is the constant such that (\ref{equ:prooftwopointlimit4}) holds when $\bnorm{\vect y_1-\vect y_2}_2< \mathcal R(\theta)$ and fails to hold in the opposite case. Note that by choosing suitable axes or transforming the problem, we can make $\vect y_1= (y_1, 0, \cdots, 0)^\top, \vect y_2= (y_2, 0, \cdots, 0)^\top$. Consider $\hat \mu = a \delta_{\mathbf{\hat y}}, \mathbf{\hat y} \in \mathbb R^k$ with $a$ and $\mathbf{\hat y}$ to be determined. We now have 
	\[
	\mathcal F[\hat \mu](\vect \omega)- \mathcal F[\mu](\vect \omega) = ae^{i \mathbf{\hat y} \cdot \vect \omega} - \sum_{j=1}^{2}a_j e^{i \vect {y}_j \cdot \vect{\omega}}= ae^{i \mathbf{\hat y}_{2:k}\cdot \vect \omega_{2:k}} e^{\mathbf{\hat y}_1 \vect \omega_1} - \sum_{j=1}^{2}a_j e^{i {y}_j \vect{\omega}_1}.  
	\]
	Thus analyzing when (\ref{equ:prooftwopointlimit4}) holds can be reduced to the one-dimensional case and it is not hard to see the result for the one-dimensional space still holds for multi-dimensional spaces.

\end{proof}

\subsection{Proof of Theorem \ref{thm:computatwopointresolution0}}
	\begin{proof}
	\textbf{Step 1.} We only need to analyze the case when $\frac{\sigma}{m_{\min}} \leq \frac{1}{2}$, as the case when $\frac{\sigma}{m_{\min}} >\frac{1}{2}$ is trivial. Also, we only consider the one-dimensional case since the treatment for multi-dimensional spaces is similar to the one in the proof of Theorem \ref{thm:twopointresolution0}. 
	
	Similarly to step 1 in the proof of Theorem \ref{thm:twopointresolution0}, the resolution limit $\mathcal D_{k, num}$ should be the constant such that the following estimate:
	\begin{equation}\label{equ:proofcomplextwopointlimit0}
		\mathcal F[\hat \mu](\omega)= \mathcal F[\mu](\omega) +\vect w_2(\omega), \ |\vect w_2(\omega)|< 2\sigma, \ \omega \in [-\Omega, \Omega],
	\end{equation}
	holds when $|y_1-y_2|< \mathcal D_{k, num}$ and fails to hold in the opposite case. We shall prove that when $	\sin\left(\frac{\theta}{2}\right)^2 \leq \frac{\sigma}{m_{\min}}\leq \frac{1}{2}$, if
	\[
	\left|y_1-y_2\right| \geq \frac{4 \arcsin \left(\left(\frac{\sigma}{m_{\min }}\right)^{\frac{1}{2}}\right)}{\Omega},
	\]
	then (\ref{equ:proofcomplextwopointlimit0}) doesn't hold for any $\hat{\mu}$ consisting of only one source. On the opposite case, Theorem \ref{thm:twopointresolution0} already ensures the existence of such $\hat{\mu}$ making
	\[
	\left|\mathbf{w}_2(\omega)\right|<2 \sigma, \quad \omega \in[-\Omega, \Omega] .
	\]
    This is enough to prove the theorem.
	
	\textbf{Step 2.} 
	Without loss of generality, we assume the underlying source is $$\mu = m_{\min}\alpha e^{-i\beta} \delta_{y_1}+ m_{\min}e^{i\beta} \delta_{y_2}$$ with $y_1 = -y_2$, $0< y_1\leq \frac{\pi}{2\Omega}$, $\alpha \geq 1$ and $0\leq \beta\leq \frac{\pi}{4}$. It is not hard to see that the other cases can all be transformed to the above setting or the same analysis as follows can be applied to. We consider $\hat \mu = ae^{i\gamma}\delta_{\hat y}$ with $a>0$, $\gamma$ and $\hat y$ to be determined. 
	
	From (\ref{equ:proofcomplextwopointlimit0}), we have 
	\begin{align*}
		\vect w_2(\omega) &= ae^{i\gamma}e^{i \hat y \omega}-m_{\min}\left(\alpha e^{-i\beta}e^{iy_1\omega}+ e^{i\beta}e^{-iy_1\omega}\right). 
	\end{align*}
	We rewrite it as 
	\begin{align}\label{equ:proofcomplextwopointlimit-1}
		\vect w_2(\omega) =& ae^{i\gamma}e^{i \hat y \omega}-m_{\min}\left(\alpha e^{i(y_1 \omega-\beta)}+ e^{i(\beta-y_1 \omega)}\right).   \nonumber \\
		=  & ae^{i\gamma}e^{i \hat y \omega} - h m_{\min} e^{i(y_1 \omega-\beta)} - 2m_{\min} \cos(y_1\omega - \beta)
	\end{align}
where $\alpha = m_{\min}(1+h)$.  We next prove the theorem by considering the case when 
	\begin{equation}\label{equ:proofcomplextwopointlimit1}
		y_1\Omega \geq \beta,\quad  0\leq \Omega y_1 +\beta \leq  \frac{\pi}{2}.
	\end{equation} 
We consider the necessary condition for the existence of such $\hat \mu, \mu$ satisfying (\ref{equ:proofcomplextwopointlimit0}) that 
	\[
	\min_{a>0, \alpha\geq 1, \gamma \in \mathbb R,  \hat y\in \mathbb R}\babs{\vect w_2(-\Omega)}+\babs{\vect w_2(\frac{\beta}{y_1})}<4\sigma.    
	\]
By (\ref{equ:proofcomplextwopointlimit-1}), it is
	\begin{align}\label{equ:proofcomplextwopointlimit2}
		&\min_{a>0, h\geq 0, \gamma \in \mathbb R, \hat y\in \mathbb R}\left|ae^{i\gamma}e^{-i\hat y \Omega} -h m_{\min} e^{-i(y_1 \Omega+\beta)}- 2m_{\min} \cos(y_1 \Omega+\beta )\right|+\left|ae^{i\gamma}e^{i \hat y \frac{\beta}{y_1}}-(2m_{\min}+hm_{\min})\right|.
	\end{align}
	A key observation is that if $2m_{\min}\cos(y_1\Omega+\beta )+hm_{\min}\leq a\leq 2m_{\min}+hm_{\min}$, we have 
	\begin{align}
		&\min_{a>0, h\geq 0, \gamma \in \mathbb R, \hat y\in \mathbb R}\left|ae^{i\gamma}e^{-i\hat y\Omega} -h m_{\min} e^{-i(y_1 \Omega+\beta)}- 2m_{\min} \cos(y_1 \Omega+\beta )\right|\nonumber +\left|ae^{i\gamma}e^{i \hat y \frac{\beta}{y_1}}-(2m_{\min}+hm_{\min})\right|\nonumber \\
		\geq &\min_{a>0, h\geq 0, \gamma \in \mathbb R, \hat y\in \mathbb R, \hat x\in \mathbb R}\left|ae^{-i\hat y\Omega} -h m_{\min} e^{-i\hat x \Omega}- 2m_{\min} \cos(y_1 \Omega+\beta)\right|+|ae^{i\gamma}-(2m_{\min}+hm_{\min})|\nonumber \\
		= &\min_{a> 0, h\geq 0}\left|a -h m_{\min}- 2m_{\min} \cos(y_1 \Omega+\beta)\right|+\left|a-(2m_{\min}+hm_{\min})\right|\nonumber \\
		= &\min_{b\in \mathbb R,\ 2m_{\min}\cos(y_1\Omega+\beta )\leq b\leq 2m_{\min}}\left|b- 2m_{\min} \cos(y_1 \Omega+\beta )\right|+|2m_{\min}-b|\nonumber \\
		=&2m_{\min}-2m_{\min} \cos(y_1 \Omega+\beta ), \label{equ:proofcomplextwopointlimit7}
	\end{align}
	where the first equality is because $2m_{\min}\cos(y_1\Omega)+hm_{\min}\leq a\leq 2m_{\min}+hm_{\min}$ and $2m_{\min}\cos(y_1\Omega+\beta )\geq 0$ by (\ref{equ:proofcomplextwopointlimit1}). 
	
	On the other hand, letting $h=0, \hat y =0, \gamma =0$, we have 
	\begin{align*}
		&\min_{a>0}\left|ae^{i\gamma}e^{-i\hat y \Omega} -h m_{\min} e^{-i(y_1 \Omega+\beta)}- 2m_{\min} \cos(y_1 \Omega+\beta )\right|+\left|ae^{i\gamma}e^{i \hat y \frac{\beta}{y_1}}-(2m_{\min}+hm_{\min})\right|.\\
		=&\min_{a>0}\left|a - 2m_{\min} \cos(y_1 \Omega+\beta)\right|+|a-2m_{\min}|\\
		=&\min_{a>0}(a - 2m_{\min} \cos(y_1 \Omega+\beta ))+2m_{\min}-a \quad \big(\text{choose $2m_{\min} \cos(y_1 \Omega+\beta )\leq a\leq 2m_{\min}$}\big)\\ 
		=&2m_{\min}-2m_{\min} \cos(y_1 \Omega+\beta ).
	\end{align*}
	Together with (\ref{equ:proofcomplextwopointlimit7}), this yields 
	\begin{align*}
		&\min_{a>0, h\geq 0, \gamma \in \mathbb R, \hat y\in \mathbb R}\left|ae^{i\gamma}e^{-i\hat y \Omega} -h m_{\min} e^{-i(y_1 \Omega+\beta)}- 2m_{\min} \cos(y_1 \Omega+\beta)\right|+\left|ae^{i\gamma}e^{i \hat y \frac{\beta}{y_1}}-(2m_{\min}+hm_{\min})\right|\\
		=&2m_{\min}-2m_{\min} \cos(y_1 \Omega+\beta),
	\end{align*}
	in the case when $2m_{\min}\cos(y_1\Omega+\beta)+hm_{\min}\leq a\leq 2m_{\min}+hm_{\min}$.
	
	\medskip
	Now, we consider the case when $a< 2m_{\min}\cos(y_1\Omega+\beta)+hm_{\min}$. In this case, we have 
	\begin{align*}
		&\min_{a>0, h\geq 0, \gamma \in \mathbb R, \hat y\in \mathbb R}\left|ae^{i\gamma}e^{-i\hat y \Omega} -h m_{\min} e^{-i(y_1 \Omega+\beta)}- 2m_{\min} \cos(y_1 \Omega+\beta)\right|+\left|ae^{i\gamma}e^{i \hat y \frac{\beta}{y_1}}-(2m_{\min}+hm_{\min})\right|\\
		\geq & \min_{a>0, h\geq 0, \gamma \in \mathbb R, a< 2m_{\min}\cos(y_1\Omega+\beta )+hm_{\min}}\babs{ae^{i\gamma}-(2m_{\min}+hm_{\min})}\\
		\geq& \min_{a>0, h\geq 0, a< 2m_{\min}\cos(y_1\Omega+\beta )+hm_{\min}}\babs{a-(2m_{\min}+hm_{\min})}\\
		=& \min_{a>0, h\geq 0, a< 2m_{\min}\cos(y_1\Omega+\beta )+hm_{\min}}2m_{\min}+hm_{\min}-a\\
		>& 2m_{\min} -2m_{\min}\cos(y_1 \Omega+\beta ).
	\end{align*}
	
	Finally, we consider the case when $a> 2m_{\min}+hm_{\min}$. In this case, we have 
	\begin{align*}
		&\min_{a>0, h\geq 0, \gamma \in \mathbb R, \hat y\in \mathbb R}\left|ae^{i\gamma}e^{-i\hat y \Omega} -h m_{\min} e^{-iy_1 \Omega}- 2m_{\min} \cos(y_1 \Omega+\beta)\right|+\left|ae^{i\gamma}e^{i \hat y \frac{\beta}{y_1}}-(2m_{\min}+hm_{\min})\right|\\
		\geq & \min_{a>0, h\geq 0, \gamma \in \mathbb R, \hat y\in \mathbb R, a> 2m_{\min}+hm_{\min}}\left|ae^{i\gamma}e^{-i\hat y \Omega} -h m_{\min} e^{-iy_1 \Omega}- 2m_{\min} \cos(y_1 \Omega+\beta)\right|\\
		\geq& \min_{a>0, h\geq 0, a> 2m_{\min}+hm_{\min}}a-(2m_{\min}\cos(y_1 \Omega+\beta )+hm_{\min})\\
		>& 2m_{\min} -2m_{\min}\cos(y_1 \Omega+\beta).
	\end{align*}

	Therefore, combining all the above discussions, we arrive at
	\begin{align*}
		&\min_{a>0, h\geq 0, \gamma \in \mathbb R, \hat y\in \mathbb R}\left|ae^{i\gamma}e^{-i\hat y \Omega} -h m_{\min} e^{-iy_1 \Omega}- 2m_{\min} \cos(y_1 \Omega+\beta)\right|+\left|ae^{i\gamma}e^{i \hat y \frac{\beta}{y_1}}-(2m_{\min}+hm_{\min})\right|\\
		=&2m_{\min}-2m_{\min} \cos(y_1 \Omega+\beta).
	\end{align*}
	Thus (\ref{equ:proofcomplextwopointlimit2}) is equivalent to 
	\[
	2m_{\min}-2m_{\min} \cos(y_1 \Omega+\beta)<4\sigma.
	\]
	Similar to the proof of Theorem \ref{thm:twopointresolution0}, this yields
	\[
	d_{\min} < \frac{4\arcsin\left(\left(\frac{\sigma}{m_{\min}}\right)^{\frac{1}{2}}\right)-\theta}{\Omega}.
	\]
	where $\theta = 2\beta$ in our setting. The condition (\ref{equ:proofcomplextwopointlimit1}) now holds when
	\[
	\sin\left(\frac{\theta}{2}\right)^2 \leq \frac{\sigma}{m_{\min}}\leq \frac{1}{2}.
	\]
	This completes the proof.

\end{proof}

\appendix
\section{Auxiliary lemmas} \label{appendixA}
The following results can be easily proved. 
\begin{lem}
\label{lem:produc-min-max-2}
Let $n\in\N^+$ and $\tau >0$ and let $\mathcal{C}=\{j=1,2,\dots,n\}$. Then

(1):
\begin{equation}
\label{equ:argmin-on-product-2}
\lceil\frac{n}{2}\rceil\in\arg\min_{z\in \mathcal{C}}\Pi_{x\in\mathcal{C},x\neq z}|x-z|,
\end{equation} and
\begin{equation}
\label{equ:min-on-product-2}
\min_{z\in \mathcal{C}}\Pi_{x\in\mathcal{C},x\neq z}|x-z|= \left(\lceil\frac{n}{2}\rceil-1\right)!\left(\lfloor\frac{n}{2}\rfloor\right)! ; 
\end{equation}
(2):
\begin{equation}
\label{equ:argmax-on-product-2}
\arg\max_{z\in \mathcal{C}}\Pi_{x\in\mathcal{C},x\neq z}|x-z|=\{1,n\},
\end{equation} and
\begin{equation}
\label{equ:max-on-product-2}
\max_{z\in \mathcal{C}}\Pi_{x\in\mathcal{C},x\neq z}|x-z|=(n-1)!.
\end{equation}
\end{lem}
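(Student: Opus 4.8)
The plan is to reduce each product $\prod_{x \in \mathcal{C}, x \neq z} |x - z|$ to a ratio of factorials and then analyze the resulting one-parameter integer sequence by a simple consecutive-ratio argument. First I would observe that if $z = k$ for some $k \in \{1, \dots, n\}$, then the points lying to its left sit at distances $1, 2, \dots, k-1$ and those to its right at distances $1, 2, \dots, n-k$, so that
\[
P(k) := \prod_{x \in \mathcal{C},\, x \neq k} |x - k| = (k-1)!\,(n-k)!.
\]
This single identity converts both optimization problems into questions about the sequence $P(k)$.

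Next I would study the monotonicity of $P$ through its consecutive ratio
\[
\frac{P(k+1)}{P(k)} = \frac{k!\,(n-k-1)!}{(k-1)!\,(n-k)!} = \frac{k}{n-k}, \qquad 1 \le k \le n-1.
\]
This ratio is $<1$ precisely when $k < n/2$, equal to $1$ only when $n$ is even and $k = n/2$, and $>1$ when $k > n/2$. Hence $P$ is strictly decreasing on the left half and strictly increasing on the right half, i.e.\ a unimodal, valley-shaped sequence, from which both claims fall out at once.

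For part (1), the valley bottom is attained in the middle: substituting $k = \lceil \frac{n}{2} \rceil$ gives $n - k = \lfloor \frac{n}{2} \rfloor$ and hence $P(\lceil \frac{n}{2} \rceil) = (\lceil \frac{n}{2} \rceil - 1)!\,(\lfloor \frac{n}{2} \rfloor)!$, matching \eqref{equ:min-on-product-2}; the ratio computation shows this is indeed a minimizer (the unique one when $n$ is odd, one of the two when $n$ is even, which is all that \eqref{equ:argmin-on-product-2} asserts). For part (2), unimodality forces the maximum onto the endpoints, and $P(1) = P(n) = 0!\,(n-1)! = (n-1)!$, giving \eqref{equ:max-on-product-2}; for $n \ge 3$ the bound $P(2)/P(1) = 1/(n-1) < 1$ confirms that interior points are strictly smaller, so $\arg\max = \{1, n\}$ exactly as in \eqref{equ:argmax-on-product-2}.

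There is no genuine obstacle here: the whole argument is elementary once the factorial identity is in place. The only point requiring a little care is the parity bookkeeping, namely verifying that $\lceil \frac{n}{2} \rceil$ (rather than $\lfloor \frac{n}{2} \rfloor$) is the asserted minimizer and that $n - \lceil \frac{n}{2} \rceil = \lfloor \frac{n}{2} \rfloor$, so that the stated minimum value is reproduced exactly in both the even and odd cases.
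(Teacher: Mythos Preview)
Your argument is correct and complete: the factorial identity $P(k)=(k-1)!\,(n-k)!$ together with the consecutive-ratio test $P(k+1)/P(k)=k/(n-k)$ cleanly establishes the valley shape and reads off both the minimizer and the endpoint maximizers. The paper itself gives no proof of this lemma beyond the remark that it ``can be easily proved,'' so there is nothing to compare against; your write-up would serve well as the omitted details.
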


\begin{lem}\label{lem:product2}
Let $n\in\N^+$ and let $p,q\in\R$ be such that $p>q>0$. For the following three sets of evenly spaced points  $\mathcal{C}_1:=\{x_j=(j-1)p,j=1,\dots,n\}$, $\mathcal{C}_2:=\{y_j=(j-1)p-q,j=1,\dots,n\}$, and $\mathcal{C}_3:=\{y_j=(j-1)p-q,j=1,\dots,n+1\}$, we have\\
(1): $\arg\min_{z\in\mathcal{C}_2}\Pi_{j=1}^n|z-x_j|\subset\{y_{\lfloor\frac{n}{2}\rfloor+1}, y_{\lfloor\frac{n}{2}\rfloor+2}\}
$, and

\begin{equation}
\label{equ: min-on-product-1}
\min_{z\in\mathcal{C}_2}\Pi_{i=1}^n|z-x_i|=\Pi_{j=0}^{\lceil\frac{n}{2}\rceil-1}\left(\min(p-q,q)+pj\right)\cdot\Pi_{j=0}^{\lfloor\frac{n}{2}\rfloor-1}\left(\max(p-q,q)+pj\right);
\end{equation}

\noindent(2): $\arg\min_{z\in\mathcal{C}_3}\Pi_{j=1}^n|z-x_j|\subset\{y_{\lfloor\frac{n}{2}\rfloor+1}, y_{\lfloor\frac{n}{2}\rfloor+2}\}
$, and

\begin{equation}
\label{equ: min-on-product-2}
\min_{z\in\mathcal{C}_3}\Pi_{j=1}^n|z-x_j|=\Pi_{j=0}^{\lceil\frac{n}{2}\rceil-1}\left(\min(p-q,q)+pj\right)\cdot\Pi_{j=0}^{\lfloor\frac{n}{2}\rfloor-1}\left(\max(p-q,q)+pj\right);
\end{equation}

\noindent(3): $y_1 \in \arg\max_{z\in\mathcal{C}_2}\Pi_{j=1}^n|z-x_j|$, and 
\begin{equation}
\label{equ: max-on-product-1}
\max_{z\in\mathcal{C}_2}\Pi_{j=1}^n|z-x_j|=\Pi_{j=0}^{n-1}\left(q+pj\right);  
\end{equation}

\noindent(4): $\arg\max_{z\in\mathcal{C}_3}\Pi_{j=1}^n|z-x_j|\subset\{y_1,y_{n+1}\}$,
and
\begin{equation}
\label{equ: max-on-product-2}
\max_{z\in\mathcal{C}_3}\Pi_{i=j}^n|z-x_j|=\Pi_{j=0}^{n-1}\left(\max(p-q, q)+pj\right).  
\end{equation}
\end{lem}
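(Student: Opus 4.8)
The plan is to reduce all four statements to the analysis of a single sequence indexed by $k$. For $z=y_k=(k-1)p-q$, a direct computation of $|y_k-x_j|=|(k-j)p-q|$ (splitting into the cases $j<k$, $j=k$, $j>k$ and using $0<q<p$) yields the closed form
\[
P(k):=\prod_{j=1}^{n}|y_k-x_j|=q\,\prod_{m=1}^{k-1}(mp-q)\,\prod_{m=1}^{n-k}(mp+q),
\]
valid for $k\in\{1,\dots,n\}$; the extra point $y_{n+1}$ of $\mathcal{C}_3$ lies to the right of all the $x_j$ and gives $P(n+1)=\prod_{m=1}^{n}(mp-q)$, consistent with setting $k=n+1$ in the formula above. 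Thus all four claims become assertions about the minimizers, maximizers, and extremal values of the finite sequence $\{P(k)\}$.

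First I would establish that $\{P(k)\}$ is unimodal (strictly decreasing, then strictly increasing). This follows from the ratio
\[
\frac{P(k+1)}{P(k)}=\frac{kp-q}{(n-k)p+q},\qquad 1\le k\le n,
\]
which is strictly positive and strictly increasing in $k$, hence crosses the value $1$ at most once. Solving $P(k+1)\ge P(k)$ gives $k\ge n/2+q/p$, so the minimizer is $k^{\ast}=\lceil n/2+q/p\rceil$; since $q/p\in(0,1)$ one checks $k^{\ast}\in\{\lfloor n/2\rfloor+1,\lfloor n/2\rfloor+2\}$, with the two coinciding unless $n$ is odd and $p=2q$, the only case producing a tie in the argmin. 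This gives the argmin inclusions in (1) and (2); for (2) one only notes that $y_{n+1}$ sits on the increasing tail and is never a minimizer. The maximum of a valley-shaped sequence is attained at an endpoint: comparing $P(1)=\prod_{j=0}^{n-1}(pj+q)$ with $P(n)=q\prod_{m=1}^{n-1}(mp-q)$ termwise shows $P(1)>P(n)$, giving (3); comparing $P(1)$ with $P(n+1)=\prod_{j=0}^{n-1}(pj+(p-q))$ gives (4), the larger being determined by whether $q$ or $p-q$ is bigger.

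It then remains to evaluate $P(k^{\ast})$ and the endpoint values and match them to the stated products. Substituting $k^{\ast}$ into the formula for $P$ and reindexing (turning $q\prod_{m=1}^{r}(mp+q)$ into $\prod_{j=0}^{r}(pj+q)$ and $\prod_{m=1}^{s}(mp-q)$ into $\prod_{j=0}^{s-1}((p-q)+pj)$) produces two products whose numbers of factors are governed by $\lceil n/2\rceil$ and $\lfloor n/2\rfloor$. The roles of $q$ and $p-q$ are exactly interchanged according to whether $q/p<1/2$ or $q/p>1/2$ (which also flips $k^{\ast}$ between $\lfloor n/2\rfloor+1$ and $\lfloor n/2\rfloor+2$ in the odd-$n$ case), and the symmetric appearance of $\min(p-q,q)$ and $\max(p-q,q)$ in the statement is precisely what fuses the two cases into one formula. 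The simpler analogous substitution handles the maxima in (3) and (4).

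The main obstacle is not conceptual but the careful bookkeeping in this last step: one must run the parity of $n$ and the ordering of $q$ versus $p-q$ as separate cases, track how the floor/ceiling counts of factors shift with $k^{\ast}$, and verify that the reindexed products agree with the stated closed forms. The unimodality and endpoint arguments are routine once the ratio identity is in hand; the identification of the extremal values with the $\min/\max$ products is where essentially all the effort lies.
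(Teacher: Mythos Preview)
Your proposal is correct and follows essentially the same route as the paper: both reduce to showing the sequence $P(k)$ is valley-shaped by comparing consecutive terms, and your ratio $P(k+1)/P(k)=(kp-q)/((n-k)p+q)$ is precisely the algebraic content of the paper's geometric pairing $|y_{k+1}-x_1|$ against $|y_k-x_n|$ after the common factors cancel. The paper dismisses (3) and (4) as obvious and leaves the evaluation of the minimum value as a direct check, whereas you go further in pinning down $k^\ast=\lceil n/2+q/p\rceil$ and outlining the reindexing needed to match the stated $\min/\max$ products; this extra explicitness is the only real difference.
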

 
\begin{proof}
Cases (3) and (4) are obvious. For cases (1) and (2), we only need to prove case (2). We verify firstly that for any integer $k$ with $0<k<\lfloor\frac{n}{2}\rfloor+1$,
\begin{equation}
\label{equ:arg-monotonicity}
\Pi^{n}_{j=1}|y_{k}-x_j|>\Pi^{n}_{j=1}|y_{k+1}-x_j|.
\end{equation}
This holds since
\begin{align*}
\Pi^{n}_{j=1}|y_{k}-x_j|=\Pi^{k-1}_{j=1}|y_k-x_j|\cdot\Pi^{n-1}_{j=k}|y_k-x_j|\cdot|y_k-x_n|
\end{align*} and
\begin{align*}
\Pi^{n}_{j=1}|y_{k+1}-x_j|=|y_{k+1}-x_1|\cdot\Pi^{k}_{j=2}|y_{k+1}-x_j|\cdot\Pi^{n}_{j=k+1}|y_{k+1}-x_j|.
\end{align*}
As $\Pi^{k-1}_{j=1}|y_k-x_j|=\Pi^{k}_{j=2}|y_{k+1}-x_j|$, $\Pi^{n-1}_{j=k}|y_k-x_j|=\Pi^{n}_{j=k+1}|y_{k+1}-x_j|$ and $|y_k-x_n|\geq|y_{k+1}-x_1|$ due to the geometrical structure of $\mathcal{C}_1, \mathcal{C}_3$, we have \eqref{equ:arg-monotonicity}. A similar argument gives that for any integer $k$ with $\lfloor\frac{n}{2}\rfloor+2<k<n+2$,
\begin{equation}
\label{equ:arg-monotonicity-2}
\Pi^{n}_{j=1}|y_{k}-x_j|>\Pi^{n}_{j=1}|y_{k-1}-x_j|.
\end{equation}
This proves that  $\arg\min_{z\in\mathcal{C}_3}\Pi_{j=1}^n|z-x_j|\subset\{y_{\lfloor\frac{n}{2}\rfloor+1},y_{\lfloor\frac{n}{2}\rfloor+2}\}$ and thus the minimum value can be checked directly. 
\end{proof} 

\begin{lem}
\label{lem:c-max}
For $c^{\max}_j$, $j=1,2,3,4,$ defined in \eqref{equ:nominator-max}, we have
\begin{equation*}
    \max_{j=1,2,3,4}c^{\max}_j\leq \tau^{2n-1} s^{2n-1}\left(2\lceil\frac{n}{2}\rceil\right)!\left(2\lceil\frac{n}{2}\rceil-1\right)!.
\end{equation*}
\end{lem}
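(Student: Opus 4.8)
The plan is to bound each of $c^{\max}_1,\dots,c^{\max}_4$ individually and then take the maximum over $k$. I would first make the geometry of the four classes explicit, as inherited from Step 1 of the proof of Proposition \ref{thm:supportlowerboundthm1}: after translating so that $t_2=0$, the class $C_1$ is the arithmetic progression $\{2s\tau\,j\}_{j=0}^{\lceil n/2\rceil-1}$, the classes $C_3=C_1-\tau$ and $C_4=C_1+\tau$ are the same progression shifted by $\mp\tau$, and $C_2=\{s\tau+2s\tau\,j\}_{j=0}^{\lfloor n/2\rfloor-1}$ is the progression of midpoints. Hence every class is evenly spaced with common difference $2s\tau$; the classes differ only in their cardinalities ($\lceil n/2\rceil$ for $C_1,C_3$ and $\lfloor n/2\rfloor$ for $C_2,C_4$) and in their mutual offsets, which are of size $\tau$ or $2\tau$ among $C_1,C_3,C_4$ and of size $s\tau$ or $s\tau\pm\tau$ for the pairs involving $C_2$.

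Fixing a class $C_k$, the product defining $c^{\max}_k$ splits as $\prod_{p=1}^{4}\prod_{y\in C_p,\,y\neq x}|x-y|$, and since maximizing each factor separately only enlarges the whole,
\[
c^{\max}_k\le\Big(\max_{x\in C_k}\prod_{y\in C_k,\,y\neq x}|x-y|\Big)\prod_{p\neq k}\Big(\max_{x\in C_k}\prod_{y\in C_p}|x-y|\Big).
\]
I would bound the intra-class factor by Lemma \ref{lem:produc-min-max-2}(2) rescaled to spacing $2s\tau$, and each inter-class factor by the ``max'' clauses (3) and (4) of Lemma \ref{lem:product2}, applied with $p=2s\tau$ and the appropriate offset $q$ read off from the list above and with the target cardinality $m\in\{\lceil n/2\rceil,\lfloor n/2\rfloor\}$. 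Because the maximizers in both lemmas are the outermost points of the progressions, these bounds are, up to the harmless independent maximization, realized at an extreme point of the whole configuration, which is where the true maximum sits.

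It then remains to collect the resulting offset-products $\prod_{j=0}^{m-1}(q^{*}+2s\tau\,j)$. Homogeneity already forces the total $\tau$-degree to be $2n-1$, so the task reduces to tracking the accompanying powers of $s$ and the combinatorial constants: I would factor $(2s\tau)$ out of each term, bound the leftover rising products by factorials, and show that after summing the contributions of the four classes (using $|C_1|+|C_3|=2\lceil n/2\rceil$ and absorbing the smaller $\lfloor n/2\rfloor$-factorials) the constant collapses into $(2\lceil n/2\rceil)!\,(2\lceil n/2\rceil-1)!$ while the $s$-power is majorized by $s^{2n-1}$. Comparing the four bounds then selects the dominant (outermost) class and gives the claimed estimate.

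The hard part will be the bookkeeping in this last step. One must pair every inter-class product with the correct instance of Lemma \ref{lem:product2}, being careful that the small offsets $\tau,2\tau$ contribute differently from the $O(s\tau)$ offsets, and must then verify that the product of the various rising factorials reduces to exactly $(2\lceil n/2\rceil)!\,(2\lceil n/2\rceil-1)!$ rather than something larger. Since the target power $s^{2n-1}$ is deliberately loose (the true $s$-degree is in fact lower, once the near-coincident within-cluster distances are accounted for), there is slack to spare on the $s$-side, and the geometric inputs — uniform spacing $2s\tau$ and extremal maximizers — are straightforward; the delicacy is concentrated entirely in matching the factorial constant.
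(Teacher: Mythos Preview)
Your proposal is correct and follows essentially the same approach as the paper: split each $c^{\max}_k$ into the product of four separate maxima, apply Lemma~\ref{lem:produc-min-max-2}(2) to the intra-class factor and Lemma~\ref{lem:product2}(3)--(4) to the three inter-class factors, then collect the resulting double factorials and majorize by replacing every $\lfloor n/2\rfloor$ with $\lceil n/2\rceil$. The paper carries out exactly this bookkeeping case by case (and indeed finds, as you anticipate, that $c^{\max}_1$ and $c^{\max}_3$ carry only $s^{2n-2}$ and $s^{2n-3}$ respectively, so the $s^{2n-1}$ target is loose there, while $c^{\max}_2$ and $c^{\max}_4$ saturate the $s$-power).
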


\begin{proof}

We study $c^{\max}_j, j=1,2,3,4,$ term by term. The definition of $c^{\max}_1$ yields

\begin{align*}
c_1^{\max}&=\max_{x\in C_1}\Pi_{k=1,2,3,4}\Pi_{y\in C_k,y\neq x}|x-y| \\
&\leq \max_{x\in C_1}\Pi_{y\in C_1,y\neq x}|x-y|\cdot \max_{x\in C_1}\Pi_{y\in C_2}|x-y|\cdot \max_{x\in C_1}\Pi_{y\in C_3}|x-y|\cdot\max_{x\in C_1}\Pi_{y\in C_4}|x-y|.
\end{align*}
Note that, by Lemmas \ref{lem:produc-min-max-2} and \ref{lem:product2}, we have 
\begin{align*}
    \max_{x\in C_1}\Pi_{y\in C_1,y\neq x}|x-y|&\leq \tau^{\lceil\frac{n}{2}\rceil-1} \bigg(\Pi_{j=1}^{\lceil\frac{n}{2}\rceil-1}(2sj)\bigg) \qquad \left(\text{case (2) in Lemma \ref{lem:produc-min-max-2}}\right) \\ 
    &\leq \tau^{\lceil\frac{n}{2}\rceil-1} s^{\lceil\frac{n}{2}\rceil-1}\left(2\lceil\frac{n}{2}\rceil-1\right)!!.\\ 
    \\
    \max_{x\in C_1}\Pi_{y\in C_2}|x-y|&\leq \tau ^{\lfloor\frac{n}{2}\rfloor}
     \bigg( \Pi^{\lfloor\frac{n}{2}\rfloor-1}_{j=0}(2sj+s)\bigg) \qquad \left(\text{cases (3) or (4) in Lemma \ref{lem:product2}}\right)  \\ 
     &\leq \tau ^{\lfloor\frac{n}{2}\rfloor} s ^{\lfloor\frac{n}{2}\rfloor}\left(2\lfloor\frac{n}{2}\rfloor-2\right)!!.\\
     \\
     \max_{x\in C_1}\Pi_{y\in C_3}|x-y|&\leq \tau ^{\lceil\frac{n}{2}\rceil}\left( \Pi_{j=0}^{\lceil\frac{n}{2}\rceil-1}(2sj+1)\right) \qquad \left(\text{case (3) in Lemma \ref{lem:product2}}\right) \\
     &\leq \tau ^{\lceil\frac{n}{2}\rceil}s ^{\lceil\frac{n}{2}-1\rceil} \left(2\lceil\frac{n}{2}\rceil\right)!!.
     \\
     \\
    \max_{x\in C_1}\Pi_{y\in C_4}|x-y|&\leq \tau ^{\lfloor\frac{n}{2}\rfloor} \left( \Pi_{j=0}^{\lfloor\frac{n}{2}\rfloor-1}(2sj+2s-1)\right) \qquad \left(\text{cases (3) or (4) in Lemma \ref{lem:product2}}\right)\\
    &\leq \tau ^{\lfloor\frac{n}{2}\rfloor} s ^{\lfloor\frac{n}{2}\rfloor}   \left(2\lfloor\frac{n}{2}\rfloor-1\right)!!.
\end{align*}
Thus 
\begin{align*}
c_1^{\max} &\leq \tau^{2n-1} s^{2n-1} \left(2\lceil\frac{n}{2}\rceil-2\right)!!\left(2\lfloor\frac{n}{2}\rfloor-1\right)!!\left(2\lceil\frac{n}{2}\rceil-1\right)!!\left(2\lfloor\frac{n}{2}\rfloor\right)!! \\
&= \tau^{2n-1} s^{2n-2} \left(2\lceil\frac{n}{2}\rceil-1\right)! \left(2\lfloor\frac{n}{2}\rfloor\right)!.  
\end{align*}Regarding $c^{\max}_2$, we have

\begin{align*}
c_2^{\max}&=\max_{x\in C_2}\Pi_{k=1,2,3,4}\Pi_{y\in C_k,y\neq x}|x-y| \\
&\leq \max_{x\in C_2}\Pi_{y\in C_1}|x-y|\cdot \max_{x\in C_2}\Pi_{y\in C_2,y\neq x}|x-y|\cdot \max_{x\in C_2}\Pi_{y\in C_3}|x-y|\cdot\max_{x\in C_2}\Pi_{y\in C_4}|x-y|.
\end{align*}
Note that, by Lemmas \ref{lem:produc-min-max-2} and \ref{lem:product2}, we have 
\begin{align*}
    \max_{x\in C_2}\Pi_{y\in C_1}|x-y|&\leq \tau ^{\lceil\frac{n}{2}\rceil} \bigg(\Pi_{j=0}^{\lceil\frac{n}{2}\rceil-1}(2sj+s)\bigg) \qquad \left(\text{cases (3) in Lemma \ref{lem:product2}}\right) \\ 
    &\leq \tau ^{\lceil\frac{n}{2}\rceil} s ^{\lceil\frac{n}{2}\rceil} \left(2\lceil\frac{n}{2}\rceil-1\right)!!. \\
    \\
    \max_{x\in C_2}\Pi_{y\in C_2,y\neq x}|x-y|&\leq \tau ^{\lfloor\frac{n}{2}\rfloor-1}
     \bigg( \Pi^{\lfloor\frac{n}{2}\rfloor-1}_{j=1}(2sj)\bigg) \qquad \left(\text{case (2) in Lemma \ref{lem:produc-min-max-2}}\right)  \\ 
     &\leq \tau ^{\lfloor\frac{n}{2}\rfloor-1} s ^{\lfloor\frac{n}{2}\rfloor-1}\left(2\lfloor\frac{n}{2}\rfloor-2\right)!!.\\
     \\
    \max_{x\in C_2}\Pi_{y\in C_3}|x-y|&\leq \tau ^{\lceil\frac{n}{2}\rceil}\left( \Pi_{j=0}^{\lceil\frac{n}{2}\rceil-1}(2sj+s+1)\right) \qquad \left(\text{case (3) in Lemma \ref{lem:product2}}\right) \\
    &\leq \tau ^{\lceil\frac{n}{2}\rceil} s ^{\lceil\frac{n}{2}\rceil} \left(2\lceil\frac{n}{2}\rceil\right)!!.\\
    \\
    \max_{x\in C_2}\Pi_{y\in C_4}|x-y|&\leq \tau ^{\lfloor\frac{n}{2}\rfloor} \left( \Pi_{j=0}^{\lfloor\frac{n}{2}\rfloor-1}(2sj+s-1)\right) \qquad \left(\text{case (3) in Lemma \ref{lem:product2}}\right) \\
    &\leq \tau ^{\lfloor\frac{n}{2}\rfloor} s ^{\lfloor\frac{n}{2}\rfloor}\left(2\lfloor\frac{n}{2}\rfloor-1\right)!!.
\end{align*}
Thus 
\begin{align*}
    C_2^{\max} &\leq \tau^{2n-1} s^{2n-1} \left(2\lceil\frac{n}{2}\rceil-1\right)!!\left(2\lfloor\frac{n}{2}\rfloor-2\right)!!\left(2\lceil\frac{n}{2}\rceil\right)!!\left(2\lfloor\frac{n}{2}\rfloor-1\right)!! \\
    &= \tau^{2n-1} s^{2n-1}\left(2\lceil\frac{n}{2}\rceil\right)!\left(2\lfloor\frac{n}{2}\rfloor-1\right)!.
\end{align*}As for $c^{\max}_3$, we have

\begin{align*}
c_3^{\max}&=\max_{x\in C_3}\Pi_{k=1,2,3,4}\Pi_{y\in C_k,y\neq x}|x-y| \\
&\leq \max_{x\in C_3}\Pi_{y\in C_1}|x-y|\cdot \max_{x\in C_3}\Pi_{y\in C_2}|x-y|\cdot \max_{x\in C_3,y\neq x}\Pi_{y\in C_3}|x-y|\cdot\max_{x\in C_3}\Pi_{y\in C_4}|x-y| . \end{align*}
Note that, by Lemmas \ref{lem:produc-min-max-2} and \ref{lem:product2}, we obtain
\begin{align*}
    \max_{x\in C_3}\Pi_{y\in C_1}|x-y|&\leq \tau ^{\lceil\frac{n}{2}\rceil} \bigg(\Pi_{j=0}^{\lceil\frac{n}{2}\rceil-1}(2sj+1)\bigg), \qquad \left(\text{case (3) in Lemma \ref{lem:product2}}\right) \\ 
    &\leq \tau ^{\lceil\frac{n}{2}\rceil} s ^{\lceil\frac{n}{2}\rceil-1} \left(2\lceil\frac{n}{2}\rceil-1\right)!!.\\
    \\
    \max_{x\in C_3}\Pi_{y\in C_2}|x-y|&\leq \tau ^{\lfloor\frac{n}{2}\rfloor}
     \bigg( \Pi^{\lfloor\frac{n}{2}\rfloor-1}_{j=0}(2sj+s+1)\bigg), \qquad \left(\text{case (3) or (4) in Lemma \ref{lem:product2}}\right)  \\
     \\
    \max_{x\in C_3}\Pi_{y\in C_3,y\neq x}|x-y|&\leq \tau ^{\lceil\frac{n}{2}\rceil-1}\left( \Pi_{j=1}^{\lceil\frac{n}{2}\rceil-1}(2sj)\right), \qquad \left(\text{case (2) in Lemma \ref{lem:produc-min-max-2}}\right) \\
    &\leq \tau ^{\lceil\frac{n}{2}\rceil-1} s ^{\lceil\frac{n}{2}\rceil-1}\left(2\lceil\frac{n}{2}\rceil-2\right)!!.\\
    \\
    \max_{x\in C_3}\Pi_{y\in C_4}|x-y|&\leq \tau ^{\lfloor\frac{n}{2}\rfloor} \left( \Pi_{j=0}^{\lfloor\frac{n}{2}\rfloor-1}(2sj+2s-2)\right). \qquad \left(\text{case (3) or (4) in Lemma \ref{lem:product2}}\right) 
\end{align*}
Thus 
\begin{align*}
    C_3^{\max} &\leq \tau^{2n-1} s^{2\lceil\frac{n}{2}\rceil-2} 
     \left(2\lceil\frac{n}{2}\rceil-1\right)!\cdot\left( \Pi_{j=0}^{\lfloor\frac{n}{2}\rfloor-1}(2sj+s+1)(2sj+2s-2)\right)\\
    &\leq \tau^{2n-1} s^{2\lceil\frac{n}{2}\rceil-2} 
     \left(2\lceil\frac{n}{2}\rceil-1\right)!\cdot\left( \Pi_{j=0}^{\lfloor\frac{n}{2}\rfloor-1}(2sj+s)(2sj+2s)\right)\\
    &= \tau^{2n-1} s^{2n-3}\left(2\lceil\frac{n}{2}\rceil-1\right)!\left(2\lfloor\frac{n}{2}\rfloor\right)!.
\end{align*}
Finally,  we study $c^{\max}_4$. We have
\begin{align*}
    \max_{x\in C_4}\Pi_{y\in C_1}|x-y|&\leq \tau ^{\lceil\frac{n}{2}\rceil} (\Pi_{j=0}^{\lceil\frac{n}{2}\rceil-1}(2sj+1)), \qquad \left(\text{case (3) in Lemma \ref{lem:product2}}\right) \\ 
    &\leq \tau^{\lceil\frac{n}{2}\rceil} s ^{\lceil\frac{n}{2}\rceil-1} \bigg( 2\lceil\frac{n}{2}\rceil-1\bigg)!!\\
    \\
    \max_{x\in C_4}\Pi_{y\in C_2}|x-y|&\leq \tau ^{\lfloor\frac{n}{2}\rfloor}
     ( \Pi^{\lfloor\frac{n}{2}\rfloor-1}_{j=0}(2sj+s-1)), \qquad \left(\text{case (3) in Lemma \ref{lem:product2}}\right)  \\ 
     &\leq \tau ^{\lfloor\frac{n}{2}\rfloor} s ^{\lfloor\frac{n}{2}\rfloor}\bigg(2\lfloor\frac{n}{2}\rfloor\bigg)!!\\
     \\
     \max_{x\in C_4}\Pi_{y\in C_3}|x-y|&\leq \tau ^{\lceil\frac{n}{2}\rceil}( \Pi_{j=0}^{\lceil\frac{n}{2}\rceil-1}(2sj+2)), \qquad \left(\text{case (3) in Lemma \ref{lem:product2}}\right)\\ &\leq \tau^{\lceil\frac{n}{2}\rceil} s ^{\lceil\frac{n}{2}\rceil} \bigg( 2\lceil\frac{n}{2}\rceil-1\bigg)!!.\\
    \\
    \max_{x\in C_4}\Pi_{y\in C_4,y\neq x}|x-y|&\leq \tau ^{\lfloor\frac{n}{2}\rfloor-1} ( \Pi_{j=1}^{\lfloor\frac{n}{2}\rfloor-1}(2sj)). \qquad \left(\text{case (2) in Lemma \ref{lem:produc-min-max-2}}\right) \\
    &\leq \tau ^{\lfloor\frac{n}{2}\rfloor-1} s^{\lfloor\frac{n}{2}\rfloor-1}  \bigg( 2\lfloor\frac{n}{2}\rfloor-2\bigg)!!.
\end{align*}
Thus 

\begin{align*}
c_4^{\max}&\leq \tau^{2n-1}s^{2n-1}\bigg( 2\lceil\frac{n}{2}\rceil-1\bigg)!!\cdot\bigg(2\lfloor\frac{n}{2}\rfloor\bigg)!!\cdot\bigg( 2\lceil\frac{n}{2}\rceil-1\bigg)!!\cdot\bigg( 2\lfloor\frac{n}{2}\rfloor-2\bigg)!!\\
&\leq \tau^{2n-1}s^{2n-1}\bigg(2\lceil\frac{n}{2}\rceil\bigg)!\bigg( 2\lceil\frac{n}{2}\rceil-1\bigg)!.
\end{align*}

By concluding above discussions, we have 
\[
\max_{j=1,2,3,4}c_j^{\max} \leq \tau^{2n-1} s^{2n-1}\left(2\lceil\frac{n}{2}\rceil\right)!\left(2\lceil\frac{n}{2}\rceil-1\right)!. 
\]

\end{proof}

\begin{lem}
\label{lem:c-min}
For $c^{\min}_j$, $j=1,2,3,4,$ defined in \eqref{equ:denominator-min}, we have
\begin{equation*}
    \min_{j=1,2,3,4}c^{\min}_j\geq \tau^{2n-1}s^{2n-3}\cdot \bigg((2\lfloor\frac{\lfloor\frac{n}{2}\rfloor}{2}\rfloor-1)!\bigg)^4.
\end{equation*}
\end{lem}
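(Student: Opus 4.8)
The plan is to run the argument for Lemma~\ref{lem:c-max} in reverse, turning every maximum into a minimum and every upper bound into a lower bound. Fix $k\in\{1,2,3,4\}$. Since all the distances $|x-y|$ are strictly positive, the minimum of a product dominates the product of the factorwise minima, so
\[
c^{\min}_k=\min_{x\in C_k}\Pi_{p=1,2,3,4}\Pi_{y\in C_p,\,y\neq x}|x-y|\ \geq\ \Pi_{p=1,2,3,4}\,\min_{x\in C_k}\Pi_{y\in C_p,\,y\neq x}|x-y|.
\]
This decouples the problem into lower bounds for the four minimal products $\min_{x\in C_k}\Pi_{y\in C_p}|x-y|$, one per target class $C_p$. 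Each $C_p$ is an arithmetic progression of spacing $2s\tau$ and length $\lceil\frac n2\rceil$ (for $p=1,3$) or $\lfloor\frac n2\rfloor$ (for $p=2,4$), and the four progressions are mutually offset by the small shifts $\{0,\tau,2\tau,s\tau,(s\pm1)\tau\}$ prescribed by the definition of the $t_j$'s.

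For the diagonal term $p=k$ I would apply part~(1) of Lemma~\ref{lem:produc-min-max-2}: the minimal within-class product is attained at the central point and equals $(\lceil\frac m2\rceil-1)!\,(\lfloor\frac m2\rfloor)!\,(2s\tau)^{m-1}$ with $m=|C_k|$. For the off-diagonal terms $p\neq k$ I would invoke the minimum formulas~\eqref{equ: min-on-product-1} and~\eqref{equ: min-on-product-2} of Lemma~\ref{lem:product2} (using case~(1) or case~(2) according to whether the two classes have equal size or differ by one), with spacing $2s\tau$ and the relevant offset $q$. Because every offset obeys $q<2s\tau$, each factor $\min(2s\tau-q,q)+2s\tau\,j$ and $\max(2s\tau-q,q)+2s\tau\,j$ is bounded below by a positive multiple of $s\tau$, except for the single nearest-neighbour factor $\min(2s\tau-q,q)$ at $j=0$, which for the $\tau$- and $2\tau$-offsets reduces to $\tau$ or $2\tau$ and therefore carries no power of $s$.

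The crucial bookkeeping is then to count these $s$-free factors. For $x\in C_1$ the two crossings onto $C_3$ and $C_4$ (offsets $\tau$) each contribute one factor of size $\tau$, while all remaining factors are of size $\gtrsim s\tau$; an analogous count for $C_3,C_4$ yields one $\tau$- and one $2\tau$-factor, whereas $C_2$ (offsets $s\tau,(s\pm1)\tau$) has none. Hence among the $2n-1$ distance factors exactly two lack a power of $s$ in the binding clustered cases, which is precisely why the $s$-exponent drops from $2n-1$ in Lemma~\ref{lem:c-max} to $2n-3$ here, while the $\tau$-exponent stays $2n-1$. Lower bounding each of the four resulting products of shifted arithmetic sequences crudely by the double factorial $\big(2\lfloor\frac{\lfloor n/2\rfloor}{2}\rfloor-1\big)!$---legitimate since $\lfloor\frac m2\rfloor\geq\lfloor\frac{\lfloor n/2\rfloor}{2}\rfloor$ for every class size $m$---and multiplying over $p=1,2,3,4$ produces the fourth power. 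Collecting the accumulated $\tau$'s and $s$'s and taking the minimum over $k$ (whose smallest lower bound is realized by the clustered classes) then gives the claimed
\[
\min_{j=1,2,3,4}c^{\min}_j\ \geq\ \tau^{2n-1}s^{2n-3}\Big(\big(2\lfloor\tfrac{\lfloor n/2\rfloor}{2}\rfloor-1\big)!\Big)^4.
\]

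The main obstacle is exactly this lower-bound bookkeeping. Unlike the upper-bound proof, where individual factors may be freely enlarged, here all $2n-1$ factors must be controlled from below simultaneously, so none may be discarded; the reduction of the shifted products $\Pi_j\big(\min(2s\tau-q,q)+2s\tau\,j\big)$ to a single clean double factorial must be shown to lose only constants and accounted-for powers of $s$, uniformly across the parity of $n$ and both class sizes. Correctly identifying which factors are the $s$-free nearest-neighbour ones, and verifying that there are at most two of them in every class, is the delicate step that pins down the exponent $s^{2n-3}$ and hence the whole estimate.
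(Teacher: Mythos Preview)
Your proposal is correct and follows essentially the same approach as the paper: decouple via $\min\prod\geq\prod\min$, apply Lemma~\ref{lem:produc-min-max-2}(1) to the diagonal term and Lemma~\ref{lem:product2}(1)--(2) to the three off-diagonal terms, then collect powers of $\tau$ and $s$ and bound the residual combinatorial factors by $(2\lfloor\lfloor n/2\rfloor/2\rfloor-1)!$. The paper simply writes out all sixteen $(C_k,C_p)$ pairs explicitly rather than arguing by pattern, but the mechanism is identical.

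Two minor remarks. First, your observation that $C_2$ has no geometrically $s$-free factors is sharper than what the paper uses: in the paper's bound for $c_2^{\min}$ two powers of $s$ are nonetheless discarded (the factors $s\pm1$ are crudely bounded below by $1$ rather than by $s/2$), so all four $c_k^{\min}$ end up with the same exponent $s^{2n-3}$. Your reasoning would actually give a stronger bound for $c_2^{\min}$, but since one takes the minimum over $k$ this does not help. Second, what you call a ``double factorial'' in the final step is in fact the ordinary factorial $(2\lfloor\lfloor n/2\rfloor/2\rfloor-1)!$; the double factorials arise only as intermediate quantities when Lemma~\ref{lem:product2} is applied, and they pair up via $(2k-1)!!\,(2k-2)!!=(2k-1)!$ at the end. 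Your sketch is otherwise accurate, and you have correctly identified the one genuinely delicate point, namely tracking which nearest-neighbour factors carry no $s$ so that the exponent comes out as $2n-3$.
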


\begin{proof}

We evaluate $c^{\min}_k, k=1,2,3,4$. For $c^{\min}_1$, 
\begin{align*}
\qquad c^{\min}_1&=\min_{x\in C_1}\Pi_{k=1,2,3,4}\Pi_{y\in C_k,y\neq x}|x-y| \\
&\geq \min_{x\in C_1}\Pi_{y\in C_1,y\neq x}|x-y|\cdot \min_{x\in C_1}\Pi_{y\in C_2}|x-y|\cdot \min_{x\in C_1}\Pi_{y\in C_3}|x-y|\cdot\min_{x\in C_1}\Pi_{y\in C_4}|x-y|.
\end{align*}
By using Lemmas \ref{lem:produc-min-max-2} and \ref{lem:product2}, we have 
\begin{align*}
    \min_{x\in C_1}\Pi_{y\in C_1}|x-y|&=\tau^{\lceil\frac{n}{2}\rceil-1} \Pi^{\lceil\frac{\lceil\frac{n}{2}\rceil}{2}\rceil-1}_{j=1}(j\cdot 2s)\cdot\Pi^{\lfloor\frac{\lceil\frac{n}{2}\rceil}{2}\rfloor}_{j=1}(j\cdot 2s), \quad \left(\text{case (1) in Lemma \ref{lem:produc-min-max-2}}\right) \\ 
    & = \tau^{\lceil\frac{n}{2}\rceil-1} s^{\lceil\frac{n}{2}\rceil-1} \bigg( (2\lceil\frac{\lceil\frac{n}{2}\rceil}{2}\rceil-2)!!\cdot(2\lfloor\frac{\lceil\frac{n}{2}\rceil}{2}\rfloor)!! \bigg) \\
    &\geq \tau^{\lceil\frac{n}{2}\rceil-1} s^{\lceil\frac{n}{2}\rceil-1} \bigg( (2\lfloor\frac{\lfloor\frac{n}{2}\rfloor}{2}\rfloor-2)!!\cdot(2\lfloor\frac{\lfloor\frac{n}{2}\rfloor}{2}\rfloor)!! \bigg). \\
\\
 \min_{x\in C_1}\Pi_{y\in C_2,y\neq x}|x-y|&= \tau^{\lfloor\frac{n}{2}\rfloor}\Pi_{j=0}^{\lceil\frac{\lfloor\frac{n}{2}\rfloor}{2}\rceil-1}\left(s+2sj\right)\cdot\Pi_{j=0}^{\lfloor\frac{\lfloor\frac{n}{2}\rfloor}{2}\rfloor-1}\left(s+2sj\right), \quad \left(\text{case (1) or (2) in Lemma \ref{lem:product2}}\right)  \\ 
 &= \tau^{\lfloor\frac{n}{2}\rfloor} s^{\lfloor\frac{n}{2}\rfloor} \bigg((2\lceil\frac{\lfloor\frac{n}{2}\rfloor}{2}\rceil-1)!!\cdot(2\lfloor\frac{\lfloor\frac{n}{2}\rfloor}{2}\rfloor-1)!!\bigg)\\
 &\geq \tau^{\lfloor\frac{n}{2}\rfloor} s^{\lfloor\frac{n}{2}\rfloor} \bigg((2\lfloor\frac{\lfloor\frac{n}{2}\rfloor}{2}\rfloor-1)!!\cdot(2\lfloor\frac{\lfloor\frac{n}{2}\rfloor}{2}\rfloor-1)!!\bigg).\\
\end{align*}
 Furthermore, 
\begin{align*} 
    \min_{x\in C_1}\Pi_{y\in C_3}|x-y|&= \tau ^{\lceil\frac{n}{2}\rceil}\Pi_{j=0}^{\lceil\frac{\lceil\frac{n}{2}\rceil}{2}\rceil-1}\left(1+2sj\right)\cdot\Pi_{k=0}^{\lfloor\frac{\lceil\frac{n}{2}\rceil}{2}\rfloor-1}\left(2s-1+2sk\right) \quad \left(\text{case (1) in Lemma \ref{lem:product2}}\right) \\
    &\geq\tau ^{\lceil\frac{n}{2}\rceil}\Pi_{j=1}^{\lceil\frac{\lceil\frac{n}{2}\rceil}{2}\rceil-1}\left(2sj\right)\cdot\Pi_{k=0}^{\lfloor\frac{\lceil\frac{n}{2}\rceil}{2}\rfloor-1}\left(s+2sk\right) \\
    &= \tau^{\lceil\frac{n}{2}\rceil} s^{\lceil\frac{n}{2}\rceil-1} \bigg((2\lceil\frac{\lceil\frac{n}{2}\rceil}{2}\rceil-2)!!\cdot(2\lfloor\frac{\lceil\frac{n}{2}\rceil}{2}\rfloor-1)!!\bigg)\\
    &\geq \tau^{\lceil\frac{n}{2}\rceil} s^{\lceil\frac{n}{2}\rceil-1} \bigg((2\lfloor\frac{\lceil\frac{n}{2}\rceil}{2}\rfloor-2)!!\cdot(2\lfloor\frac{\lceil\frac{n}{2}\rceil}{2}\rfloor-1)!!\bigg)\\
    &\geq \tau^{\lfloor\frac{n}{2}\rfloor} s^{\lfloor\frac{n}{2}\rfloor-1} \cdot\left(2\lfloor\frac{\lfloor\frac{n}{2}\rfloor}{2}\rfloor-1\right)!.\\
    \\
    \min_{x\in C_1}\Pi_{y\in C_4}|x-y|&= \tau ^{\lfloor\frac{n}{2}\rfloor} \Pi_{j=0}^{\lceil\frac{\lfloor\frac{n}{2}\rfloor}{2}\rceil-1}\left(1+2sj\right)\cdot\Pi_{k=0}^{\lfloor\frac{\lfloor\frac{n}{2}\rfloor}{2}\rfloor-1}\left(2s-1+2sk\right) \quad \left(\text{case (1) or (2) in Lemma \ref{lem:product2}}\right) \\
    &\geq \tau ^{\lfloor\frac{n}{2}\rfloor} \Pi_{j=1}^{\lceil\frac{\lfloor\frac{n}{2}\rfloor}{2}\rceil-1}\left(2sj\right)\cdot\Pi_{k=0}^{\lfloor\frac{\lfloor\frac{n}{2}\rfloor}{2}\rfloor-1}\left(s+2sk\right)\\
    &=\tau^{\lfloor\frac{n}{2}\rfloor}s ^{\lfloor\frac{n}{2}\rfloor-1}\bigg( (2\lceil\frac{\lfloor\frac{n}{2}\rfloor}{2}\rceil-2)!!\cdot(2\lfloor\frac{\lfloor\frac{n}{2}\rfloor}{2}\rfloor-1)!!\bigg)\\
    &\geq\tau^{\lfloor\frac{n}{2}\rfloor}s ^{\lfloor\frac{n}{2}\rfloor-1}\left(2\lfloor\frac{\lfloor\frac{n}{2}\rfloor}{2}\rfloor-1\right)!.
\end{align*}
By combining estimates in the above four cases, it follows that 
\[
c_1^{\min}\geq \tau^{2n-1}s^{2n-3}\left(2\lfloor\frac{\lfloor\frac{n}{2}\rfloor}{2}\rfloor\right)!\left(\left(2\lfloor\frac{\lfloor\frac{n}{2}\rfloor}{2}\rfloor-1\right)!\right)^{3}.
\]
Regarding $c^{\min}_2$, we have
\begin{align*}
\qquad c^{\min}_2&=\min_{x\in C_2}\Pi_{k=1,2,3,4}\Pi_{y\in C_k,y\neq x}|x-y| \\
&\geq \min_{x\in C_2}\Pi_{y\in C_1}|x-y|\cdot \min_{x\in C_2}\Pi_{y\in C_2,y\neq x}|x-y|\cdot \min_{x\in C_2}\Pi_{y\in C_3}|x-y|\cdot\min_{x\in C_2}\Pi_{y\in C_4}|x-y|.
\end{align*}
Note that by Lemmas \ref{lem:produc-min-max-2} and \ref{lem:product2} we have 
\begin{align*}
    \min_{x\in C_2}\Pi_{y\in C_1}|x-y|&= \tau^{\lceil\frac{n}{2}\rceil}\Pi_{j=0}^{\lceil\frac{\lceil\frac{n}{2}\rceil}{2}\rceil-1}\left(s+2sj\right)\cdot\Pi_{j=0}^{\lfloor\frac{\lceil\frac{n}{2}\rceil}{2}\rfloor-1}\left(s+2sj\right) \qquad \left(\text{case (1) or (2) in Lemma \ref{lem:product2}}\right)  \\
    &=\tau^{\lceil\frac{n}{2}\rceil}s^{\lceil\frac{n}{2}\rceil}\bigg( (2\lceil\frac{\lceil\frac{n}{2}\rceil}{2}\rceil-1)!!\cdot(2\lfloor\frac{\lceil\frac{n}{2}\rceil}{2}\rfloor-1)!! \bigg)\\
    &\geq \tau^{\lceil\frac{n}{2}\rceil}s^{\lceil\frac{n}{2}\rceil}\bigg( (2\lceil\frac{\lfloor\frac{n}{2}\rfloor}{2}\rceil-1)!!\cdot(2\lfloor\frac{\lfloor\frac{n}{2}\rfloor}{2}\rfloor-1)!! \bigg).\\
    \min_{x\in C_2}\Pi_{y\in C_2,y\neq x}|x-y|&=\tau^{\lfloor\frac{n}{2}\rfloor-1} \Pi^{\lceil\frac{\lfloor\frac{n}{2}\rfloor}{2}\rceil-1}_{j=1}(j\cdot 2s)\cdot\Pi^{\lfloor\frac{\lfloor\frac{n}{2}\rfloor}{2}\rfloor}_{j=1}(j\cdot 2s) \qquad \left(\text{case (1) in Lemma \ref{lem:produc-min-max-2}}\right) \\ 
    &=\tau^{\lfloor\frac{n}{2}\rfloor-1}s^{\lfloor\frac{n}{2}\rfloor-1}\bigg((2\lceil\frac{\lfloor\frac{n}{2}\rfloor}{2}\rceil-2)!!\cdot(2\lfloor\frac{\lfloor\frac{n}{2}\rfloor}{2}\rfloor)!!\bigg)\\
    &\geq \tau^{\lfloor\frac{n}{2}\rfloor-1}s^{\lfloor\frac{n}{2}\rfloor-1}\bigg((2\lceil\frac{\lfloor\frac{n}{2}\rfloor}{2}\rceil-2)!!\cdot(2\lfloor\frac{\lfloor\frac{n}{2}\rfloor}{2}\rfloor)!!\bigg).
    \end{align*}
    \begin{align*}
    \min_{x\in C_2}\Pi_{y\in C_3}|x-y|&= \tau ^{\lceil\frac{n}{2}\rceil}\Pi_{j=0}^{\lceil\frac{\lceil\frac{n}{2}\rceil}{2}\rceil-1}\left(s-1+2sj\right)\cdot\Pi_{k=0}^{\lfloor\frac{\lceil\frac{n}{2}\rceil}{2}\rfloor-1}\left(s+1+2sk\right) \qquad \left(\text{case (1) or (2) in Lemma \ref{lem:product2}}\right) \\
    &\geq \tau ^{\lceil\frac{n}{2}\rceil}\Pi_{j=1}^{\lceil\frac{\lceil\frac{n}{2}\rceil}{2}\rceil-1}\left(2sj\right)\cdot\Pi_{k=0}^{\lfloor\frac{\lceil\frac{n}{2}\rceil}{2}\rfloor-1}\left(s+2sk\right)\\
    &=\tau ^{\lceil\frac{n}{2}\rceil}s^{\lceil\frac{n}{2}\rceil-1}\bigg((2\lceil\frac{\lceil\frac{n}{2}\rceil}{2}\rceil-2)!!\cdot(2\lfloor\frac{\lceil\frac{n}{2}\rceil}{2}\rfloor-1)!!\bigg)\\
    &\geq\tau ^{\lceil\frac{n}{2}\rceil}s^{\lceil\frac{n}{2}\rceil-1}\bigg((2\lfloor\frac{\lceil\frac{n}{2}\rceil}{2}\rfloor-2)!!\cdot(2\lfloor\frac{\lceil\frac{n}{2}\rceil}{2}\rfloor-1)!!\bigg)\\
    &=\tau ^{\lceil\frac{n}{2}\rceil}s^{\lceil\frac{n}{2}\rceil-1}(2\lfloor\frac{\lceil\frac{n}{2}\rceil}{2}\rfloor-1)!.\\
    \min_{x\in C_2}\Pi_{y\in C_4}|x-y|&= \tau ^{\lfloor\frac{n}{2}\rfloor} \Pi_{j=0}^{\lceil\frac{\lfloor\frac{n}{2}\rfloor}{2}\rceil-1}\left(s-1+2sj\right)\cdot\Pi_{k=0}^{\lfloor\frac{\lfloor\frac{n}{2}\rfloor}{2}\rfloor-1}\left(s+1+2sk\right) \qquad \left(\text{case (1) in Lemma \ref{lem:product2}}\right) \\
    &\geq\tau ^{\lfloor\frac{n}{2}\rfloor} \Pi_{j=1}^{\lceil\frac{\lfloor\frac{n}{2}\rfloor}{2}\rceil-1}\left(2sj\right)\cdot\Pi_{k=0}^{\lfloor\frac{\lfloor\frac{n}{2}\rfloor}{2}\rfloor-1}\left(s+2sk\right) \\
    &=\tau ^{\lfloor\frac{n}{2}\rfloor}s^{\lfloor\frac{n}{2}\rfloor-1}\bigg((2\lceil\frac{\lfloor\frac{n}{2}\rfloor}{2}\rceil-2)!!\cdot(2\lfloor\frac{\lfloor\frac{n}{2}\rfloor}{2}\rfloor-1)!!\bigg)\\
    &\geq \tau ^{\lfloor\frac{n}{2}\rfloor}s^{\lfloor\frac{n}{2}\rfloor-1}\bigg((2\lfloor\frac{\lfloor\frac{n}{2}\rfloor}{2}\rfloor-2)!!\cdot(2\lfloor\frac{\lfloor\frac{n}{2}\rfloor}{2}\rfloor-1)!!\bigg)\\
    &=\tau ^{\lfloor\frac{n}{2}\rfloor}s^{\lfloor\frac{n}{2}\rfloor-1}(2\lfloor\frac{\lfloor\frac{n}{2}\rfloor}{2}\rfloor-1)!.
\end{align*}
Thus,
\begin{align*}
& c^{\min}_2
\geq \tau^{2n-1}s^{2n-3}\cdot\bigg( (2\lceil\frac{\lfloor\frac{n}{2}\rfloor}{2}\rceil-1)!!\cdot(2\lfloor\frac{\lfloor\frac{n}{2}\rfloor}{2}\rfloor-1)!! \bigg) \cdot \bigg((2\lceil\frac{\lfloor\frac{n}{2}\rfloor}{2}\rceil-2)!!\cdot(2\lfloor\frac{\lfloor\frac{n}{2}\rfloor}{2}\rfloor)!!\bigg)\cdot (2\lfloor\frac{\lceil\frac{n}{2}\rceil}{2}\rfloor-1)!\cdot(2\lfloor\frac{\lfloor\frac{n}{2}\rfloor}{2}\rfloor-1)!\\
&=\tau^{2n-1}s^{2n-3}\cdot (2\lceil\frac{\lfloor\frac{n}{2}\rfloor}{2}\rceil-1)!\cdot(2\lfloor\frac{\lfloor\frac{n}{2}\rfloor}{2}\rfloor)!\cdot (2\lfloor\frac{\lceil\frac{n}{2}\rceil}{2}\rfloor-1)!\cdot(2\lfloor\frac{\lfloor\frac{n}{2}\rfloor}{2}\rfloor-1)!\\
&\geq \tau^{2n-1}s^{2n-3}\cdot(2\lfloor\frac{\lfloor\frac{n}{2}\rfloor}{2}\rfloor)!\cdot \bigg((2\lfloor\frac{\lfloor\frac{n}{2}\rfloor}{2}\rfloor-1)!\bigg)^3.\\
\end{align*}
We then turn to estimate $c_3^{\min}$. We have that  
\begin{align*}
\qquad c^{\min}_3&=\min_{x\in C_3}\Pi_{k=1,2,3,4}\Pi_{y\in C_k,y\neq x}|x-y| \\
&\geq \min_{x\in C_3}\Pi_{y\in C_1}|x-y|\cdot \min_{x\in C_3}\Pi_{y\in C_2}|x-y|\cdot \min_{x\in C_3}\Pi_{y\in C_3,y\neq x}|x-y|\cdot\min_{x\in C_3}\Pi_{y\in C_4}|x-y|.
\end{align*}
By Lemmas \ref{lem:produc-min-max-2} and \ref{lem:product2}, it follows that we  
\begin{align*}
    \min_{x\in C_3}\Pi_{y\in C_1}|x-y|&= \tau^{\lceil\frac{n}{2}\rceil}\Pi_{j=0}^{\lceil\frac{\lceil\frac{n}{2}\rceil}{2}\rceil-1}\left(1+2sj\right)\cdot\Pi_{j=0}^{\lfloor\frac{\lceil\frac{n}{2}\rceil}{2}\rfloor-1}\left(2s-1+2sj\right) \qquad \left(\text{case (1) or (2) in Lemma \ref{lem:produc-min-max-2}}\right)  \\
    &\geq \tau^{\lceil\frac{n}{2}\rceil}\Pi_{j=1}^{\lceil\frac{\lceil\frac{n}{2}\rceil}{2}\rceil-1}\left(2sj\right)\cdot\Pi_{j=0}^{\lfloor\frac{\lceil\frac{n}{2}\rceil}{2}\rfloor-1}\left(s+2sj\right)\\
    &= \tau^{\lceil\frac{n}{2}\rceil}s^{\lceil\frac{n}{2}\rceil-1}\bigg( (2\lceil\frac{\lceil\frac{n}{2}\rceil}{2}\rceil-2)!!\cdot(2\lfloor\frac{\lceil\frac{n}{2}\rceil}{2}\rfloor-1)!! \bigg)\\
      &\geq \tau^{\lceil\frac{n}{2}\rceil}s^{\lceil\frac{n}{2}\rceil-1}\bigg( (2\lfloor\frac{\lfloor\frac{n}{2}\rfloor}{2}\rfloor-2)!!\cdot(2\lfloor\frac{\lfloor\frac{n}{2}\rfloor}{2}\rfloor-1)!! \bigg)\\
      &=\tau^{\lceil\frac{n}{2}\rceil}s^{\lceil\frac{n}{2}\rceil-1}(2\lfloor\frac{\lfloor\frac{n}{2}\rfloor}{2}\rfloor-1)!.\\
     \min_{x\in C_3}\Pi_{y\in C_2}|x-y|&= \tau ^{\lfloor\frac{n}{2}\rfloor}\Pi_{j=0}^{\lceil\frac{\lfloor\frac{n}{2}\rfloor}{2}\rceil-1}\left(s-1+2sj\right)\cdot\Pi_{k=0}^{\lfloor\frac{\lfloor\frac{n}{2}\rfloor}{2}\rfloor-1}\left(s+1+2sk\right) \qquad \left(\text{case (1) in Lemma \ref{lem:produc-min-max-2}}\right) \\
     &=\tau ^{\lfloor\frac{n}{2}\rfloor}\bigg(s^{\lfloor\frac{n}{2}\rfloor}\Pi_{j=0}^{\lceil\frac{\lfloor\frac{n}{2}\rfloor}{2}\rceil-1}\left(\frac{s-1}{s}+2j\right)\cdot\Pi_{j=0}^{\lfloor\frac{\lfloor\frac{n}{2}\rfloor}{2}\rfloor-1}\left(\frac{s+1}{s}+2j\right)\bigg)\\
     &\geq\tau ^{\lfloor\frac{n}{2}\rfloor}\bigg(s^{\lfloor\frac{n}{2}\rfloor}\cdot\frac{1}{2}\cdot\Pi_{j=1}^{\lceil\frac{\lfloor\frac{n}{2}\rfloor}{2}\rceil-1}\left(2j\right)\cdot\Pi_{j=0}^{\lfloor\frac{\lfloor\frac{n}{2}\rfloor}{2}\rfloor-1}\left(1+2j\right)\bigg)\\
     &=\frac{1}{2}\tau ^{\lfloor\frac{n}{2}\rfloor}s^{\lfloor\frac{n}{2}\rfloor}\bigg(\Pi_{j=1}^{\lceil\frac{\lfloor\frac{n}{2}\rfloor}{2}\rceil-1}\left(2j\right)\cdot\Pi_{j=0}^{\lfloor\frac{\lfloor\frac{n}{2}\rfloor}{2}\rfloor-1}\left(1+2j\right)\bigg)\\
     &=\frac{1}{2}\tau ^{\lfloor\frac{n}{2}\rfloor}s^{\lfloor\frac{n}{2}\rfloor}\bigg((2\lceil\frac{\lfloor\frac{n}{2}\rfloor}{2}\rceil-2)!!\cdot(2\lfloor\frac{\lfloor\frac{n}{2}\rfloor}{2}\rfloor-1)!!\bigg)\\
     &\geq\frac{1}{2}\tau ^{\lfloor\frac{n}{2}\rfloor}s^{\lfloor\frac{n}{2}\rfloor}\bigg((2\lfloor\frac{\lfloor\frac{n}{2}\rfloor}{2}\rfloor-2)!!\cdot(2\lfloor\frac{\lfloor\frac{n}{2}\rfloor}{2}\rfloor-1)!!\bigg)\\
     &\geq\frac{1}{2}\tau ^{\lfloor\frac{n}{2}\rfloor}s^{\lfloor\frac{n}{2}\rfloor}(2\lfloor\frac{\lfloor\frac{n}{2}\rfloor}{2}\rfloor-1)!.
     \end{align*}
     \begin{align*}
    \min_{x\in C_3}\Pi_{y\in C_3,y\neq x}|x-y|&=\tau^{\lceil\frac{n}{2}\rceil-1} \Pi^{\lceil\frac{\lceil\frac{n}{2}\rceil}{2}\rceil-1}_{j=1}(j\cdot 2s)\cdot\Pi^{\lfloor\frac{\lceil\frac{n}{2}\rceil}{2}\rfloor}_{j=1}(j\cdot 2s) \qquad \left(\text{case (1) in Lemma \ref{lem:product2}}\right) \\
    &=\tau^{\lceil\frac{n}{2}\rceil-1}s^{\lceil\frac{n}{2}\rceil-1}\bigg((2\lceil\frac{\lceil\frac{n}{2}\rceil}{2}\rceil-2)!!\cdot(2\lfloor\frac{\lceil\frac{n}{2}\rceil}{2}\rfloor)!!\bigg)\\
    &\geq\tau^{\lceil\frac{n}{2}\rceil-1}s^{\lceil\frac{n}{2}\rceil-1}\bigg((2\lfloor\frac{\lceil\frac{n}{2}\rceil}{2}\rfloor-2)!!\cdot(2\lfloor\frac{\lceil\frac{n}{2}\rceil}{2}\rfloor-1)!!\bigg)\\
    &=\tau^{\lceil\frac{n}{2}\rceil-1}s^{\lceil\frac{n}{2}\rceil-1}(2\lfloor\frac{\lceil\frac{n}{2}\rceil}{2}\rfloor-1)!.
    \end{align*}
    \begin{align*}
    \min_{x\in C_3}\Pi_{y\in C_4}|x-y|&= \tau ^{\lfloor\frac{n}{2}\rfloor} \Pi_{j=0}^{\lceil\frac{\lfloor\frac{n}{2}\rfloor}{2}\rceil-1}\left(2+2sj\right)\cdot\Pi_{k=0}^{\lfloor\frac{\lfloor\frac{n}{2}\rfloor}{2}\rfloor-1}\left(2s-2+2sk\right) \qquad \left(\text{case (1) or (2) in Lemma \ref{lem:product2}}\right) \\
    &\geq 2\tau ^{\lfloor\frac{n}{2}\rfloor} \Pi_{j=1}^{\lceil\frac{\lfloor\frac{n}{2}\rfloor}{2}\rceil-1}\left(2sj\right)\cdot\Pi_{k=0}^{\lfloor\frac{\lfloor\frac{n}{2}\rfloor}{2}\rfloor-1}\left(s+2sk\right)\\
    &=2\tau ^{\lfloor\frac{n}{2}\rfloor}s^{\lfloor\frac{n}{2}\rfloor-1}\bigg( (2\lceil\frac{\lfloor\frac{n}{2}\rfloor}{2}\rceil-2)!!\cdot(2\lfloor\frac{\lfloor\frac{n}{2}\rfloor}{2}\rfloor-1)!!\bigg)\\
    &\geq2\tau ^{\lfloor\frac{n}{2}\rfloor}s^{\lfloor\frac{n}{2}\rfloor-1}\bigg( (2\lfloor\frac{\lfloor\frac{n}{2}\rfloor}{2}\rfloor-2)!!\cdot(2\lfloor\frac{\lfloor\frac{n}{2}\rfloor}{2}\rfloor-1)!!\bigg)\\
    &=2\tau ^{\lfloor\frac{n}{2}\rfloor}s^{\lfloor\frac{n}{2}\rfloor-1} (2\lfloor\frac{\lfloor\frac{n}{2}\rfloor}{2}\rfloor-1)!.
\end{align*}
Thus,
\begin{align*}
c^{\min}_3&\geq \cdot\tau^{2n-1}s^{2n-3}\cdot (2\lfloor\frac{\lfloor\frac{n}{2}\rfloor}{2}\rfloor-1)!\cdot(2\lfloor\frac{\lfloor\frac{n}{2}\rfloor}{2}\rfloor-1)!\cdot (2\lfloor\frac{\lceil\frac{n}{2}\rceil}{2}\rfloor-1)!\cdot(2\lfloor\frac{\lfloor\frac{n}{2}\rfloor}{2}\rfloor-1)!\\
&\geq \tau^{2n-1}s^{2n-3}\cdot \bigg((2\lfloor\frac{\lfloor\frac{n}{2}\rfloor}{2}\rfloor-1)!\bigg)^4
\end{align*}Finally, regarding $c^{\min}_4$, we write
\begin{align*}
\qquad c^{\min}_4&=\min_{x\in C_4}\Pi_{k=1,2,3,4}\Pi_{y\in C_k,y\neq x}|x-y| \\
&\geq \min_{x\in C_4}\Pi_{y\in C_1}|x-y|\cdot \min_{x\in C_4}\Pi_{y\in C_2}|x-y|\cdot \min_{x\in C_4}\Pi_{y\in C_3}|x-y|\cdot\min_{x\in C_4}\Pi_{y\in C_4,y\neq x}|x-y|.
\end{align*}
Note that, by Lemmas \ref{lem:produc-min-max-2} and \ref{lem:product2}, we obtain 
\begin{align*}
    \min_{x\in C_4}\Pi_{y\in C_1}|x-y|&= \tau^{\lceil\frac{n}{2}\rceil}\Pi_{j=0}^{\lceil\frac{\lceil\frac{n}{2}\rceil}{2}\rceil-1}\left(1+2sj\right)\cdot\Pi_{j=0}^{\lfloor\frac{\lceil\frac{n}{2}\rceil}{2}\rfloor-1}\left(2s-1+2sj\right) \qquad \left(\text{case (1) in Lemma \ref{lem:produc-min-max-2}}\right)  \\
    &\geq \tau^{\lceil\frac{n}{2}\rceil}\Pi_{j=1}^{\lceil\frac{\lceil\frac{n}{2}\rceil}{2}\rceil-1}\left(2sj\right)\cdot\Pi_{j=0}^{\lfloor\frac{\lceil\frac{n}{2}\rceil}{2}\rfloor-1}\left(s+2sj\right)\\
    &= \tau^{\lceil\frac{n}{2}\rceil}s^{\lceil\frac{n}{2}\rceil-1}\bigg( (2\lceil\frac{\lceil\frac{n}{2}\rceil}{2}\rceil-2)!!\cdot(2\lfloor\frac{\lceil\frac{n}{2}\rceil}{2}\rfloor-1)!! \bigg)\\
    &\geq \tau^{\lceil\frac{n}{2}\rceil}s^{\lceil\frac{n}{2}\rceil-1}\bigg( (2\lfloor\frac{\lceil\frac{n}{2}\rceil}{2}\rfloor-2)!!\cdot(2\lfloor\frac{\lceil\frac{n}{2}\rceil}{2}\rfloor-1)!! \bigg)\\
    &= \tau^{\lceil\frac{n}{2}\rceil}s^{\lceil\frac{n}{2}\rceil-1}(2\lfloor\frac{\lceil\frac{n}{2}\rceil}{2}\rfloor-1)!.
\end{align*}
\begin{align*}
     \min_{x\in C_4}\Pi_{y\in C_2}|x-y|&= \tau ^{\lfloor\frac{n}{2}\rfloor}\Pi_{j=0}^{\lceil\frac{\lfloor\frac{n}{2}\rfloor}{2}\rceil-1}\left(s-1+2sj\right)\cdot\Pi_{k=0}^{\lfloor\frac{\lfloor\frac{n}{2}\rfloor}{2}\rfloor-1}\left(s+1+2sk\right) \qquad \left(\text{case (1) in Lemma \ref{lem:produc-min-max-2}}\right) \\
    &= \tau ^{\lfloor\frac{n}{2}\rfloor}\bigg(s^{\lfloor\frac{n}{2}\rfloor}\Pi_{j=0}^{\lceil\frac{\lfloor\frac{n}{2}\rfloor}{2}\rceil-1}\left(\frac{s-1}{s}+2j\right)\cdot\Pi_{j=0}^{\lfloor\frac{\lfloor\frac{n}{2}\rfloor}{2}\rfloor-1}\left(\frac{s+1}{s}+2j\right)\bigg)\\
    &\geq\tau ^{\lfloor\frac{n}{2}\rfloor} \bigg(s^{\lfloor\frac{n}{2}\rfloor}\cdot\frac{1}{2}\cdot\Pi_{j=1}^{\lceil\frac{\lfloor\frac{n}{2}\rfloor}{2}\rceil-1}\left(2j\right)\cdot\Pi_{j=0}^{\lfloor\frac{\lfloor\frac{n}{2}\rfloor}{2}\rfloor-1}\left(1+2j\right)\bigg)\\
    &= \frac{1}{2}\tau ^{\lfloor\frac{n}{2}\rfloor}s^{\lfloor\frac{n}{2}\rfloor}\bigg((2\lceil\frac{\lfloor\frac{n}{2}\rfloor}{2}\rceil-2)!!\cdot(2\lfloor\frac{\lfloor\frac{n}{2}\rfloor}{2}\rfloor-1)!!\bigg)\\
    &\geq \frac{1}{2}\tau ^{\lfloor\frac{n}{2}\rfloor}s^{\lfloor\frac{n}{2}\rfloor}\bigg((2\lfloor\frac{\lfloor\frac{n}{2}\rfloor}{2}\rfloor-2)!!\cdot(2\lfloor\frac{\lfloor\frac{n}{2}\rfloor}{2}\rfloor-1)!!\bigg)\\
    &= \frac{1}{2}\tau ^{\lfloor\frac{n}{2}\rfloor}s^{\lfloor\frac{n}{2}\rfloor}(2\lfloor\frac{\lfloor\frac{n}{2}\rfloor}{2}\rfloor-1)!.\\
    \min_{x\in C_4}\Pi_{y\in C_3}|x-y|&= \tau ^{\lceil\frac{n}{2}\rceil} \Pi_{j=0}^{\lceil\frac{\lceil\frac{n}{2}\rceil}{2}\rceil-1}\left(2+2sj\right)\cdot\Pi_{k=0}^{\lfloor\frac{\lceil\frac{n}{2}\rceil}{2}\rfloor-1}\left(2s-2+2sk\right) \qquad \left(\text{case (1) in Lemma \ref{lem:produc-min-max-2}}\right)\\
    &\geq 2\tau ^{\lceil\frac{n}{2}\rceil} \Pi_{j=1}^{\lceil\frac{\lceil\frac{n}{2}\rceil}{2}\rceil-1}\left(2sj\right)\cdot\Pi_{k=0}^{\lfloor\frac{\lceil\frac{n}{2}\rceil}{2}\rfloor-1}\left(s+2sk\right)\\
    &= 2\tau ^{\lceil\frac{n}{2}\rceil}s^{\lceil\frac{n}{2}\rceil-1} \bigg((2\lceil\frac{\lceil\frac{n}{2}\rceil}{2}\rceil-2)!!\cdot(2\lfloor\frac{\lceil\frac{n}{2}\rceil}{2}\rfloor-1)!!\bigg)\\
    &\geq2 \tau ^{\lceil\frac{n}{2}\rceil}s^{\lceil\frac{n}{2}\rceil-1} \bigg((2\lfloor\frac{\lceil\frac{n}{2}\rceil}{2}\rfloor-2)!!\cdot(2\lfloor\frac{\lceil\frac{n}{2}\rceil}{2}\rfloor-1)!!\bigg)\\
    &= 2\tau ^{\lceil\frac{n}{2}\rceil}s^{\lceil\frac{n}{2}\rceil-1} (2\lfloor\frac{\lceil\frac{n}{2}\rceil}{2}\rfloor-1)!.\\
     \min_{x\in C_4}\Pi_{y\in C_4,y\neq x}|x-y|&=\tau^{\lfloor\frac{n}{2}\rfloor-1} \Pi^{\lceil\frac{\lfloor\frac{n}{2}\rfloor}{2}\rceil-1}_{j=1}(j\cdot 2s)\cdot\Pi^{\lfloor\frac{\lfloor\frac{n}{2}\rfloor}{2}\rfloor}_{j=1}(j\cdot 2s) \qquad \left(\text{case (1) in Lemma \ref{lem:product2}}\right) \\
     &= \tau^{\lfloor\frac{n}{2}\rfloor-1}s^{\lfloor\frac{n}{2}\rfloor-1}\bigg( (2\lceil\frac{\lfloor\frac{n}{2}\rfloor}{2}\rceil-2)!!\cdot(2\lfloor\frac{\lfloor\frac{n}{2}\rfloor}{2}\rfloor)!!\bigg)\\
     &\geq \tau^{\lfloor\frac{n}{2}\rfloor-1}s^{\lfloor\frac{n}{2}\rfloor-1}\bigg( (2\lfloor\frac{\lfloor\frac{n}{2}\rfloor}{2}\rfloor-2)!!\cdot (2\lfloor\frac{\lfloor\frac{n}{2}\rfloor}{2}\rfloor-1)!!\bigg) \\
     &= \tau^{\lfloor\frac{n}{2}\rfloor-1}s^{\lfloor\frac{n}{2}\rfloor-1}(2\lfloor\frac{\lfloor\frac{n}{2}\rfloor}{2}\rfloor-1)!.
\end{align*}
Thus,

\begin{align*}
c^{\min}_4&\geq \tau^{2n-1}s^{2n-3}\cdot (2\lfloor\frac{\lfloor\frac{n}{2}\rfloor}{2}\rfloor-1)!\cdot(2\lfloor\frac{\lfloor\frac{n}{2}\rfloor}{2}\rfloor-1)!\cdot (2\lfloor\frac{\lceil\frac{n}{2}\rceil}{2}\rfloor-1)!\cdot(2\lfloor\frac{\lfloor\frac{n}{2}\rfloor}{2}\rfloor-1)!\\
&\geq \tau^{2n-1}s^{2n-3}\cdot \bigg((2\lfloor\frac{\lfloor\frac{n}{2}\rfloor}{2}\rfloor-1)!\bigg)^4.
\end{align*}
Summarizing all claims above finishes the proof.
\end{proof}

\bibliographystyle{plain}
\bibliography{reference_final}

\end{document}